\theoremstyle{plain}
\newtheorem{assumption}{\protect\assumptionname}
\theoremstyle{definition}
\newtheorem{defn}{\protect\definitionname}
\theoremstyle{plain}
\newtheorem{lem}{\protect\lemmaname}
\theoremstyle{plain}
\newtheorem{thm}{\protect\theoremname}
\theoremstyle{plain}
\newtheorem{prop}{\protect\propositionname}
\theoremstyle{plain}
\newtheorem{property}{\protect\propertyname}
\theoremstyle{plain}
\newtheorem{cor}{\protect\corollaryname}
\theoremstyle{definition}
 \newtheorem{example}{\protect\examplename}
\DeclareMathOperator*{\argmax}{arg\,max}
\DeclareMathOperator*{\argmin}{arg\,min}
\theoremstyle{definition}
\newtheorem{examplecont}{Example}
\newenvironment{excont}[1]
  {%
   \begin{examplecont}}
  {\end{examplecont}}
\providecommand{\assumptionname}{Assumption}
\providecommand{\propertyname}{Property}
\providecommand{\corollaryname}{Corollary}
\providecommand{\definitionname}{Definition}
\providecommand{\propositionname}{Proposition}
\providecommand{\examplename}{Example}
\providecommand{\lemmaname}{Lemma}
\providecommand{\theoremname}{Theorem}
\DeclareMathOperator\supp{supp}
\newcommand*{\dprime}{{\prime\prime}\mkern0mu}
\begin{document}
\title{Robust Misspecified Models and Paradigm Shifts}
\author{Cuimin Ba\thanks{University of Pennsylvania. Email: cuiminba@sas.upenn.edu. I am deeply indebted to George Mailath, Aislinn Bohren, and Kevin He for their guidance and support at every stage of this paper. I thank Nageeb Ali, Ben Brooks, Armin Falk, Hanming Fang, Mira Frick, Drew Fudenberg, Yuan Gao, Alice Gindin,  Marina Halac, Takuma Habu, Daniel Hauser, Ju Hu, Yuhta Ishii, Nawaaz Khalfan, Botond K\"{o}szegi, Changhwa Lee, Jonathan Libgober, Xiao Lin, Ce Liu, Steven Matthews, Guillermo Ordo\~{n}ez, Pietro Ortoleva, Wolfgang Pesendorfer, Andrew Postlewaite, Alvaro Sandroni, Pedro Solti, Juuso Toikka, Marcus Tomaino, Rakesh Vohra, Xi Weng, Ece Yegane and conference audiences for helpful comments and suggestions.}} %
\date{March 23, 2023\medskip{} 
\\
{\small{}\href{https://cuiminba.github.io/Job-Market/Job\%20Market\%20Paper_Cuimin\%20Ba.pdf}{-- Click here to see the most recent version --}}}
\maketitle
\begin{abstract}\begin{singlespace}
Individuals use models to guide decisions, but many models are wrong. This paper studies which misspecified models are likely to persist when individuals also entertain alternative models. Consider an agent who uses her model to learn the relationship between action choices and outcomes. The agent exhibits sticky model switching, captured by a threshold rule such that she switches to an alternative model when it is a sufficiently better fit for the data she observes. The main result provides a characterization of whether a model persists based on two key features that are straightforward to derive from the primitives of the learning environment, namely, the model's asymptotic accuracy in predicting the equilibrium pattern of observed outcomes and the `tightness' of the prior around this equilibrium. I show that misspecified models can be robust in that they persist against a wide range of competing models---including the correct model---despite individuals observing an infinite amount of data. Moreover, simple misspecified models with entrenched priors can be even more robust than correctly specified models. I use this characterization to provide a learning foundation for the persistence of systemic biases in two applications. First, in an effort-choice problem, I show that overconfidence in one's ability is more robust than underconfidence. Second, a simplistic binary view of politics is more robust than the more complex correct view when individuals consume media without fully recognizing the reporting bias. 
\end{singlespace}

\end{abstract}

\thispagestyle{empty}



\newpage

\section{\label{sec:Introduction}Introduction}

\setcounter{page}{1}

People use models to guide decision making, but the subjective nature of models suggests that model misspecification can be pervasive. Model misspecification can stem from the need to simplify the complex world as well as from behavioral biases such as overconfidence or correlation neglect. To explore how misspecified models impact beliefs and actions, the growing literature on misspecified learning focuses on the case of a dogmatic agent who uses a particular misspecified model and never considers changing this model.\footnote{\label{footnote:literature}Examples include: a monopolist trying to estimate the slope of the demand function when the true slope lies outside of the support of his prior \citep{nyarko1991learning,fudenberg2017active}; agents learning from private signals and other individuals' actions while neglecting the correlation between the observed actions \citep{eyster2010naive,ortoleva2015overconfidence,bohren2016informational} or overestimating how similar others' preferences are to their own \citep{gagnon2017taste}; overconfident agents falsely attributing low outcomes to an adverse environment \citep{heidhues2018unrealistic,heidhues2019overconfidence,bagindin2021overconfidence}; a decision maker imposing false causal interpretations on observed correlations \citep{spiegler2016bayesian,spiegler2019behavioral,spiegler2020can}; a gambler who flips a fair coin mistakenly believing that future tosses must exhibit systematic reversal \citep{rabin2010gambler,he2022mislearning}; individuals narrowly focusing their attention on only a few aspects rather than a complete state space \citep{mailath2019wisdom}.} While this simplifies the environment in a way that yields tractable characterizations of long-run beliefs, it leaves open the question of whether it is realistic to expect an agent to never abandon a wrong model. 

A plethora of evidence suggests that people often switch models when an alternative seems more compelling. For example, scientists adopt a new paradigm if it fits the observable data significantly better in terms of accuracy and simplicity (i.e., \citeauthor{kuhn1962}\textcolor{purple}{'s (}\citeyear{kuhn1962}\textcolor{purple}{)} theory of paradigm shifts). One classic example is the paradigm shift from the Ptolemaic model to the Copernican model in astronomy. 
Likewise, economists develop and adopt new models when evidence comes to light that important economic forces are missing from old models. Outside the realm of science, people alter their subjective assumptions about the world in daily life, such as changing thinking patterns in cognitive behavioral therapy or striving to overcome implicit biases through introspection \citep{ wegener1997flexible,di2015learning}. They are also influenced by and attracted to different political narratives as they receive more information \citep{fisher1985narrative,braungart1986life}.

If individuals consider switching to competing models, which (if any) misspecified models should we expect to persist and when? This paper proposes a novel learning framework to address this question. In this framework, an agent uses models to learn an unknown fixed data-generating process that governs the relationship between her action choices and random outcomes. Each model is a parametric theory of how actions affect the outcome distribution. Formally, this consists of a finite parameter space and a collection of possible data-generating processes indexed by the parameter values. 
For example, consider a monopolist who chooses production quantities based on a linear model of consumer demand. Here, for each pair of parameter values---fixing the slope and intercept of the demand curve---the model prescribes a mapping from production quantities to  distributions of consumer demand. While the \textit{dogmatic modeler} typically considered in the misspecified learning literature uses the same model throughout, in my framework the agent is a \textit{switcher} who entertains multiple models. For the main analysis, this agent starts with an initial model and entertains exactly one competing model. 
She subscribes to one model at any point in time to guide decisions but may switch to the other model depending on the history of outcome realizations.
Specifically, the agent has a prior over the parameters within each model, uses her current model to update her belief given the data she observes, and plays the optimal action given the associated posterior in the current model. To decide whether to switch to a competing model, the agent keeps track of the \emph{Bayes factor}---the likelihood ratio of the competing model relative to the current model given the observed data---and switches if it exceeds a fixed switching threshold. As the switching threshold increases, model switching requires more evidence and becomes stickier.  

One may wonder why the agent does not conduct Bayesian updating over the models and average out their predictions. As pointed out in \citeauthor{savage1972foundations}\textcolor{purple}{'s (}\citeyear{savage1972foundations}\textcolor{purple}{)} \textit{Foundations of Statistics}, Bayesianism is a reasonable description of human behavior only when decision makers focus on ``modest little worlds.''\footnote{\citet[p.~16]{savage1972foundations} describes it as ``utterly ridiculous'' to demand that ``one envisage every conceivable policy for the government of his whole life (at least from now on) in its most minute details, in the light of the vast number of unknown states of the world, and decide here and now on one policy.''} It is implausible to expect people to attach well-articulated probabilities to all models of the world given that models contain conceptually incoherent or even conflicting ideas, such as a geocentric model versus a heliocentric model, or a liberal worldview versus a conservative worldview. 
Therefore, this paper adopts a natural choice of non-Bayesian rule based on the Bayes factors. 

Within this framework, I next formalize notions of model persistence and robustness. A model is said to \emph{persist} against a competing model if, with positive probability, the agent eventually stops switching and sticks to this model forever. Further, a model is {robust} if it persists against a wide range of competing models. Critically, the scope of robustness may vary considerably with the set of admissible competing models in terms of their distance from the initial model (i.e. step size of switching) and the dimensions along which they differ from the initial model (i.e. direction of switching). 
I consider three different robustness notions with varying restrictions on the step size and the direction of switching. Fixing a prior, a model is \emph{globally robust} if it persists against any competing model associated with any prior over the parameter space, including the true model; it is \emph{locally robust} if it persists against local perturbations with similar predictions and similar priors; and finally, \emph{constrained locally robust} if it  persists against structured local perturbations that are constrained to a particular parametric family.  Which robustness notion is the most appropriate depends on the context of different applications. For example, global robustness is most appropriate when the agent can ``think outside the box'', potentially due to 
a vibrant culture of innovation, whereas local notions are more suitable when the agent is conservative and only takes small steps. The constrained notion is appropriate when the agent does not completely forgo structural assumptions of the model, such as linearity or normality, and only contemplates competing models that lie in restricted directions relative to the original one. These notions provide a language to compare the robustness properties of models across different environments, and their formalization is a central conceptual contribution of my framework. 

My main result characterizes when a model satisfies each robustness notion based on two properties that are easily derived from the primitives of the model: asymptotic accuracy and prior tightness, as summarized in Table \ref{tab:my_label}. A model has high asymptotic accuracy when it gives rise to a \textit{self-confirming equilibrium} that satisfies a stability condition called \textit{p-absorbingness}. In a self-confirming equilibrium, the agent holds a supporting belief over the model parameters such that the model prediction perfectly coincides with the objective outcome distribution. The stability condition requires that a dogmatic modeler who only uses this model eventually only plays actions in the support of the equilibrium with positive probability. With high asymptotic accuracy, the model has weakly higher explanatory power than any other competing model with positive probability in the limit. However, this limit condition alone does not imply persistence, because the learning dynamics may induce the agent to switch away before she comes close to the equilibrium belief. When the associated prior is tight in the sense of being concentrated around the set of p-absorbing self-confirming equilibria, the explanatory power of the model remains consistently high across all periods, leading to persistence. 

\begin{table}[]
    \centering
        \begin{tabular}{c c c c} \toprule \vspace{-0.2em}
         & \multicolumn{3}{c}{Notions of robustness}  \\ 
            \cmidrule(lr){2-4} \vspace{-1.5em}\\ 
        Properties  & \hspace{0.5em} global\hspace{0.5em} &\hspace{0.5em} local\hspace{0.5em} & constrained local\\ \hline 
        asymptotic accuracy &  {high} &{high} & {low}\\
        prior tightness  & {high} & {none} &  {none}\\\bottomrule
    \end{tabular}
    \caption{Summary of results.}
    \label{tab:my_label}
\end{table}

I start by characterizing \textit{which} models can be locally or globally robust for at least one full-support prior. It is natural to conjecture that local robustness is much weaker than global robustness. Surprisingly, Theorem \ref{thm:Global-Robustness} establishes that a model is globally robust for at least one prior if and only if it is locally robust for at least one prior, and both amount to a requirement for high asymptotic accuracy. In this regard, persisting against local perturbations is as demanding as persisting against fundamentally different paradigms. The intuition for this result is that, when high asymptotic accuracy fails, the model can be improved locally by perturbing all of its predictions towards the direction of the true data-generating process. Moreover, Theorem \ref{thm:Global-Robustness} holds for all levels of the switching threshold. Even if the agent is extremely reluctant to switch, the accumulation of evidence over time eventually leads to the abandonment of a less accurate model. Note that, however, high asymptotic accuracy does not equate to high efficiency---strictly suboptimal actions can be played in a self-confirming equilibrium if the model yields wrong predictions off-path. 

Next, I characterize \textit{when}, or under which priors, the models characterized earlier are locally or globally robust. Theorem \ref{thm:concentrated-prior} reveals the real distinction between global and local robustness: the former requires high prior tightness but the latter does not. I provide an exact closed-form quantification of the required level of tightness in terms of the switching threshold. Specifically, the prior probability assigned to the parameters associated with the relevant self-confirming equilibria must exceed the inverse of the switching threshold. Therefore, higher switching stickiness facilitates robustness not by expanding the set of qualified models, but by allowing for more diffuse priors.  

By contrast, neither high asymptotic accuracy nor prior tightness is necessary for constrained local robustness (Theorems \ref{thm:Local-lobustness-w-convergence} and \ref{thm:Local-robustness-sufficient}). Rather, constrained local robustness only requires the model to have asymptotic accuracy relatively higher than the admissible structured local perturbations, captured by the existence of a p-absorbing Berk-Nash equilibrium at which the model is \textit{locally dominant} with respect to the constrained family. As the family enlarges to the universe of all models, as expected, such a Berk-Nash equilibrium morphs into a self-confirming equilibrium. 

This general characterization provides a learning foundation for the persistence of certain forms of model misspecification. A misspecified model can be robust against \emph{arbitrary} competing models---including the true model---despite the agent having an infinite amount of data. When competing models are further constrained, misspecified models can be robust even with lower asymptotic accuracy. The results provide off-the-shelf tools to predict which underlying behavioral biases are more relevant in specific contexts. In an application, I show that overconfidence is globally robust while underconfidence is non-robust for a wide range of parameters. This characterization can be particularly useful in generating testable predictions for empirical research and suggesting behavioral policies to mitigate the consequences of misspecification. For example, the application suggests that underconfidence requires less intervention than overconfidence as it is self-correcting. 

This characterization also yields novel predictions about how qualitative features of the model and the switching environment affect persistence. First, an interesting contrast emerges between the robustness properties between misspecified and correctly specified models. On one hand, all correctly specified models have the high asymptotic accuracy that only a subset of misspecified models can achieve. On the other hand, a correctly specified model may be associated with a less tight prior as compared to a misspecified one, especially when the latter has a smaller parameter space or gives rise to a large number of self-confirming equilibria. Perhaps surprisingly, this implies that some misspecified models can be more robust than correctly specified models, precisely because they are sufficiently extreme and misleading. Second, lower switching stickiness can be a double-edged sword, since it makes global robustness harder to obtain for any model. While it is easier to switch away from a misspecified model, due to noisy information, it is also easier to abandon a correctly specified model and get stuck with a misspecified alternative. I apply these observations to a media consumption problem and demonstrate that an extreme binary worldview can outperform and even replace the correct worldview, leading to persistent polarization in political beliefs. 

The rest of the paper is organized as follows. The next subsection discusses related literature.  Section \ref{sec:Example} provides an illustrative example. Section \ref{sec:Framework} introduces the learning framework. Section \ref{sec:Robustness} characterizes local and global robustness, and Section \ref{sec:constrained local robustness} characterizes constrained local robustness. Next, Section \ref{sec:Applications} discusses two applications of the characterization results. Section \ref{sec:extension} discusses the extensions of the model switching framework and Section \ref{sec:Concluding-Remarks} concludes. Appendix \ref{sec:Auxiliary-Results} contains useful auxiliary results, Appendix \ref{sec:Proofs-Main-Results} includes all proofs of the main results, and Appendix \ref{sec:Online-Appendix} contains examples omitted from the main text.

\subsection*{Related Literature}

This paper builds on the growing literature on learning with misspecified models. Much of that literature mentioned in Footnote \ref{footnote:literature} focuses on case-by-case analyses of misspecified models. In contrast, this paper characterizes robust misspecified models in general environments with simple conditions on the primitives, providing a microfoundation for their persistence. Another strand of this literature studies equilibrium concepts when the decision maker is a dogmatic modeler. \citet{esponda2016berk} propose the concept of Berk-Nash equilibrium, generalizing the self-confirming equilibrium \citep{Battigalli1987,fudenberg1993self} by relaxing the requirement that the subjective prediction fully coincides with the objective reality.\footnote{Other related concepts include the analogy-based expectation equilibrium in \cite{jehiel2008revisiting} and the cursed equilibrium in \cite{eyster2005cursed}. As pointed out by \cite{esponda2016berk}, these two solution concepts coincide with Berk-Nash or self-confirming equilibrium under appropriately specified feedback structures.} This paper deepens our understanding of the distinction and the connection between the two solution concepts by relating self-confirming equilibrium to unconstrained robustness notions and Berk-Nash equilibrium to constrained robustness. 

Recent contributions to the misspecified learning literature focus on characterizing the asymptotic learning outcomes of dogmatic modelers in general environments. This paper faces many of the same technical challenges as the works in this area since model persistence partly hinges on the asymptotic behavior of a dogmatic modeler. \cite{bohren2021learning} provide criteria for local and global stability of strict Berk-Nash equilibria in settings with model heterogeneity. \citet{esponda2021asymptotic} find conditions for a single agent's action frequency to converge to a Berk-Nash equilibrium using tools from stochastic approximation. \citet{fudenberg2021limit} establish that a uniformly strict Berk-Nash equilibrium is uniformly stable in the sense that starting from any prior that is sufficiently close to the equilibrium belief, the dogmatic modeler's action converges to the equilibrium with arbitrarily high probability. In this paper, I show that robustness is related to p-absorbingness, a different stability notion that does not require the dogmatic modeler's action to converge, but her action to enter and eventually stay within the support of an equilibrium. \citet{frick2021belief} introduce a prediction accuracy ordering over parameters within a model and use martingale arguments to prove belief convergence. In Section \ref{sec:constrained local robustness}, I use a similar technique to compare prediction accuracy across models. This paper makes a technical contribution to the literature by integrating model switching into a misspecified learning framework. Given that the agent has access to multiple models, we need to keep track of multiple belief processes, all of which are generated by endogenous data. Most importantly, the Bayes factor that governs the model switching process interacts and correlates with all belief processes even when no switching has been made. In Section \ref{subsec:traps}, I show such interaction can cause the adoption of some models to be self-defeating; I resolve this difficulty by proposing a set of no-trap conditions.

This paper is most related to papers that explore why misspecified models persist.  \citet{cho2015learning} also study model switching with endogenous data but assume a different switching rule. They assume that the agent always compares the model prediction with the empirical realizations and characterize dominant models based on the speed of convergence to a self-confirming equilibrium. By contrast, a key conceptual innovation of my framework is to view model robustness as a relative concept. The agent in my framework decides whether to switch by comparing the prediction quality of her model relative to that of a competing model. \citet{olea2019competing} characterize the ``winner\textquotedblright{} model in a contest environment where agents use models to predict an exogenous data-generating process and make auction bids based on their subjective model prediction error. They identify a trade-off between model fit and model estimation uncertainty when the dataset is small. In contrast, I show that both asymptotic accuracy and prior tightness are important for robustness even with an infinite amount of data. \citet{gagnon2018channeled} study the stability of models when the agent entertains a correctly specified alternative model. They examine a setting where data is independent from actions but the agent only pays attention to the data they deem decision-relevant given the current model. This contrasts with my framework where data is endogenously generated by actions but the agent pays attention to all available data.

Other papers approach this problem from an evolutionary or welfare perspective. \citet{fudenberg2022misspecifications} study the evolutionary dynamics when a small share of a large population mutates to enlarge their models at a Berk-Nash equilibrium. They find that an equilibrium can resist  mutations that yield a better statistical fit but induce worse-performing actions. My results complement their findings by showing that a model can persist against competing models that have lower asymptotic accuracy but nevertheless induce better-performing actions. Moreover, similar to this paper, they show that a self-confirming equilibrium resists every mutation.\footnote{Despite the similarity, the underlying mechanisms of our results are different. Their result follows from the assumption that every mutation is an expansion of the original model. Since the same self-confirming equilibrium remains an equilibrium under the mutated model, it is possible for all individuals in the population to stick to the same behavior as before the mutation.} 
\citet{he2020evolutionarily} also evaluate competing models based on their expected objective payoffs but examine multi-agent strategic games where misspecification can lead to beneficial misinferences. \citet{frick2021welfare} study welfare comparisons of learning biases and find that some biases can outperform Bayesian updating because they may lead to correct learning at a faster speed. 

An extensive literature in decision theory studies the behavior of a decision maker who has access to multiple models or priors over states. \citet{ortoleva2012modeling} proposes and axiomatically characterizes an amendment to Bayes' rule, called the Hypothesis Testing model, where the agent switches to an alternative prior upon observing an event to which she assigns a probability below a threshold. This contrasts with my framework where the agent switches if the ratio of the probability of the observed outcomes under the current model \textit{relative to} the competing model is sufficiently low.\footnote{In my framework, even if a sequence of outcomes occur with low probability according to the current model, the agent may not switch if the probability of those outcomes under the competing model is also low.} \citet{karni2013reverse} provide a choice-based decision theory to model a self-correcting agent who can expand his universe of subjective states. A number of canonical decision criteria capture aversion to model uncertainty, e.g. the maxmin model \citep{GILBOA1989141}, the smooth ambiguity aversion model \citep{klibanoff2005smooth}, and the robust control model \citep{hansen2001robust}. \cite{battigalli2019learning} find that ambiguity aversion increases the likelihood that the decision maker gets stuck into a self-confirming equilibrium. The key distinction is that the agent in my framework does not have a prior over models; instead, she switches between models based on their statistical fit.

A few papers entertain the idea that people can change models and explore its implications in various settings. \citet{mullainathan2002thinking} presents a model of ``categorical thinking'' in which people switch between coarse categories and policies discontinuously, resulting in overreaction to news.  \citet{galperti2019persuasion} and \citet{schwartzstein2019using} extend the idea of alterable subjective models to a persuasion setting and study how a principal could persuade an agent to accept a different worldview.

Finally, this paper is also related to the statistics literature on model selection. Statisticians have developed a number of criteria that differ in their cost of computation and penalty for overfitting, such as the Bayes factor, Akaike information criterion (AIC), Bayesian information criterion (BIC), and likelihood-ratio test (LR test), and the machine learning community favors cross-validation due to its flexibility and ease of use \citep{chernoff1954distribution,akaike1974new,stone1977asymptotic,schwarz1978estimating,kass1995bayes,konishi2008information}. This work focuses on the Bayes factor rule and differs with the statistical literature by studying an endogenous data-generating process. I will come back to the comparison of different model selection rules in Section \ref{subsec:Discussion-model}.

\section{\label{sec:Example}Illustrative Example}
As a simple illustration of the learning framework and the main results, consider the following example.\footnote{I build this example on \cite{heidhues2018unrealistic}. In Section \ref{sec:Applications}, I extend this example to allow for more
general payoffs and outcome distributions.}
An artist chooses how much effort to exert in creating artwork in every period, $a_t\in\{0,1,2\}$. Upon exerting effort, he incurs a cost $a_t(a_t+0.5)$ but also obtains revenue from the sales of his work. The sales revenue takes a simple form: $y_t = (a_t+b)\omega+\epsilon_t$, where $b$ is the talent level of the artist, $\omega$ is an unknown market demand for arts, and $\epsilon_t$ captures a zero-mean random noise with a known distribution. Suppose that the true talent is $1$ and the true market demand is $2$. 
If the artist has a correct belief about talent, he is able to correctly infer the market demand from the sales data, allowing him to eventually choose the efficient effort level of $1$.\footnote{More precisely, if the artist is correct about his talent, a sufficient and necessary condition for correct learning is that the artist's prior assigns positive probability to the true market demand.} 
However, the artist is endowed with a misspecified model: while the artist knows the sales function, he is overconfident or underconfident about his talent. In such a model, the unknown market demand is a parameter to be estimated.\footnote{\cite{heidhues2018unrealistic} consider a dogmatic agent who never changes his model and show that both over- and underconfidence lead to wrong inferences about market demand and inefficient choices of effort in the long run.} This modeling approach captures the idea that individuals often commit fundamental attribution errors and thus are slower in changing self-perceptions than in updating beliefs about the outside environment \citep{miller1975self}.

Suppose the artist entertains competing models, are underconfidence and overconfidence equally likely to persist?  My results reveal an interesting asymmetry---overconfidence tends to be more robust than underconfidence. This prediction is consistent with a large amount of empirical evidence that people usually exhibit overconfidence instead of underconfidence \citep{langer1975heads}.

Let's first consider the case of an underconfident artist who believes in a talent level of $\hat b=0$ but entertains a correctly specified competing model that attaches positive probability to the true talent $b=1$. 
The artist is sticky in changing his model---he only switches to an alternative model if that explains the observed sales data significantly better. We can show that the underconfidence model does not persist because it has strictly lower asymptotic accuracy than the competing model. To see why, first note that underconfidence induces the artist to mistakenly attribute the higher-than-expected sales to a high market demand and thus incentivizes a high effort. What comes next is critical: a high effort choice then in turn induces a lower belief, correcting the artist's overestimation---due to complementarity, the marginal return to demand increases as effort rises, so a smaller inference-truth gap suffices to explain the sales data. However, repeating this logic, a lower belief in demand then decreases the effort and pushes up the belief, generating a negative feedback loop.  Specifically, an effort of $1$ pushes the artist's belief towards a demand of $4$ at which a higher effort is optimal, but an effort of $2$ pushes the belief towards a demand of $3$ at which a lower effort is optimal,
\begin{align*}
    {(\hat a^1+\hat b)\cdot\hat\omega^1} = (1+0)\cdot 4 = {(\hat a^1+ b)\cdot \omega}= (1+1)\cdot 2 = 4, \\
    {(\hat a^2+\hat b)\cdot\hat\omega^2} = (2+0)\cdot 3 = {(\hat a^2+ b)\cdot \omega}= (2+1)\cdot 2 = 6. 
\end{align*}
Consequently, the artist's effort cycles between $1$ and $2$ forever, and no single value of market demand can explain all data perfectly---the model admits no self-confirming equilibrium.
By contrast, a correctly specified competing model can always achieve high prediction accuracy in the limit. Therefore, no matter how reluctant the artist is to change his self-perception, he is going to abandon the initial model and correct his underconfidence.  

Next, let us turn to an overconfident artist who believes his talent level is instead given by $\hat b=2$ but also entertains a correctly specified competing model. In contrast to the previous case, the overconfidence model has high asymptotic accuracy. Overconfidence induces the artist to mistakenly attribute the disappointing sales to low demand and respond by exerting a low effort. Critically, a lower effort choice then induces an even lower belief---the marginal return to market demand decreases as effort drops, so a larger inference-truth gap is necessary to make sense of the disappointing sales. The positively reinforcing dynamics eventually lead the artist to reach the false conclusion that the market demand is $\hat \omega=1$ and play an inefficient choice of effort $\hat a=0$. Most importantly, they constitute a {self-confirming equilibrium}: zero effort is optimal against the misguided belief about the market demand, and this low belief perfectly matches the sales data given the misspecified model, since
\begin{equation*}
    {(\hat a+\hat b)\cdot\hat\omega} = (0+2)\cdot 1 = {(\hat a+ b)\cdot \omega}= (0+1)\cdot 2 = 2.
\end{equation*}
At this steady state, the initial model and the competing model generate equally accurate predictions; given the friction in model switching, this suggests that the artist sticks with the overconfidence model with positive probability. But this is not the end of the story yet---the equilibrium analysis suggests that overconfidence has the potential to persist in the limit but does not rule out switches in the course of converging to the steady state. The dynamic model switching framework introduced in Section \ref{sec:Framework} addresses this concern. 
My characterization in Section \ref{sec:Robustness} implies that the overconfidence model is globally robust (and thus persists against the correctly specified competing model) when the associated prior attaches sufficiently high probability to the low demand $\hat\omega=1$. Moreover, the required prior tightness is inversely related to the switching stickiness.


\section{\label{sec:Framework}Framework}

\subsection{Setup}

\paragraph{Objective Environment.}
Consider an infinitely repeated decision problem with a myopic agent.\footnote{I allow the agent to be non-myopic  Section \ref{subsec:patient agent} and show that all main results continue to hold.} In each period $t=0,1,2,...$, the agent chooses an action $a_{t}$ from a finite set $\mathcal{A}$ and then observes a random outcome $y_{t}$ from $\mathcal{Y}$. The set of possible outcomes $\mathcal Y$ is either an Euclidean space or a compact subset of an Euclidean space with at least two elements. The agent's action may affect the distribution of the outcome: conditional on $a_{t},$ outcome $y_{t}$ is independently drawn according to probability measure $Q^{\ast}\left(\cdot|a_{t}\right)\in\Delta\mathcal{Y}$. This true data generating process (henceforth true DGP) remains fixed throughout. At the end of period $t$, the agent obtains a flow payoff $u_{t}\coloneqq u\left(a_{t},y_{t}\right)\in\mathbb{R}$. The function $u$ is known to the agent. I denote the observable history in the beginning of period $t$ by $h_{t}\coloneqq\left(a_{\tau},y_{\tau}\right)_{\tau=0}^{t-1}$ and the set of all such histories by $H_{t}=\left(\mathcal{A}\times\mathcal{Y}\right)^{t}$. I make the following assumptions on the true DGP and the payoff function.
\begin{assumption} \label{assu:Objetive-DGP} (i) For all $a\in\mathcal{A}$, $Q^{\ast}\left(\cdot|a\right)$ is absolutely continuous w.r.t. a common measure $\nu$, and the Radon-Nikodym derivative $q^{\ast}\left(\cdot|a\right)$ is positive and continuous; (ii) For all $a\in\mathcal{A}$, $u\left(a,\cdot\right)\in L^{1}\left(\mathcal{\mathcal{Y}},\mathbb{R},Q^{\ast}\left(\cdot|a\right)\right)$.\footnote{$L^{p}\left(\mathcal{\mathcal{Y}},\mathbb{R},\nu\right)$ denotes the space of all functions $g:\mathcal{Y}\rightarrow\mathbb{R}$ s.t. $\int\left|g\left(y\right)\right|^{p}\nu\left(dy\right)<\infty$.}
\end{assumption}
These assumptions are standard in the literature. Assumption \ref{assu:Objetive-DGP}(i) means that the true DGP admits a positive and continuous density. When $\mathcal{Y}$ is discrete, $q^{\ast}\left(\cdot|a\right)$ is the probability mass function and $\nu$ is the counting measure; when $\mathcal{Y}$ is a continuum, $q^{\ast}\left(\cdot|a\right)$ is the probability density function and $\nu$ is the Lebesgue measure. Assumption \ref{assu:Objetive-DGP}(ii) ensures that the agent's objective expected period-$t$ payoff, $\overline{u}_{t}\coloneqq\int_{\mathcal{Y}}u\left(a_{t},y\right)q^{\ast}\left(y|a_{t}\right)\nu\left(dy\right)$, is well-defined and thus there exists an objectively optimal action. 

\paragraph{Subjective Models.}
The decision problem is trivial if the agent knows the true DGP---she simply plays a myopically optimal action every period. However, the agent does not know the true DGP and she uses subjective models to guide her decisions.

A model, indexed by $\theta$, is a theory of how actions affect the outcome distribution. Each model $\theta$ consists of two components: (1) a \emph{parameter set}, denoted by $\Omega^{\theta}$, and (2) \emph{predictions}, which is a collection of data generating processes denoted by $Q^{\theta}:\mathcal{A}\times\Omega^{\theta}\rightarrow\Delta\mathcal{Y}$. Conditional on action $a_t$, model $\theta$ predicts the outcome distribution $Q^\theta(\cdot|a_t,\omega)$, where $\omega$ can take any value in $\Omega^\theta$. Parameters are configuration variables specific to the model. For example, the parameters internal to a linear supply model include only the slope and the intercept, while the parameters of a more complicated model may include a variety of other factors. A model with a larger parameter set includes a larger number of DGPs. 
I restrict attention to models satisfying Assumption \ref{assu:Subjective-DGP}.

\begin{assumption}
\label{assu:Subjective-DGP} For all $a\in\mathcal{A}$: (i) $\Omega^{\theta}$ is a finite subset of some Euclidean space; (ii) for all $\omega\in\Omega^{\theta}$, $Q^{\theta}\left(\cdot|a,\omega\right)$ is absolutely continuous w.r.t. measure $\nu$, and the Radon-Nikodym derivative $q^{\theta}\left(\cdot|a,\omega\right)$ is positive and continuous; (iii) for all $\omega\in\Omega^{\theta}$, $u\left(a,\cdot\right)\in L^{1}\left(\mathcal{\mathcal{Y}},\mathbb{R},Q^{\theta}\left(\cdot|a,\omega\right)\right)$; (iv) for all $\omega\in\Omega^{\theta}$, there exists $r_{a}\in L^{2}\left(\mathcal{Y},\mathbb{R},\nu\right)$ such that $\left|\ln\frac{q^{\ast}\left(\cdot|a\right)}{q^{\theta}\left(\cdot|a,\omega\right)}\right|\leq r_{a}\left(\cdot\right)$ a.s.-$Q^{\ast}\left(\cdot|a\right)$.
\end{assumption}
Assumption \ref{assu:Subjective-DGP}(i) requires that the parameter space is finite. Assumptions \ref{assu:Subjective-DGP}(ii) and \ref{assu:Subjective-DGP}(iii) are analogous to Assumption \ref{assu:Objetive-DGP}. They ensure the existence of a density function  and that the expected payoffs predicted by any model are well-defined. Assumption \ref{assu:Subjective-DGP}(iv) ensures that the difference between the predictions of any model and the true DGP can be properly quantified, which also implies that no subjective models rule out events that occur with positive probability under the true DGP. 

Let $\Theta$ denote the universe of all models that satisfy Assumption \ref{assu:Subjective-DGP}. Since each model is a finite set of DGPs, we have $\Theta\subseteq\cup_{z=1}^{\infty}\left(\left(\Delta\mathcal{\mathcal{Y}}\right)^{|\mathcal{A}|}\right)^z$, where $z$ represents the size of the parameter set. Model $\theta$ is said to be \textit{correctly specified} if its predictions include the true DGP, i.e.  $q^{\ast}\left(\cdot|a\right)\equiv q^{\theta}\left(\cdot|a,\omega\right),\forall a\in\mathcal{A}$ for some $\omega\in\Omega^{\theta}$, and \textit{misspecified} otherwise. I use $\theta^\ast$ to denote the smallest correctly specified model, i.e. the model that solely consists of the true DGP, and refer to $\theta^\ast$ as the \textit{true model}.

\subsection{The Switcher's Problem}

The agent has access to a finite set of models, $\Theta^{\dagger}\subseteq\Theta$. It is often assumed in the literature that the decision maker is a \textit{dogmatic modeler} who uses a single model throughout. I call a dogmatic modeler with $\Theta^{\dagger}=\left\{ \theta\right\}$ a \emph{$\theta$-modeler}. The key departure I take here is to focus on a \textit{switcher} who entertains multiple models. A switcher adopts one model in any period but may switch between multiple models at different times. For the main analysis, I restrict attention to the two-model case where $\Theta^\dagger=\{\theta,\theta^\prime\}$.\footnote{This model switching framework can be extended to allow for three or more models in $\Theta^\dagger$. I analyze this extension in  \cref{sec:multiple-competing-models}.} 
Within a single period, a $\theta$-modeler and a switcher who adopts $\theta$ this period and shares the same belief over its parameters operate in an identical way. 
However, a $\theta$-modeler and a switcher may behave considerably differently across periods, because the decision rule of the switcher is specific to the current model and can be altered after a change of models.


I now describe a switcher's behavior in greater detail. I denote the agent's model choice in period $t$ by $m_{t}\in\Theta^{\dagger}$ and assume she starts with $\theta$, i.e. $m_0=\theta$. We can summarize a switcher's problem by a quadruple, $S=(\theta,\theta^\prime, \pi_0^\theta,\pi_0^{\theta^\prime})$, where the first two elements represent the \emph{initial model} and the \emph{competing model}, respectively, and the last two elements are the priors over the corresponding model's parameters,  $\pi_0^\theta\in\Delta\Omega^\theta$ and $\pi_0^{\theta'}\in\Delta\Omega^{\theta'}$. I assume that all priors are full-support.\footnote{This is without loss of generality. Using model $\theta$ with a non-full-support prior is equivalent to using model $\theta'$ with the smaller parameter space $\Omega^{\theta'}=\supp(\pi_0^\theta)$ and a full-support prior. } I now describe the events in period $t$ in chronological order.\medskip{}

\noindent \textbf{Model switching.} In period $0$, the agent adopts $\theta$ and proceeds immediately to choosing an action. In period $t\geq 1$, the agent employs a \textit{Bayes factor} rule to determine the model choice $m_{t}$. Fix a constant $\alpha>1$ that I call the \textit{switching threshold}. At the beginning of each period $t\geq1$, the agent calculates the Bayes factor, 
\begin{align}
\lambda_{t} & \coloneqq\frac{\ell_{t}(\theta')}{\ell_{t}(\theta)}\coloneqq \frac{\sum_{\omega\in\Omega^{\theta}}\pi_{0}^{\theta}(\omega)\ell_t(\theta,\omega)}{\sum_{\omega'\in\Omega^{\theta'}}\pi_{0}^{\theta'}(\omega')\ell_t(\theta',\omega')},\label{eq:likelihood-ratio}
\end{align}
where
\begin{align}
\ell_t(\theta,\omega)\coloneqq \prod_{\tau=0}^{t-1}q^{\theta}\left(y_{\tau}|a_{\tau},\omega\right),\label{eq:likelihood}
\end{align} and $\ell_{t}(\theta',\omega')$ is defined analogously.
The Bayes factor is the ratio of the likelihood of the data under model $\theta'$ to the likelihood of the data under model $\theta$; further, the likelihood under a model is the weighted sum of the likelihoods of the data under all DGPs included in the model, with the weights given by the prior. If $\theta$ is adopted in the past period, then the agent  makes a switch to $\theta'$ if and only if the Bayes factor exceeds the switching threshold, $\lambda_t> \alpha$. If $\theta'$ is adopted in the past period, the agent makes a switch back to $\theta$ if and only if $\lambda_t$ drops below the inverse of the switching threshold, $\lambda_t<1/\alpha$.\footnote{\noindent The symmetry in the switching threshold is not important for the main results.} If instead $1/\alpha\leq \lambda_t\leq\alpha$, the agent does not find the existing evidence adequate for a model switch. Given a larger $\alpha$, switching requires stronger evidence. Thus, $\alpha$ is a measure of switching \emph{stickiness}. I discuss the switching rule in the next subsection.\medskip{}

\noindent \textbf{Learning.} The agent then updates her belief over the parameters of both models using the full history. I recursively define two belief processes, $\pi_{t}^{\theta}=B^{\theta}(a_{t-1},y_{t-1},\pi_{t-1}^{\theta})$ and $\pi_{t}^{\theta'}=B^{\theta'}(a_{t-1},y_{t-1},\pi_{t-1}^{\theta'})$, where $B^{\theta}:\mathcal{A}\times\mathcal{Y}\times\Delta\Omega^{\theta}\rightarrow\Delta\Omega^{\theta}$ is the Bayesian operator for model $\theta$ and $B^{\theta'}$ is the Bayesian operator for $\theta'$. We can equivalently write the Bayes factor defined in \eqref{eq:likelihood-ratio} in terms of the posteriors,
\begin{equation}
    \lambda_t=\lambda_{t-1} \frac{\sum_{\omega'\in\Omega^{\theta'}}\pi_{t-1}^{\theta'}\left(\omega'\right)q^{\theta'}\left(y_{t-1}|a_{t-1},\omega'\right)}{\sum_{\omega\in\Omega^{\theta}}\pi_{t-1}^{\theta}\left(\omega\right)q^{\theta}\left(y_{t-1}|a_{t-1},\omega\right)}.\label{eq:likelihood-ratio-recursive}
\end{equation}
This expression has an intuitive interpretation: the first term on the right-hand side compares how well the models explain the data generated before the last period, and the second term compares how well they explain the most recent observation when the parameters are estimated using the past data. 

\medskip{}
\noindent \textbf{Actions.} The agent chooses an action according to a fixed pure policy that is optimal under the current model $m_{t}$. The policy under $\theta$, denoted by  $f^\theta:\Delta\Omega^\theta\rightarrow\mathcal A$, is a selection from the correspondence $A_m^{\theta}:\Delta\Omega^\theta\rightrightarrows\mathcal A$ that returns all myopically optimal actions at a given belief. Formally, for any belief $\pi^\theta_t\in\Delta\Omega^\theta$, \begin{equation}
    A^\theta_m(\pi_t^\theta)\coloneqq  \argmax_{a\in\mathcal A}\sum_{\Omega^\theta}\pi_t^\theta(\omega)\int_{\mathcal Y} u(a,y)q^\theta(y|a,\omega)\nu(dy).
\end{equation} Analogously, the policy under $\theta'$, denoted by $f^{\theta'}$, is a selection from $A_m^{\theta'}$. 

\medskip{}


Given $\theta$ and $\theta'$, the agent's model choice either eventually converges to one of the two models or oscillates forever. We say that model $\theta$ persists if the agent eventually stops switching and settles down with model $\theta$ with positive probability.\footnote{I construct the underlying probability space in Appendix \ref{sec:Auxiliary-Results}.} 

\begin{defn}
\label{def:Persistence}Model $\theta$ \textit{persist against $\theta'$} at priors $\pi_0^\theta$ and $\pi_0^{\theta'}$ if, given the switcher's problem $S=(\theta,\theta',\pi_0^{\theta'},\pi_0^{\theta'})$, the model choice $m_{t}$ eventually equals $\theta$ with positive probability, i.e. 
there exists $T>0$ such that with positive probability, $m_t=\theta$ for all $t\geq T$.
\end{defn}

The interpretation of persistence is as follows. If $\theta$ does not persist against $\theta'$, the competing model $\theta'$ will be adopted by the agent infinitely many times almost surely. This implies that the long-term beliefs and behavior of the switcher can be starkly different from the predictions of an analyst who only knows the initial model $\theta$. If an underlying bias gives rise to a model that does not persist, we expect it to be non-stable. By contrast, if $\theta$ persists against $\theta'$, then with positive probability, the switcher resembles a $\theta$-modeler in the long term, allowing us to use tools from the literature on misspecified learning to characterize her behavior. This also implies that we could expect to observe the stable existence of the underlying bias. I discuss alternative ways of defining persistence in Section \ref{sec:extension}.

Note that persistence is prior-sensitive---in principle, a model could persist against some competing model at some priors but not others. The prior potentially affects persistence in two ways. First, the prior plays a direct part in the computation of the Bayes factor. Second, the prior affects the agent's behavior and thus endogenously affects the outcome realizations and how well models fit the data. Therefore, we are interested in not only which models can persist but also how their persistence depends on the prior.

\subsection{\label{subsec:Discussion-model}Discussion}
 
 
I now briefly comment on the model switching rule before proceeding to the analysis. The discussion on other important assumptions is deferred to Section \ref{sec:extension}.\medskip{}

\noindent \textbf{Sticky switching. }A switch only occurs when an alternative model fits sufficiently better, as captured by the assumption that $\alpha>1$. 
Switching stickiness is well observed in reality and can stem from a variety of causes, such as conservatism, concerns about overreaction to noisy observations, or mental and physical costs associated with model switching. For example, universities base their promotion decisions on models of faculty performance that heavily rely on key statistics such as the rankings of the journals. While such an evaluation system can be deeply flawed, implementing a new system is highly costly, and thus a model switch only occurs when a meta-analysis of other potential evaluation systems points to a clear winner. In the statistics literature on Bayesian model selection, \cite{kass1995bayes} recommend using a threshold of 20 as the requirement of ``strong evidence'' in favor of the competing model. Sticky switching is important to the persistence of models. When $\alpha=1$, persistence is significantly harder to achieve since switching occurs too easily.\footnote{I analyze this case in Corollary \ref{cor:alpha equals 1}.}

\medskip{}

\noindent \textbf{Bayes factor rule.} The Bayes factor rule is a common model selection criterion in Bayesian statistics. It is a natural choice in this environment for the following reasons. First,  it is well known from the statistics literature that the Bayes factor automatically includes a penalty for including too much structure into the model and helps prevent overfitting \citep{kass1995bayes}. If model $\theta$ has a large parameter space and a diffuse prior, then each DGP in the model is assigned minimal weight, thinning out the likelihood $l_t(\theta)$.\footnote{The use of the prior in the calculation of the Bayes factor is crucial. Suppose, for example, we evaluate the likelihood of the data under model $\theta$ using the final posterior for all periods, i.e. $\hat l_t(\theta)\coloneqq \sum_{\omega\in\Omega^\theta} \pi_{t-1}^{\theta}(\omega)l_t(\theta,\omega)$, then the likelihood is less sensitive to the prior and the punishment on complexity vanishes over time if the posterior converges. 
If we evaluate the likelihood using the maximum likelihood estimates, i.e. $\tilde l_t(\theta)\coloneqq \prod_{\tau=0}^{t-1} l_t(\theta,\tilde\omega_t)$, where $\tilde\omega_t$ maximizes $l_t(\theta,\omega)$ among all $\omega\in\Omega^\theta$, then the likelihood ratio is completely independent from the prior. In this case, the likelihood under a larger model is always weakly higher than the likelihood under a nested smaller model. } This is consistent with the empirical observation that people in general favor simple models. As shown in the next section, this feature is the main driving force behind the necessity of high prior tightness for global robustness. Second, the Bayes factor rule is flexible in that it could be easily formulated for any model and any DGPs, without imposing assumptions about the underlying parametric structure.\footnote{Common alternative rules in statistics such as the BIC and the AIC are shown to approximate the Bayes factor under certain assumptions about the parametric family and the prior. } Finally, the Bayesian factor has a strong Bayesian flavor, so the agent maintains some form of consistency in belief updating and model switching.



\section{\label{sec:Robustness}Global and Local Robustness}

This section defines and provides a full characterization of global and local robustness. I characterize which models can be globally or locally robust for at least one prior in Section \ref{subsec:global robustness}. In Section \ref{subsec:traps}, I demonstrate a new technical challenge that arises due to the interaction between within-model learning and model switching. Finally, I characterize global and local robustness at a fixed prior in Section \ref{subsection:fixed priors}.


\subsection{Definition}
Persistence is defined relative to a competing model and a given set of priors. But which competing models would arise and what priors they would be assigned may not be predictable. This motivates a robustness approach. In this section, I consider two robustness notions: global robustness requires persistence against both local perturbations and paradigm shifts, while local robustness requires persistence against local perturbations. Global robustness is more applicable if the agent has a deep understanding of the environment---coming up with novel competing models requires knowledge and imagination---or if she works in an innovation-friendly culture (such as the technology industry). On the contrary, local robustness is more applicable when the environment is complicated or the agent is conservative. 

Formally, if model $\theta$ is globally robust, then provided a proper prior, it persists no matter what alternative model it is compared against and what prior is assigned to the alternative model. Conversely, if $\theta$ is not globally robust at any full-support prior, one can find a competing model associated with some prior that replaces $\theta$ infinitely often almost surely.
\begin{defn}[Global robustness] Model $\theta\in\Theta$ is \textit{globally robust at prior $\pi_0^\theta$} if $\theta$ persists against every competing model $\theta'\in\Theta$ at priors $\pi_0^\theta$ and $\pi_0^{\theta'}$ for every  ${\pi}_{0}^{\theta'}\in\Delta \Omega^{\theta'}$.
\label{def:Robustness} 
\end{defn}

Next, to define local robustness, we first need to properly quantify the distance between two arbitrary models $\theta$ and $\theta'$. Since every model consists of a collection of data-generating processes, a natural approach is to measure the distance between the corresponding sets of DGPs using the Prokhorov metric $d_P$ and the Hausdorff metric $d_H$.\footnote{The Prokhorov metric measures the distance between any two probability distributions on the same metric space. For any two probability measures $\mu$ and $\mu'$ over $\mathcal Y$, the Prokhorov distance is given by $d_P(\mu,\mu')\coloneqq \inf\left\{ \epsilon>0|\mu\left(Y\right)\leq \mu'\left(B_{\epsilon}\left(Y\right)\right)+\epsilon\text{ and }\mu'\left(Y\right)\leq \mu\left(B_{\epsilon}\left(Y\right)\right)+\epsilon\text{ for all }Y\subseteq\mathcal{Y}\right\}.$ The results in this paper also hold for alternative metrics such as Kullback-Leibler divergence or total variation. }\textsuperscript{,}\footnote{The Hausdorff metric measures how far two subsets of the same metric space are from each other. For any two sets $X$ and $Z$, their Hausdorff distance is $d_H(X,Z) = \max\{\sup_{x\in Z} d(x,Z), \sup_{z\in Z} d(X,z)\}$.} To begin with, denote the DGP to which model $\theta$ and parameter $\omega$ correspond by $Q^{\theta,\omega}$. The distance between any two DGPs $Q^{\theta,\omega}$ and $Q^{\theta',\omega'}$ is the maximum Prokhorov distance between the outcome distributions across all actions, 
\begin{equation}d(Q^{\theta,\omega},Q^{\theta',\omega'}) \coloneqq \max_{a\in\mathcal{A}}d_P(Q^{\theta,\omega}_a,Q^{\theta',\omega'}_a).\end{equation}
The distance between model $\theta$ and $\theta'$ is then given by the Hausdorff metric, \begin{equation}d(\theta,\theta') \coloneqq  d_{H}\left(\{Q^{\theta,\omega}\}_{\omega\in\Omega^\theta},\{Q^{\theta',\omega'}\}_{\omega'\in\Omega^{\theta'}}\right).\end{equation}
This metric allows us to define an \emph{$\epsilon$-neighborhood} of $\theta$,  \begin{equation}\label{defn:neighbor-models-local}N_{\epsilon}\left(\theta\right)\coloneqq\left\{ \theta^{\prime}\in\Theta:d(\theta,\theta')<\epsilon\right\}.\end{equation}
In order for the models to have similar initial predictions, the notion of local robustness also restricts the distance between the associated priors. Note that prior $\pi_0^\theta$ and prior $\pi_0^{\theta'}$ are defined on potentially different parameter spaces, but each of them corresponds to a belief over the set of all DGPs. Let this belief mapping be $\Gamma^\theta:\Delta\Omega^\theta\rightarrow \Delta (\Delta\mathcal Y)^\mathcal{A}$. An $\epsilon$-neighborhood of prior $\pi_0^\theta$ within the set of possible priors for $\theta'$ is given by 
\begin{equation}
N_\epsilon(\pi^\theta_0; \theta,\theta')\coloneqq \left\{\pi^{\theta'}_0\in\Delta \Omega^{\theta'}:d_P\left(\Gamma^\theta(\pi^\theta),\Gamma^{\theta'}(\pi^{\theta'})\right)<\epsilon\right\}.\label{defn:neighbor-priors-local} 
\end{equation}
Now we are ready to state the definition of local robustness. 
\begin{defn}[Local robustness]\label{def:local robustness}
 Model $\theta\in\Theta$ is \textit{locally robust at prior $\pi_0^\theta$} if there exists $\epsilon>0$ such that model $\theta$ persists against every competing model $\theta'\in N_\epsilon(\theta)$ at priors $\pi_0^\theta$ and $\pi_0^{\theta'}$ for every ${\pi}_{0}^{\theta'}\in N_\epsilon(\pi_0^\theta;\theta,\theta')$.
\label{def:Local Robustness informal} 
\end{defn}
Local robustness requires that there exists some positive $\epsilon$ such that a model persists against nearby models at nearby priors within the relevant $\epsilon$-neighborhoods. Hence, if model $\theta$ is locally robust, it persists as long as the switcher only considers sufficiently close perturbations. By contrast, if $\theta$ is not locally robust, then there is no chance that $\theta$ would be adopted forever even if the agent only considers small changes.

\subsection{\label{subsec:global robustness}Robustness at Some Prior}

I first characterize models that are locally or globally robust for at least one full-support prior. This characterization is useful because it directly speaks to the question of which models \textit{can} be robust---failing to be robust at some full-support prior implies non-robustness under every initial condition. Since all priors are assumed to be full-support, I sometimes drop this adjective for convenience. 

It is instructive to start our analysis with a special case: which models can persist against a correctly specified model? It is a well known fact that under a correctly specified model, a learner assigns probability 1 to the true outcome distribution in the limit \citep{easley1988controlling}. It follows that such a model perfectly matches the data in the long term, and thus any model that persists in its presence must also have perfect prediction accuracy in the limit. Since outcomes are endogenously generated by actions, this suggests that the agent---potentially after a lot of back-and-forth switching and belief updating---ends up playing a \emph{self-confirming equilibrium}. 


\begin{defn}\label{def:self-confirming eqm}
A strategy $\sigma\in\Delta\mathcal A$ is a \emph{self-confirming equilibrium} (SCE) under model $\theta$ if there exists a supporting belief $\pi\in\Delta\Omega^\theta$ such that the following conditions hold.
\begin{enumerate}
    \item[(1)] {Optimality}:  $\sigma$ is myopically optimal against  $\pi$, i.e. $\sigma \in\Delta A^\theta_m(\pi)$.
    \item[(2)] {Consistency}:  $\pi$ is consistent at  $\sigma$ in that $q^\theta(\cdot|a,\omega)\equiv q^\ast(\cdot|a)$ for all $a\in\supp(\sigma)$ and all $\omega\in\supp(\pi)$.
\end{enumerate}
\end{defn}

In such an equilibrium, the agent plays actions that are myopically optimal given a consistent supporting belief such that the model prediction fully coincides with the objective outcome distribution. Note that a SCE may well be inefficient---while consistency implies correct predictions about the payoff obtained in equilibrium, the model could have completely wrong predictions about the payoffs off path.  

But persistence against a correct model implies more than the existence of a SCE---
the agent should, with positive probability, end up playing \textit{only} the equilibrium actions under model $\theta$. If non-SCE actions are played infinitely often, the Bayes factor would still diverge to infinity and result in an abandonment of model $\theta$. Note that on paths where $\theta$ is adopted forever, a switcher eventually behaves no differently than a $\theta$-modeler. Thus, a necessary condition is that a $\theta$-modeler eventually only plays the equilibrium actions with positive probability. I refer to this stability notion as p-absorbingness, where ``p'' means that the equilibrium is absorbing \textit{with positive probability}. Since a $\theta$-modeler's problem is independent from the model switching process, we can further characterize p-absorbingness with conditions based on the primitives of the model.

\begin{defn}
\label{def:p-absorbing}Strategy $\sigma\in\Delta\mathcal{A}$ is \textit{p-absorbing} under $\theta$ if there exists a full-support prior $\pi_{0}^{\theta}$ and some positive integer $T$ such that, with positive probability, a $\theta$-modeler only plays actions in $\supp \left(\sigma\right)$ after period $T$.
\end{defn}

P-absorbingness differs from existing stability notions of self-confirming equilibria in the literature. In particular, it does not imply that the $\theta$-modeler's or the switcher's action sequence converges to a single action in the support of $\sigma$ or her action frequency converges to $\sigma$.\footnote{For example, p-absorbingness is weaker than the stability notion proposed by \citet{fudenberg2021limit}. By their definition, a pure equilibrium $a^{\ast}$ under $\theta$ is stable if for every $\kappa\in\left(0,1\right)$, there exists a belief $\pi\in\Delta\Omega^{\theta}$ such that for any prior $\pi_{0}^{\theta}$ sufficiently close to $\pi$, the dogmatic modeler's action sequence $a_{t}$ converges to $a^{\ast}$ with probability larger than $\kappa$. They do not define a stability notion for a mixed equilibrium.} Rather, it allows for non-convergent behavior within the support of $\sigma$, but rules out the scenario where the modeler almost surely plays actions outside the support of $\sigma$ infinitely often.\footnote{Indeed, a dogmatic modeler's action may never converge even when she eventually only plays the actions in the support of a p-absorbing SCE (see Example \ref{exa:p-absorbing-mixed-SCE} in  Appendix \ref{sec:Online-Appendix}).} Although p-absorbingness is a relatively weak requirement, not all self-confirming equilibria are p-absorbing.\footnote{If there exists an action that is outside the support of the SCE but optimal against the equilibrium belief, then playing that action can drive the agent away from the SCE (see Example \ref{exa:not-p-absorbing-SCE} in  Appendix \ref{sec:Online-Appendix}).} Note that while p-absorbingness implies the existence of some full-support prior that leads to the SCE being played by a $\theta$-modeler, in principle it is not necessary that the switcher also starts with such a prior.\footnote{This is because a switcher may go through a number of switches before she eventually settles down with a model, and those switches may happen to push her belief into a trajectory to the SCE even when her prior is not on the trajectory.} 
I conclude our analysis of a correctly specified competing model with Lemma \ref{lem:Correctly-full-characterization}.
\begin{lem} 
\label{lem:Correctly-full-characterization}If model $\theta$ persists against a correctly specified model $\theta'$ at some priors $\pi_0^\theta$ and $\pi_0^{\theta'}$, then there exists a p-absorbing SCE under $\theta$. 
\end{lem}

On its face, this lemma provides only a necessary condition for global robustness. On one hand, the condition in Lemma \ref{lem:Correctly-full-characterization} appears too strong for local robustness because this notion only requires persistence against local perturbations and any local perturbation of any misspecified model is necessarily misspecified. On the other hand, it is unclear whether the existence of a p-absorbing SCE would be sufficient for global robustness, even if the agent can start from any arbitrary full-support prior. Critically, p-absorbingness only ensures that  a $\theta$-modeler eventually plays the SCE with positive probability, but it remains to be shown that our agent can also arrive at the equilibrium with positive probability when she has access to multiple models---the presence of the second model may interfere with the learning process under the first model. Perhaps surprisingly, as I show in Theorem \ref{thm:Global-Robustness}, the existence of a p-absorbing SCE is both necessary for local robustness and sufficient for global robustness, which equates the two robustness notions provided  flexibility in the prior.

\begin{thm} 
\label{thm:Global-Robustness} The following statements are equivalent:
\begin{itemize}
    \item[(i)] Model $\theta$ is locally robust for at least one full-support prior.
    \item[(ii)] Model $\theta$ is globally robust for at least one full-support prior.
    \item[(iii)] There exists a p-absorbing SCE under model $\theta$.
\end{itemize}
\end{thm}

First, Theorem \ref{thm:Global-Robustness} provides a microfoundation for the persistence of certain types of misspecified models. 
A model can persist against any arbitrary competing model as long as it leads to a p-absorbing SCE at which the model predictions are indistinguishable from the truth. It is worth noting that Theorem \ref{thm:Global-Robustness} does not depend on switching threshold $\alpha$ (as long as $\alpha>1$), meaning that the sets of models that can be locally and globally robust are the same for any level of switching stickiness. Thus, the switching threshold may only affect model robustness through changing the set of supporting priors. 

Second, the equivalence between (i) and (ii) offers a new perspective for understanding local and global robustness. If a model fails to be globally robust, the switcher need not go far to find an attractive alternative---models that do not persist against paradigm shifts are also vulnerable to local changes. The intuition lying behind this result is quite simple. Although the agent is restricted to contemplate only local perturbations, such perturbations are unconstrained and can be made towards the direction of higher asymptotic accuracy. Specifically, for any model $\theta$, we could construct a neighbor competing model $\theta'$ within its $\epsilon$-neighborhood such that $\theta'$ outperforms $\theta$ consistently unless $\theta$ induces a self-confirming equilibrium. To do this, we simply take the predictions of $\theta'$ to be a convex combination between the predictions of $\theta$ and the true DGP for every action in $\mathcal A$. Since the Kullback-Leibler (KL henceforth) divergence between any two probability distributions is convex, the mixture model $\theta'$ yields a strictly lower KL divergence than model $\theta$ unless the agent plays a self-confirming equilibrium under model $\theta$.\footnote{The Kullback-Leibler divergence of a density $q$ from another density $q^{\prime}$ is given by $D_{KL}\left(q\parallel q^{\prime}\right)=\int_{\mathcal{Y}}q\ln\left(q/q^{\prime}\right)\nu\left(dy\right)$. 
The KL divergence is an asymmetric non-negative distance measure between $q$ and $q^{\prime}$, which is minimized to zero if and only if $q$ and $q^{\prime}$ coincide almost everywhere. It is convex in the following sense: for any two pairs of densities $(q_1,q_2)$ and $(q_1',q_2')$ and any $\gamma\in [0,1]$, we have $D_{KL}\left(\lambda q_1+(1-\lambda)q_2\parallel \lambda q_1'+(1-\lambda)q_2'\right)\leq \lambda D_{KL}\left( q_1\parallel q_1'\right) + (1-\lambda) D_{KL}\left(q_2\parallel q_2'\right)$. }

Third, comparing Theorem \ref{thm:Global-Robustness} with Lemma \ref{lem:Correctly-full-characterization}, we learn that the demanding notion of global robustness amounts to the requirement that $\theta$ persists against one correctly specified model at some prior. Provided that $\theta$ can persist against a competing model that assigns a tiny probability to the true DGP, it also has the potential to persist against the true model, or any model with arbitrarily complex parameter space. Conversely, models that fail to be globally robust do not persist in the long term as long as the agent considers any correctly specified model. An immediate corollary is that any correctly specified model is globally robust since a model must persist against itself.\footnote{Notice that the Bayes factor between one model and itself is always 1.}

More importantly, misspecified models can be globally or locally robust as well. Corollary \ref{corr:Suff-robustness} provides a sufficient condition for the existence of a p-absorbing SCE, which can be easily verified by computing the set of all SCEs induced by the model. Say that a self-confirming equilibrium $\sigma$ is \emph{quasi-strict} if there exists a supporting belief $\pi$ such that any action outside the support of $\sigma$ is strictly suboptimal given $\pi$, i.e. $\supp{(\sigma)}=A_m^\theta(\pi)$, then it can be shown that any quasi-strict SCE is p-absorbing.\footnote{\citet{fudenberg2022misspecifications} show that quasi-strictness can play a different role in facilitating the persistence of a misspecified model. In an evolutionary framework, they find that a model that admits a quasi-strict Berk-Nash equilibrium can resist local mutations provided that they induce weakly worse-performing actions. This is because quasi-strictness ensures that local mutants do not play actions outside the support of the equilibrium and thus no new evidence is generated after mutation.}

\begin{cor}
\label{corr:Suff-robustness}Model $\theta$ is locally or globally robust for at least one prior if $\theta$ is correctly specified or there exists a quasi-strict SCE under $\theta$.
\end{cor}
Quasi-strictness ensures that at any belief sufficiently close to the equilibrium supporting belief, it is strictly optimal to play the actions prescribed by the equilibrium. Moreover, since the equilibrium is self-confirming, the $\theta$-modeler's belief stays within a small neighborhood of the supporting belief with positive probability. Taken together, this implies that starting from a prior sufficiently close to the equilibrium supporting belief, the $\theta$-modeler plays the SCE forever with positive probability, and thus the SCE is p-absorbing.
 

\subsection{\label{subsec:traps}Self-Defeating Models and Traps}
The characterization in Theorem \ref{thm:Global-Robustness} is clean and intuitive. From a high level, p-absorbingness ensures that the SCE is reachable from some full-support prior, and the self-confirming property ensures that model $\theta$ makes perfect predictions in the limit and thus persists against any competing model. However, this rough intuition leaves out new interesting challenges that arise due to the interaction between within-model learning and model switching. Such challenges are inherent to the multiple-model learning framework and thus may be of independent interest to future research pursuits on problems other than persistence and robustness.       
               
In general, for a dogmatic modeler, both his behavior and beliefs are endogenous and may mutually influence each other; for a switcher, in additional to behavior and beliefs within each model, the model choice is also endogenous and all three endogenous objects can influence one another. This interaction can cause problems that prevent a model from being locally or globally robust at a given prior even if the model admits a p-absorbing SCE. In particular, the outcome realizations that lead a dogmatic modeler to the SCE may in fact trigger a switch away from model $\theta$, rendering its adoption to be \emph{self-defeating}. I illustrate this issue in Example \ref{exa:endogenous-model-choice}. For simplicity, in Example \ref{exa:endogenous-model-choice} I take the competing model to be the true model, but a similar phenomenon can occur with a competing model arbitrarily close to the initial model.  

\begin{example}[Self-defeating Models]\label{exa:endogenous-model-choice}
In each period, an agent chooses from two tasks $a_t\in\{a^1,a^2\}$ and observes the outcome/payoff of the chosen task $y_t\in \{0,1\}$, where $0$ represents failure and $1$ represents success. The true DGP prescribes that successes and failures happen with equal probability $0.5$ for either task. Hence, the agent would be indifferent between the tasks if the true DGP was known. 

The agent holds a subjective model $\theta$ that presumes the success rate may depend on both the task type and his luck $\omega\in\Omega^\theta=\{\omega^1,\omega^2\}$, where $\omega^1$ represents good luck and $\omega_2$ represents bad luck (see Table \ref{tab:endogenous-model-choice}). Under model $\theta$, the agent believes Task 1 is risky and success occurs more often if he has good luck, while Task 2 is safe and its outcome is independent from his luck. Besides, the agent is overall \emph{pessimistic} under $\theta$ because the assumed success rate is always (weakly) lower than its true level. The agent believes that his luck is fixed and has a uniform prior over his luck, i.e. $\pi_0^\theta(\omega_1)=0.5$. His policy under $\theta$ prescribes Task 1 iff good luck is more likely than bad luck, i.e. $\pi_t^\theta(\omega_1)\geq 0.5$.\footnote{The uniform prior is assumed for simple exposition. The mechanism in this example does not depend on the fact that the agent starts off being exactly indifferent between the tasks.}  In addition, the agent entertains the competing model $\theta^\ast$ that correctly predicts the true success rate. Under model $\theta^\ast$, the agent is indifferent and always chooses Task 1 (see Table \ref{tab:endogenous-model-choice}). We consider the case where his switching threshold is given by $\alpha=1.1$. 

\begin{table}[]
    \centering
    \begin{tabular}{c|c c} \label{table:endogenous-model-choice}
         $q^\theta(1|a,\omega)$ & $\omega^1$ & $\omega^2$ \\ 
         \hline 
         $a^1$ & 0.5 & 0.3\\ 
         $a^2$ & 0.4 & 0.4 
    \end{tabular} \quad \quad
    \begin{tabular}{c|c } 
         $q^{\theta^\ast}(1|a,\omega)$ & $\omega^\ast$ \\
         \hline 
         $a^1$ & 0.5 \\
         $a^2$ & 0.5
    \end{tabular}
    \caption{Initial model $\theta$ and competing model $\theta'$ in Example \ref{exa:endogenous-model-choice}. }
    \label{tab:endogenous-model-choice}
\end{table}

Choosing Task 1 is a strict self-confirming equilibrium under $\theta$, supported by a degenerate belief at $\omega^1$. To see why, note that the risky task is strictly optimal when the agent believes his has good luck; meanwhile, the superstitious belief of good luck offsets the overall pessimism and thus the model correctly predicts the success rate. 

However, it turns out that model $\theta$ does not persist against model $\theta^\ast$ at the given uniform prior, because the outcome realization that leads to choosing the risky task $a^1$ also triggers a model switch. As illustrated in Figure \ref{fig:endogenous-model-choice}, if the first realized outcome is a failure, the agent believes that he is more likely to have bad luck and thus switches his task choice to the safe task $a^2$ (Scenario 1); if the first realized outcome is a success, the agent switches his model choice to the more optimistic model $\theta^\ast$ in the next period (Scenario 2). In Scenario 1, the safe task choice causes the agent to stop updating on his luck. As a result, the agent never switches back to the risky task $a^1$ as long as he remains under model $\theta$. Since $\theta$ is incorrectly pessimistic when $a^2$ is chosen, the agent eventually switches to the correct model $\theta^\ast$ and enters Scenario 2. Once Scenario 2 occurs, the agent switches back to the overall pessimistic model $\theta$ only under the circumstance that he observes more failures than successes. But if so, the resulted posterior $\pi_t^\theta$ assigns higher probability to bad luck than good luck, which again induces the agent to choose the safe task $a^2$, bringing the agent back to Scenario 1. Eventually, the agent must abandon model $\theta$ and adopt the competing model $\theta^\ast$ forever. Therefore, $\theta$ does not persist against $\theta^\ast$ under the given priors.

\begin{figure}
    \centering
    \tikzstyle{level 1}=[level distance=6.5cm, sibling distance=4cm]
    \tikzstyle{level 2}=[level distance=4cm, sibling distance=3.5cm]
    
    \tikzstyle{bag} = [text centered, draw=black]
    \tikzstyle{bag2} = [text centered]
    \tikzstyle{end} = [circle, minimum width=3pt,fill, inner sep=0pt]

    \begin{tikzpicture}[grow=right,sloped]
    \node[bag] {\begin{tabular}{c} belief $\pi_0^\theta=(\frac{1}{2},\frac{1}{2})$ \\ model $m_0=\theta$ \\ action $a_0=a^1$ \end{tabular} }
    child {
        node[bag] {\begin{tabular}{c} belief $\pi_1^\theta=(\frac{5}{8},\frac{3}{8})$ \\ Bayes factor $\lambda_1=\frac{5}{4}>\alpha $ \\ \textbf{model $m_1=\theta^\ast$} \\ action $a_1=a^1$ \end{tabular}}        
            child {
                node[bag2]
                    {Scenario 2}
                    edge from parent[->,white]  
            }
            edge from parent[->]  
            node[above] {success}
            node[below]  {$y_0=1$}
    }
    child {
        node[bag]{\begin{tabular}{c} belief $\pi_1^\theta=(\frac{5}{12},\frac{7}{12})$ \\ Bayes factor $\lambda_1=\frac{5}{6}<\alpha$ \\ model $m_1=\theta$ \\ \textbf{action $a_1=a^2$} \end{tabular}}  
        child {
                node[bag2]
                    {Scenario 1} 
                    edge from parent[->,white]  
            }
        edge from parent[->]   
            node[above] {failure}
            node[below]  {$y_0=0$}
    };
\end{tikzpicture}
    \caption{Scenario analysis in Example \ref{exa:endogenous-model-choice}.}
    \label{fig:endogenous-model-choice}
\end{figure}

\end{example}

What are the root causes of the self-defeating result in Example \ref{exa:endogenous-model-choice}? First, the agent's model choice and his belief on luck are tightly correlated given the particular structure of the models in Example \ref{exa:endogenous-model-choice}. In order for the agent to choose the risky task, he must believe in good luck more than bad luck, but the successes needed to induce this belief inevitably lead to a switch to the more optimistic competing model. Second, the agent's the task choice and model choice are too sensitive to the early outcome realizations. Since the agent's prior $\pi_0^\theta$ is relatively far away from the SCE supporting belief and the switching threshold $\alpha$ is relatively low, a single observation is powerful enough to sway the agent's task choice or the model choice. Last but not least, the safe choice constitutes an absorbing \emph{trap} in model $\theta$ because it causes the agent to stop updating his belief within $\theta$. Indeed, the agent can never choose the risky option under model $\theta$ again once the agent enters the trap.

The above diagnosis points out two different directions for breaking the self-defeating result. First, the agent may not fall into the trap if he starts with a prior sufficiently close to the SCE supporting belief. Specifically, if the agent starts out assigning more probability to good luck, then his task choice will be less sensitive to early failures; meanwhile, his initial model becomes overall less pessimistic and thus his model choice will be less sensitive to early successes. The proof of Theorem \ref{thm:Global-Robustness} uses precisely this idea to show that there exists a prior $\pi_0^\theta$ sufficiently close to the SCE supporting belief, so that the self-defeating behavior does not arise with probability 1.

The second direction is to remove any traps from the model while holding the prior fixed. If model $\theta$ predicts that the success rate still depends on the agent's luck even slightly for the safe task, then the agent does not stop updating on his luck upon choosing the safe task. If so, there is no longer a trap that locks in the agent's task choice. Consequently, model $\theta$ is not self-defeating---we can always construct a finite sequence of outcomes to make the agent eventually confident in model $\theta$ as well as attach high probability to having good luck (this is non-trivial and requires proof). To rule out  traps of the sort described in Example \ref{exa:endogenous-model-choice}, we can assume the model is \textit{identifiable} as defined below.
\begin{defn}
Model $\theta$ is \emph{identifiable} if the predictions of different parameters in $\theta$ are different for all actions, i.e. $q^\theta(\cdot|a,\omega)\not=q^\theta(\cdot|a,\omega')$ for all distinct $\omega,\omega'\in\Omega^\theta$ and all $a\in\mathcal A$.
\end{defn}
It turns out non-identifiability is not the only cause of traps. The other type of traps are more technical and arise when the p-absorbing SCE under model $\theta$ is not quasi-strict. In this case, there exists some action that is optimal given the equilibrium supporting belief but not self-confirming. Under certain policies $f^\theta$, these actions can also act like traps---once they are played, the agent can never go back to play the SCE actions under the same model.
Definition \ref{def:no-traps} collects the two no-trap conditions. In the next subsection, I characterize local and global robustness at a fixed prior under the no-trap conditions.

\begin{defn}\label{def:no-traps}
Model $\theta$ has \emph{no traps} if $\theta$ is {identifiable} and all p-absorbing SCEs (if exists) under $\theta$ are quasi-strict.
\end{defn}

\subsection{Robustness at a Fixed Prior\label{subsection:fixed priors}}

Theorem \ref{thm:Global-Robustness} characterizes \emph{which models} can be locally and globally robust based on their prediction accuracy at the induced equilibrium, but it remains silent about \emph{under which priors} these models are locally or globally robust. The analysis about self-defeating models in the preceding subsection suggests that there is no good answer to this problem when there are traps in the model, except that we know such priors do exist. When there are no traps, a natural conjecture is that a model that admits a p-absorbing SCE is locally and globally robust at all priors. In this subsection, I show that while this conjecture is correct for local robustness, global robustness requires the prior to be tight and concentrated around the p-absorbing SCEs.

Let us start by analyzing the simplest case where the outcomes are exogenously generated, or equivalently, $\mathcal A$ contains a single action $\bar a$. As illustrated by Example \ref{exa:single-action-prior}, in this special case, a model is locally robust at any prior if and only if its predictions contain the true outcome distribution; by contrast, a model is globally robust at a given prior if and only if the prior assigns probability weakly higher than $1/\alpha$ to the true outcome distribution.
\begin{example}[Exogenous data]\label{exa:single-action-prior}
Suppose the agent works on a single task $\mathcal A=\{\bar a\}$ and observes the failure/success of the task, $\mathcal Y = \{0,1\}.$ As in Example \ref{exa:endogenous-model-choice}, the true DGP prescribes that the success rate is 0.5. The agent holds a subjective model $\theta$, under which he presumes the success rate depends on his luck $\omega\in\Omega^\theta=\{\omega^1,\omega^2\}$, where $\omega^1$ represents good luck and $\omega^2$ represents bad luck (see Table \ref{table:exogenous-data}). Note that model $\theta$ is correctly specified because it predicts the true success rate of $0.5$ when the agent has good luck. Corollary \ref{corr:Suff-robustness} immediately implies that model $\theta$ is both locally and globally robust for at least one prior. 

\begin{table}[]
    \centering
    \begin{tabular}{c|c c} \label{table:exogenous-data}
         $q^\theta(1|a,\omega)$ & $\omega^1$ & $\omega^2$  \\ 
         \hline 
         $\bar a$ & 0.5  & 0.3
           
    \end{tabular} \quad \quad 
    \begin{tabular}{c|c c} \label{table:exogenous-data}
         $q^{\theta'}(1|a,\omega)$ & $\omega^1$ & $\omega^2$  \\ 
         \hline 
         $\bar a$ & 0.5  & 0.3$+\epsilon$
           
    \end{tabular} \quad \quad 
    \begin{tabular}{c|c c} 
         $q^{\theta^\ast}(1|a,\omega)$ & $\omega^\ast$ \\
         \hline 
         $\bar a$ & 0.5 
    \end{tabular}
    \caption{Initial model $\theta$ and competing model $\theta'$ in Example \ref{exa:single-action-prior}. }
    
\end{table}

Model $\theta$ is locally robust at all full-support priors. To see why, let us assume the agent entertains a nearby competing model $\theta'$ that predicts a success rate of $0.3+\epsilon$ under bad luck. When $\epsilon>0$ is small, model $\theta'$ is slightly more optimistic than model $\theta$. Since there is a single action, it is straightforward to calculate the Bayes factor, 
$$\frac{\ell_t(\theta')}{\ell_t(\theta)}= \frac{\pi_0^{\theta'}(\omega^1) 0.5^t + \pi_0^{\theta'}(\omega^2) (0.3+\epsilon)^{S_t}(0.7-\epsilon)^{F_t}}{\pi_0^\theta(\omega^1) 0.5^t + \pi_0^\theta(\omega^2) 0.3^{S_t}0.7^{F_t}},$$
where $S_t$ and $F_t$ are the number of successes and failures observed before period $t$ and $S_t+F_t=t$. As the agent accumulates more evidence, the likelihood of bad luck eventually vanishes as compared that of good luck in both models. In the limit, the Bayes factor is completely determined by the ratio of the prior odds of good luck, $\pi_0^{\theta'}(\omega^1)/\pi_0^{\theta}(\omega^1)$, which is close to $1$ when the priors are sufficiently close and thus is bounded above by $\alpha>1$. Therefore, model $\theta$ persists against model $\theta'$. This argument can be generalized to other nearby competing models.

By contrast, model $\theta$ is globally robust at a given prior if and only if the prior assigns probability weakly higher than $1/\alpha$ to good luck, i.e. $\pi_0^\theta(\omega^1)\geq 1/\alpha$. To illustrate, suppose the agent entertains the true DGP $\theta^\ast$ as the competing model. The Bayes factor is given by $$\lambda_t = \frac{\ell_t(\theta^\ast)}{\ell_t(\theta)}= \frac{0.5^t}{\pi_0^\theta(\omega^1) 0.5^t + \pi_0^\theta(\omega^2) 0.3^{S_t}0.7^{F_t}}.$$ When $\pi_0^\theta(\omega^1)\geq 1/\alpha$, the Bayes factor $\lambda_t$ is bounded above by $\alpha$ for any history, so the agent never switches to the competing model. Intuitively, this is because the explanatory power of model $\theta$ is at least $1/\alpha$-times as large as the true model but a switch is only triggered when the Bayes factor is strictly larger than $\alpha$. On the other hand, if $\pi_0^\theta(\omega^1)<1/\alpha$, the agent must abandon model $\theta$ at some point. Since the likelihood of bad luck eventually vanishes, the Bayes factor $\lambda_t$ converges to $1/\pi_0^{\theta}(\omega^1)$, which is strictly larger than $\alpha$.
\end{example}
The key driving force behind  Example \ref{exa:single-action-prior} is that the Bayes factor rule acts like the \emph{Occam's Razor}---it favors parsimonious models with tight priors and punishes complex models with diffuse priors. When the agent compares a model with high asymptotic accuracy and its local perturbation, their priors are similarly tight around similar DGPs. Sticky switching ($\alpha>1$) then implies that the agent remains under the same model with positive probability. However, model $\theta$ with a diffuse prior can fit significantly worse than a model with a prior concentrated around the true DGP. In this case, a forever switch to the competing model is inevitable. Notably, as switching becomes stickier ($\alpha$ grows larger), the agent's tolerance for diffuse priors also increases. 

When outcomes are endogenously generated by actions, an analogous condition of prior tightness is that the prior assigns probability weakly higher than $1/\alpha$ to the parameters that are in the support of a SCE supporting belief. But the validity of this generalization is \textit{a priori} unclear because of two complications. First, contrasting the case with exogenous data, here the parameter(s) in the support of a SCE supporting belief may not predict the true outcome distribution when non-SCE actions are being played. So we cannot directly bound the Bayes factor from above using the prior ratio as in Example \ref{exa:single-action-prior} for all action histories. Second, when there are multiple p-absorbing SCEs, the supports of different SCE supporting beliefs may not overlap. Hence, it is unclear whether the prior should be concentrated around one of the supporting beliefs or the whole set of the supporting beliefs.  

Despite the complications, I show in Theorem \ref{thm:concentrated-prior} that when model $\theta$ has no traps, prior tightness is sufficient and necessary for global robustness at a given prior. To state the result, I denote by $C^\theta$ the set of parameters in $\theta$ that support at least one p-absorbing SCE and refer to $C^\theta$ as the set of \emph{consistent} parameters in $\theta$. Formally, for each $\omega\in C^\theta$, there exists a p-absorbing SCE under $\theta$ with supporting belief $\delta_\omega$.\footnote{Note that when $\theta$ is identifiable, no parameters predict the same outcome distribution, and thus the supporting belief of any SCE must be pure.}  Notice that model $\theta$ admits a p-absorbing SCE if and only if $ C^\theta$ is not empty.

\begin{thm}
\label{thm:concentrated-prior} Suppose model $\theta$ has no traps, then the following are true: 
\begin{itemize}
    \item[(i)] Model $\theta$ is locally robust at all priors if and only if $ C^\theta\not=\emptyset$.
    \item[(ii)] Model $\theta$ is globally robust at prior $\pi_0^\theta$ if and only if $ C^\theta\not=\emptyset$ and $\pi_0^\theta(C^\theta)\geq 1/\alpha$. 
    \item[(iii)] Model $\theta$ is globally robust at all priors if and only if $ C^\theta=\Omega^\theta$. 
\end{itemize}

\end{thm}

While Theorem \ref{thm:Global-Robustness} unifies local and global robustness in terms of {which} models can be robust, Theorem \ref{thm:concentrated-prior} clarifies the distinction between these notions in terms of {when} these models are robust---local robustness is prior-free, but global robustness is prior-sensitive. Theorem \ref{thm:concentrated-prior} provides an exact quantification of how concentrated the prior must be on $C^\theta$ in order to support global robustness. In particular, the tightness of the prior (measured by the probability assigned to $C^\theta$) multiplied by the switching stickiness (measured by the threshold $\alpha$) must be weakly larger than 1. Therefore, if model $\theta$ admits a p-absorbing SCE, then for any fixed full-support prior, $\theta$ is globally robust as long as switching is sufficiently sticky ($\alpha$ sufficiently high). Alternatively, if we fix $\alpha$, global robustness holds at all priors if and only if $C^\theta=\Omega^\theta$. Conversely, when  $C^\theta\not=\Omega^\theta$, global robustness fails at any given full-support prior if switching is sufficiently easy ($\alpha$ sufficiently close to $1$). 

Taken together, Theorems \ref{thm:Global-Robustness} and \ref{thm:concentrated-prior} draw an interesting comparison between misspecified models and correctly specified models in terms of their robustness properties. On one hand, all correctly specified models are locally robust at all priors and globally robust at some prior since they have high asymptotic accuracy, which is achieved only by a subset of misspecified models. On the other hand, some misspecified models can be globally robust at more priors if they have a simple structure or induce a large set of SCEs.\footnote{Under identifiability, misspecified models can easily satisfy $C^\theta=\Omega^\theta$ if there exist multiple p-absorbing SCEs, but the only correctly specified model that satisfies this condition is the true model $\theta^\ast$. If $\theta$ is correctly specified, then there exists $\omega^\ast\in\Omega^\theta$ such that $q^\theta(\cdot|a,\omega^\ast)=q^\ast(\cdot|a)$ for all $a\in\mathcal A$. Identifiability then implies that for any $\omega\in\Omega^\theta$ and $\omega\not=\omega^\ast$, we must have $q^\theta(\cdot|a,\omega)\not=q^\ast(\cdot|a)$ for all $a\in\mathcal A$. Therefore, at most one parameter, namely $\omega^\ast$, can support a SCE, and thus $\Omega^\theta=C^\theta$ implies $\Omega^\theta=\{\omega^\ast\}$.}  I illustrate this interesting wedge in Application \ref{app:media bias}.

For completeness, I also consider the case where switching is perfectly non-sticky and $\alpha$ is exactly $1$. In this extreme case, the existence of a p-absorbing SCE is no longer sufficient for either local robustness or global robustness for at least one full-support prior. Corollary \ref{cor:alpha equals 1} provides an alternative characterization in place of Theorems \ref{thm:Global-Robustness} and \ref{thm:concentrated-prior}.

\begin{cor}\label{cor:alpha equals 1}
Suppose model $\theta$ has no traps and $\alpha=1$, then model $\theta$ is locally or globally robust if and only if $C^\theta=\Omega^\theta$.

\end{cor}
Corollary \ref{cor:alpha equals 1} has three implications. First, the set of models that can be locally robust or globally robust for at least one prior stays unchanged when the threshold $\alpha$ is strictly larger than $1$ but shrinks discontinuously at $\alpha=1$. Second, the gap between local and global robustness has closed when switching is perfectly non-sticky, since both require the prior to be fully concentrated around p-absorbing SCEs. Finally, this result, combined with Theorem \ref{thm:concentrated-prior}, uncovers the equivalence between two strong notions of robustness---global robustness when switching is non-sticky and global robustness at all priors---both of them characterized by a simple condition, $C^\theta=\Omega^\theta$.

\section{Constrained Local Robustness\label{sec:constrained local robustness}}
This section relaxes the requirement of local robustness by restricting the agent to nearby competing models within a constrained family. The motivation for the constraint comes from the observation that decision makers often do not abandon their basic framework of assumptions and principals when coming up with nearby competing theories.\footnote{For example, Kuhn argues that science experiences alternating phases of normal science and revolutions. While revolutions entail paradigm shifts, he argues that ``normal science, ...,  often suppresses fundamental novelties because they are necessarily subversive of its basic commitments" (Kuhn, 1996, p. 5).} Rather, they maintain fundamental assumptions and consider structured and directed changes to their original model. With this constraint, I show that a model can be locally robust even if it does not give rise to a p-absorbing self-confirming equilibrium. Rather, the model only needs to induce a p-absorbing Berk-Nash equilibrium that satisfies a local dominance property. As the family of models grows larger, however, the local dominance property becomes more demanding and eventually morphs into the self-confirming property.  

\subsection{Definition}
I define a parametric family of models with a \emph{meta-model}. A meta-model $\bar\theta$ consists of a profile of data-generating processes, $\left\{ q\left(\cdot|a,\omega\right)\right\}_{a\in\mathcal{A},\omega\in\Omega^{\bar\theta}}$, where $q$ is uniformly continuous over the \emph{meta-parameter set} $\Omega^{\bar\theta}$ for all $a\in\mathcal{A}$. The set $\Omega^{\bar\theta}$ can be any subset of a Euclidean space without the restriction of being finite or bounded. The meta-model $\bar\theta$ then generates an affiliated family of models $\Theta^{\bar\theta}$. 
\begin{defn}
The \textit{$\bar\theta$-family} of models is the set $\Theta^{\bar\theta}\subseteq\Theta$ such that 
\begin{equation}
\Theta^{\bar\theta}\coloneqq\{\theta\in\Theta:q^{\theta}\left(\cdot|a,\omega\right)\equiv q\left(\cdot|a,\omega\right)\text{ for all }\omega\in\Omega^{\theta}\subseteq\Omega^{\bar\theta}\text{ and all }a\in\mathcal{A}\}.
\end{equation}
\end{defn}
\noindent Two models $\theta$ and $\theta^{\prime}$ belong to the same family generated by $\bar\theta$ if they share the same mapping from parameters to DGPs as $\bar\theta$ but differ in their parameter sets.  We can conveniently measure the distance between any $\theta$ and $\theta'$ by the Hausdorff distance between their parameter sets, $\Omega^{\theta}$ and $\Omega^{\theta^{\prime}}$. I define an $\epsilon$-neighborhood of $\theta$ as follows, \begin{align}
N_{\epsilon}^{\bar\theta}\left(\theta\right) & \coloneqq\left\{ \theta^{\prime}\in\Theta^{\bar\theta}:d_{H}\left(\Omega^{\theta},\Omega^{\theta^{\prime}}\right)<\epsilon\right\}.
\end{align}
Note that this definition of nearby models is conceptually different from our previous definition based on non-parametric distance measures over sets of DGPs (see \eqref{defn:neighbor-models-local}). To capture the idea that the agent contemplates models with similar underlying structures, a proper metric over models should measure not only the distance in model predictions but also the distance between the parameter values. Two parameters $\omega$ and $\omega'$ in the meta-parameter set $\Omega^{\bar\theta}$ can be distant from each other but still correspond to extremely similar or even identical predictions. However, if $\omega$ and $\omega'$ are close in $\Omega^{\bar\theta}$, it follows from the continuity of $q$ that they must predict similar data-generating processes. The distance measure over priors is also significantly simpler than that in \eqref{defn:neighbor-priors-local}. The $\epsilon$-neighborhood of any $\pi\in\Delta\Omega^{\bar\theta}$ is given by
\begin{equation}
N^{\bar\theta}_\epsilon(\pi)\coloneqq \{\pi^\prime\in\Delta\Omega^{\bar\theta}:d_P (\pi,\pi^\prime)<\epsilon.\}
\end{equation}
A model is locally robust within the $\bar\theta$-family if it persists against every nearby model within that family under nearby priors. Its formal definition mirrors Definition \ref{def:local robustness}.
\begin{defn}[Constrained local robustness]
\label{def:local-robustness} Model $\theta\in\Theta^{\bar\theta}$ is \emph{$\bar\theta$-constrained locally robust} at prior $\pi_0^\theta$ if there exists $\epsilon>0$ such that $\theta$ persists against every competing model $\theta'\in N_{\epsilon}^{\bar\theta}\left(\theta\right)$ at priors $\pi_0^\theta$ and $\pi_0^{\theta'}$ for every ${\pi}_{0}^{\theta'}\in N^{\bar\theta}_\epsilon(\pi^\theta_0)$.
\end{defn}
    
It is worth noting that constrained local robustness is a flexible concept. One might think that since all models within the same family differ only in their parameter sets, the agent is limited to expanding, downsizing, or replacing the elements of the parameter set of her initial model, but this is not entirely true. By specifying different meta-models, we allow the agent to access vastly different sets of competing models. In addition, we can modify the parameter set of model $\theta$ to include additional dimensions without altering the predictions of model $\theta$. The addition of these new dimensions allows the agent to re-assess the validity of certain implicit assumptions and perspectives built into her initial model when considering competing models. Example \ref{exa:underestimation-risks} illustrates the flexibility of the approach. 
\begin{example}[Portfolio choice]

\label{exa:underestimation-risks} Consider an investor who chooses a portfolio from $N$ stocks. Her action $a_t=(a_t^1,...,a_t^N)\in\mathcal A=\{0,1\}^N$ specifies whether to invest in each of the $N$ stocks. At the end of each period, she obtains a CARA utility of $u\left(a_t,y_{t}\right)=1-e^{-\sum_{n=1}^N a_t^n y^n_{t}}$, where $y^n_t$ is the return of stock $n$. The true return of stock $n$ is $y_{t}^n=r^n+\xi^n_{t}$, where $r^n$ is the average return of stock $n$ and $\xi^n_t$ is the stock-specific random noise. Moreover, $\xi_t=\left(\xi_t^1,...,\xi^N_{t}\right)$ follows a multivariate normal distribution with zero mean and a covariance matrix $\Sigma=(v^{ij})$. 

The investor's initial model $\theta$ is very simple. She presumes that every stock has identically distributed and uncorrelated returns $y_t^n=r+\xi^n_{t}$, where $r$ is the average return of all stocks and $\xi_{t}\sim N\left(\boldsymbol{0},v\cdot \boldsymbol{I_n}\right)$ for some constant $v>0$. She conducts Bayesian updating to learn about $r$ and $v$ and has a finite parameter space $\Omega^\theta=\Omega^\theta_r\times \Omega^\theta_v\subseteq\mathbb R^2_+$.

Is the investor's model persistent? The answer to this question may depend on which simplifying assumption(s) she is willing to drop when considering competing models. For example, the investor may maintain her assumption of i.i.d. returns and investigate if the market is more volatile than what she initially assumes. In this case, we can specify the meta-model $\bar\theta$ to have an expanded parameter space along the dimension of the variance $v$, i.e. $\Omega^{\bar\theta}=\Omega^\theta_{r}\times \mathbb R_+$. Alternatively, the investor may question the validity of her assumption of identically distributed returns while maintaining the assumption that returns are uncorrelated. We can accommodate this consideration by rewriting the parameter space of $\theta$ as $ \Omega^\theta=\{((r^n),(v^{ij})):r^i=r^j,v^{ii}=v^{jj},v^{ij}=0\text{ and }(r^i,v^{ii})\in\Omega^\theta\text{ for all }i\not=j\}$. Note that this relabeling does not alter the predictions of model $\theta$, but allows us to embed $\theta$ into a larger parametric family with the following meta-parameter space, $$\Omega^{\bar\theta}=\underbrace{\mathbb R_+\times ...\times \mathbb R_+}_{r^1, r^2, ..., r^N}\times \underbrace{\mathbb R_+\times ...\times \mathbb R_+}_{v_{11},v_{22}, ..., v_{NN}}.$$
Of course, the investor may also be open to all possibilities within the normal distribution framework, in which case he further drops the no-correlation assumption and consider a meta-parameter space $\Omega^{\bar\theta}=\mathbb R_+^{2N}\times \mathbb R^{N(N-1)}$. 
\end{example}

\subsection{\label{subsec:Local-robustness-sufficient}Characterization}

In this subsection, I provide necessary conditions and sufficient conditions for constrained local robustness. New challenges emerge as a result of the constraint we impose over the competing models. First, models sufficiently close to a misspecified model must also be misspecified, which prevents us from using Lemma \ref{lem:Correctly-full-characterization}. Second, contrasting the case of local robustness, it is now infeasible to perturb model predictions unanimously towards the true DGP, since the perturbed model may not be in the constrained family. Therefore, there is no reason to believe that constrained locally robust models must induce a self-confirming equilibrium.

\paragraph{Necessary conditions.}We can still take inspiration from the characterization of local and global robustness. For model $\theta$ to be constrained locally robust, then at least in the limit, the agent should not find a nearby model within the same family consistently better fitting. That is, we can break down the characterization into two parts---the identification of a proper equilibrium concept and a good measure of prediction accuracy.

Characterizing the asymptotic behavior of a dogmatic modeler is one of the central questions in the misspecified learning literature. A major finding of the literature is that whenever the modeler's behavior stabilizes, the limit behavior must constitute a Berk-Nash equilibrium that I define below \citep{esponda2016berk,esponda2021asymptotic}. 
\begin{defn}
\label{def:Berk-Nash-Equilibria}Strategy $\sigma\in\Delta\mathcal{A}$ is a \textit{Berk-Nash equilibrium} (BN-E) under $\theta$ if there exists a supporting belief $\pi\in\Delta\Omega^{\theta}\left(\sigma\right)$ such that the following conditions hold.
\begin{itemize}
    \item[(i)] Optimality: $\sigma$ is myopically optimal against $\pi$, i.e. $\sigma\in\Delta A_m^\theta(\pi)$.
    \item[(ii)] KL-minimization: every $\omega$ in $\supp{\pi}$ is a minimizer of the $\sigma$-weighted KL divergence, 
    $$\omega\in\argmin_{\omega^{\prime}\in\Omega^{\theta}}\sum_{\mathcal{A}}\sigma\left(a\right)D_{KL}\left(q^{\ast}\left(\cdot|a\right)\parallel q^{\theta}\left(\cdot|a,\omega^{\prime}\right)\right).$$
\end{itemize}
\end{defn}
For convenience, denote by $\Omega^{\theta}\left(\sigma\right)$ the set of all KL-minimizers at $\sigma$ under $\theta$. Intuitively, these parameters yield the closest match to the true DGP among the parameters of model $\theta$ when the agent plays $\sigma$. Besides myopic optimality, a Berk-Nash equilibrium requires that the supporting belief $\pi$ takes support on the KL-minimizers.  Every model admits at least one Berk-Nash equilibrium \citep{esponda2016berk}. Note that a SCE is a special form of a BN-E---any consistent parameter must also minimize the KL divergence.

Next, I turn to the choice of the prediction accuracy order. If model $\theta$ is constrained locally robust, then the predictions of $\theta$ should be more accurate than any sufficiently close competing model in the same family after the agent settles down with a Berk-Nash equilibrium. A natural choice is to compare the KL divergence of the predicted outcome distributions. 

\begin{defn}
\label{def:locally-KL-minimizing}Model $\theta$ is \textit{locally
KL-minimizing} at strategy $\sigma$ within the $\bar\theta$-family if there exists 
$\epsilon>0$ such that $\sum_{\mathcal A} \sigma(a)\mathbb E \left(\ln\frac{q(\cdot|a,\omega^\prime)}{q(\cdot|a,\omega)}\right)\leq 0$ for all $\omega\in\Omega^{\theta}\left(\sigma\right)$, $\omega^{\prime}\in\Omega^{\bar\theta}\cap B_{\epsilon}\left(\Omega^{\theta}\left(\sigma\right)\right)$.\footnote{Note that this is equivalent to the requirement that the KL divergence associated with $\omega$ is lower than that of $\omega'$, i.e. $D_{KL}\left(q^\ast(\cdot|a)\parallel q(\cdot|a,\omega)\right)\leq D_{KL}\left(q^\ast(\cdot|a)\parallel q(\cdot|a,\omega')\right)$.}
\end{defn}
In words, this condition says that any KL-minimizer in the parameter set of model $\theta$ is also a local KL-minimizer within the meta-parameter space $\Omega^{\bar\theta}$. Theorem \ref{thm:Local-lobustness-w-convergence} shows that if the action of a $\theta$-modeler converges, constrained local robustness indeed requires the existence of a p-absorbing BN-E and local KL-minimization. 
\begin{thm}
\label{thm:Local-lobustness-w-convergence}Suppose that the action of a $\theta$-modeler almost surely converges under all full-support priors. If model $\theta\in\Theta^{\bar\theta}$ is $\bar\theta$-constrained locally robust, then it must admit a pure p-absorbing BN-E $\sigma$ at which $\theta$ is locally KL-minimizing within the $\bar\theta$-family.
\end{thm}

There is one caveat: Theorem \ref{thm:Local-lobustness-w-convergence} is established with the presumption that the action sequence of a $\theta$-modeler converges over time, but this may not the case in general. When the agent holds a misspecified model that persists against a correctly specified model, her action must converge to the support of a self-confirming equilibrium (Lemma \ref{lem:Correctly-full-characterization}). But this convergence property is lost once we shift our focus to misspecified models. When the behavior of a $\theta$-modeler does not converge to (the support of) any Berk-Nash equilibrium, model $\theta$ may persist even without being locally KL-minimizing at any BN-E. I show that Theorem \ref{thm:Local-lobustness-w-convergence} still holds without the convergence assumption as long as $\mathcal{A}$ is binary (see Theorem \ref{thm:Local-dim2-necessary} in Appendix \ref{sec:Proofs-Main-Results}).  

The majority of the misspecified models in the literature features convergence of behavior. In most cases, specific assumptions over the types of misspecification and the outcome distributions are imposed to ensure convergence of actions or action frequencies (\citet{nyarko1991learning,heidhues2018unrealistic,he2022mislearning,bagindin2021overconfidence}). In the Appendix, I extend Theorem \ref{thm:Local-lobustness-w-convergence} to accommodate situations where the agent's action sequence does not converge but her action frequency does.\footnote{Formally, given a finite action space $\mathcal{A}$ and an action sequence $\left(a_{1},a_{2},...\right)$, the action frequency sequence is given by $\left(\sigma_{t}\right)_{t}$, where $\sigma_{t}\left(a\right)=\frac{1}{t}\sum_{\tau=1}^{t}1_{\left\{ a_{t}=a\right\} }$. \citet{esponda2021asymptotic} establish global almost-sure convergence of a dogmatic modeler's action frequency to a  BN-E if it is ``globally attracting'', where global attractiveness is defined based on a differential equation that describes the evolution of the action frequency. }  In those environments, Theorem \ref{thm:Local-lobustness-w-convergence} provides a simple criterion to determine if a given model is constrained locally robust.

\paragraph{Sufficient conditions.}
It remains to be examined whether the condition in Theorem \ref{thm:Local-lobustness-w-convergence} is sufficient for constrained local robustness. It turns out that higher accuracy in terms of lower KL divergence is a little weak for this purpose. In particular, the predictions of model $\theta$ can be shown to be more accurate than those of neighbor models in the limit but not {consistently} so starting from any given period, but the latter is critical in ensuring the agent does not switch to the competing model before the Berk-Nash equilibrium is reached.\footnote{The reason is more technical than conceptual: as pointed out by \cite{frick2021belief}, when a distribution $q$ yields lower KL divergence than $q'$, their log-likelihood ratio $\ln (q'/q)$ constitutes a supermartingale but this supermartingale can be unbounded below, preventing us from invoking the relevant maximal inequalities. } \cite{frick2021belief} develops a slightly stronger order of prediction accuracy that partially restores the supermartingale argument. I build on their  order and define a local dominance property to compare the prediction accuracy across models.

\begin{defn}\label{def:local-dominance}
Model $\theta$ is \textit{locally dominant} at strategy $\sigma$ within the $\bar\theta$-family if there exists $\epsilon,d>0$ such that $\mathbb E \left(\frac{q(\cdot|a,\omega^\prime)}{q(\cdot|a,\omega)}\right)^d\leq 1$ for all $\omega\in\Omega^\theta(\sigma)$, $\omega^\prime\in\Omega^{\bar\theta}\cap B_\epsilon(\Omega^\theta(\sigma))$, and $a\in\supp (\sigma$). 
\end{defn} 

In words, this condition requires the existence of some positive $d$ such that for the $d$-th power of the likelihood ratio between any neighbor parameter $\omega'$ in the meta-parameter space and any KL-minimizer $\omega$ has an expectation weakly lower than 1, whenever the agent plays the BN-E actions. Local dominance comes for free when $\sigma$ is a self-confirming equilibrium, because the expected likelihood ratio between any $\omega'$ and any consistent $\omega$ is always 1. 
Local dominance strengthened local KL-minimization in two aspects. First, local KL-minimization only compares nearby parameters at the mixed action $\sigma$ while local dominance makes a comparison at each action in the support of $\sigma$. Second, fixing an action $a$, as \cite{frick2021belief} point out, if $\mathbb E \left(\frac{q(\cdot|a,\omega^\prime)}{q(\cdot|a,\omega)}\right)^d\leq 1$ for any $d>0$, then we immediately have $\mathbb E \left(\ln\frac{q(\cdot|a,\omega^\prime)}{q(\cdot|a,\omega)}\right)< 0$.\footnote{The converse is not true: even if the inequality based on KL divergence holds in a small neighborhood of $\omega$, this does not imply there exists $d>0$ such that the inequality based on likelihood ratio holds uniformly for all $\omega^\prime$ in the neighborhood.}

\begin{thm}
\label{thm:Local-robustness-sufficient}Suppose model $\theta\in\Theta^{\bar\theta}$ admits a pure p-absorbing BN-E $\sigma$ at which $\theta$ is locally dominant within the $\bar\theta$-family. The following are true: 
\begin{itemize}
    \item[(i)] Model $\theta$ is $\bar\theta$-constrained locally robust.
    \item[(ii)] If $\theta$ has no traps, then it is $\bar\theta$-constrained locally robust at all full-support priors.
\end{itemize}
\end{thm}

Theorem \ref{thm:Local-robustness-sufficient} confirms our conjecture that the existence of a pure p-absorbing Berk-Nash equilibrium and local dominance are sufficient for constrained local robustness. Similar to local robustness, the notion has no prior tightness requirement under the no-trap conditions.\footnote{Note that we have to replace the SCE in the no-trap conditions with a BN-E. } Theorem \ref{thm:Local-robustness-sufficient} can be generalized to the case with a mixed p-absorbsing BN-E, but the statement is much more involved.\footnote{Additional conditions are needed if $\sigma$ is not pure. Suppose $\sigma$ is a p-absorbing mixed BN-E. Unless $\sigma$ is self-confirming, when the agent only plays actions in the support of $\sigma$, the parameters that empirically best fit the observed data can change with the empirical action frequency and are not necessarily be given by $\Omega^\theta (\sigma)$. Hence, the likelihood ratio between a parameter in $\Omega^{\bar\theta}$ and $\theta$ cannot be bounded using the likelihood ratio between that parameter and a KL-minimizer in $\Omega^\theta(\sigma)$. To generalize Theorem \ref{thm:Local-robustness-sufficient}, we need the existence of a p-absorbing mixed BN-E $\sigma$ such that $\Omega^\theta(\sigma^\prime)=\Omega^\theta(\sigma)$ for each $\sigma^\prime\in\Delta\supp(\sigma)$ and $\theta$ is locally dominant at $\sigma$ within the $\bar\theta$-family. } 
Analogously, we can characterize p-absorbingness by quasi-strictness. Since we focus on a pure equilibrium, this further simplifies to strictness.

\begin{cor}
\label{cor:Local-robustness-sufficient}Model $\theta\in\Theta^{\bar\theta}$ is $\bar\theta$-constrained locally robust if $\theta$ admits a strict BN-E at which $\theta$ is locally dominant within the $\bar\theta$-family. 
\end{cor}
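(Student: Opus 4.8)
The plan is to reduce this corollary to Theorem \ref{thm:Local-robustness-sufficient} by showing that a uniformly strict BN-E automatically satisfies the two hypotheses of that theorem that are not assumed outright—namely, being pure and being p-absorbing. Local dominance of $\theta$ at $\sigma$ w.r.t. $\Omega^q$ is given directly, so once purity and the p-absorbing property are verified, the conclusion follows by a single invocation of Theorem \ref{thm:Local-robustness-sufficient}. Purity is immediate: by the definition in Section \ref{sec:Berk-Nash-Equilibria}, a BN-E is \emph{uniformly strict} exactly when it is pure and uniformly quasi-strict, so $\sigma$ places probability one on a single action $a^\ast$ and $\supp(\sigma)=\{a^\ast\}$. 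This is precisely why the corollary must require uniform strictness rather than the weaker uniform quasi-strictness used in the global case (Corollary \ref{corr:Suff-robustness}): Theorem \ref{thm:Local-robustness-sufficient} needs a \emph{pure} p-absorbing BN-E.

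The substantive step is to establish that a uniformly strict BN-E is p-absorbing, which is exactly the fact already exploited in the proof of Corollary \ref{corr:Suff-robustness}. Theorem 2 of \citet{fudenberg2021limit} shows that a uniformly strict BN-E is uniformly stable: for any target probability $\kappa<1$ there is a prior sufficiently concentrated on the KL minimizers $\Omega^\theta(\sigma)$ under which the dogmatic modeler's action sequence converges to $a^\ast$ with probability at least $\kappa$. I would then translate this convergence statement into Definition \ref{def:p-absorbing}: convergence of the action process to the single action $a^\ast$ means that, on a positive-measure event, there is a finite (random) time $T$ after which $a_t=a^\ast\in\supp(\sigma)$ for all $t\geq T$, which is exactly what it means for $\sigma$ to be p-absorbing.

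With purity and p-absorbingness in hand, and local dominance assumed, $\sigma$ satisfies every hypothesis of Theorem \ref{thm:Local-robustness-sufficient}, which yields $q$-constrained local robustness of $\theta$, completing the proof. The only point requiring care—and the main (minor) obstacle—is aligning the quantifiers and the strength of the two statements: \citet{fudenberg2021limit} deliver convergence of the \emph{action process} with arbitrarily high probability from some concentrated prior, whereas Definition \ref{def:p-absorbing} asks only for \emph{some} prior and policy under which, with positive probability, only actions in $\supp(\sigma)$ are eventually played. Because $\sigma$ is pure these two conclusions coincide, so no gap arises; one simply notes that the existential quantifier over priors in Definition \ref{def:p-absorbing} is met by the concentrating prior furnished by uniform stability.
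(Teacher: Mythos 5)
Your proof is correct and follows essentially the same route the paper intends: the corollary's proof is left implicit precisely because, as the paper notes in its discussion of Corollary \ref{corr:Suff-robustness}, Theorem 2 of \citet{fudenberg2021limit} implies that a uniformly strict BN-E is p-absorbing, purity holds by definition of uniform strictness, and then Theorem \ref{thm:Local-robustness-sufficient} applies directly. Your quantifier check (high-probability action convergence from concentrated full-support priors versus positive probability of eventually playing only $\supp(\sigma)$ under some full-support prior, per Definition \ref{def:p-absorbing}) is exactly the right point to verify and introduces no gap.
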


\section{\label{sec:Applications}Applications}

I present two applications to demonstrate how results in this paper uncover new insights about the persistence of misspecified models. The first application illustrates our earlier finding that simple misspecified models may have better robustness properties than some correctly specified models. The second application revisits the comparison between over- and underconfidence in more general environments. 

\subsection{Media Bias, Extremism, and Polarization\label{app:media bias}}
In this application, I consider a stylized model of media consumption and demonstrate how misconceptions about media bias \citep{mediabias2005} can lead to stable polarization in political views despite no individual partisan bias. The misspecified model with these misconceptions and an extremism bias is globally robust regardless of the initial conditions. Even more surprisingly, people may abandon a correctly specified model, switch to such a misspecified model and then get stuck forever.

The agent has access to three media outlets and in each period she chooses one to consume, $\mathcal A=\{a^L,a^M,a^R\}$. The media outlets are indexed by their political leanings, left-wing, neutral, or right-wing. Each media outlet delivers two types of news, $\mathcal Y=\{l,r\}$, where $l$ represents good stories for the leftists and $r$ represents good stories for the rightists. The unknown state of the world $\omega\in\Omega=\{\omega^L,\omega^M,\omega^R\}$ governs the fraction of $l$ and $r$ stories happened in the real world and it remains fixed throughout the life of the agent. In particular, 60\% of the stories are $l$ stories ($r$ stories) in state $\omega^L$ ($\omega^R$), while an equal share of $l$ and $r$ stories happen in state $\omega^M$. The three media outlets differ in their ways of news reporting: in each state of the world, media $a^M$ truthfully reports the stories without bias, media $a^L$ selectively reports $l$ more than media $a^M$, and media $a^R$ selectively reports $r$ more than media $a^M$. The left panel of Table \ref{table:media-stories-true} summarizes the true fraction of $l$ stories reported by the media in different states. We restrict attention on the world in state $M$, where the true fractions of $l$ stories reported by the three media are given by $(0.6,0.5,0.4)$. 

\begin{table}[t]\label{table:media-stories-true}
    \centering
\begin{tabular}{c|c c c}
    $q^\theta(l|a,\omega)$ & $\omega^L$ & $\omega^M$ & $\omega^R$  \\ \hline
     $a^L$ & 0.7 & 0.6 & 0.5 \\
     $a^M$ & 0.6 & 0.5 & 0.4 \\
     $a^R$ & 0.5 & 0.4 & 0.3 
\end{tabular}\quad\quad 
\begin{tabular}{c|c c}
    $q^{\hat\theta}(l|a,\omega)$ & $\omega^L$ & $\omega^R$  \\ \hline
    $a^L$ & 0.6 & 0.5 \\
    $a^M$ & 0.5 & 0.5 \\
    $a^R$ & 0.5 & 0.4 
\end{tabular}
\caption{The left panel summarizes the true fraction of $l$ stories reported by each media outlet in each state of the world. It is also a description of the correctly specified model $\theta$. The right panel describes the predictions of a misspecified model $\hat\theta$.  
}
\end{table}

In this exercise, we focus attention on the comparison between two different models $\theta$ and $\hat\theta$ that I describe in Table \ref{table:media-stories-true}. Model $\theta$ is correctly specified: a $\theta$-modeler realizes that $\omega^M$ is a possible state of the world and are fully aware of the bias of both the left-wing and the right-wing media outlets. By contrast, model $\hat\theta$ is misspecified in two aspects. First, $\hat\theta$ features \emph{extremism} because it only recognizes the possibility of the extreme states $\omega^L$ and $\omega^R$. Second, $\hat\theta$ features \emph{naivety} about media bias: a $\hat\theta$-modeler underestimates the selective reporting bias of the left-wing $a^L$ and right-wing media $a^R$, and also underestimates the informativeness of the neutral media. As a result, when a $\hat\theta$-modeler subscribes to the left-wing media and finds that 60\% of the stories are good stories for leftists, she does not interpret it as evidence for the middle state $\omega^M$ (which does not exist in her extreme worldview), but treats it as evidence for the left state $\omega^L$; a similar logic applies to the right-wing media. She also mistakenly thinks that the reporting of the neutral media is totally uninformative about the state.

To highlight the core mechanism, I abstract away from specifying the payoff structure and outline the minimal assumptions that allow us to apply the characterization theorems in Section \ref{sec:Robustness}. It is straightforward to verify using Table \ref{table:media-stories-true} that the SCE supporting beliefs mentioned in Assumption \ref{assu:media bias} are indeed consistent.

\begin{assumption}\label{assu:media bias}
When the true state is $\omega^M$, the following are true:
\begin{itemize}
    \item[(i)] Model $\theta$ admits a unique SCE $a^M$ and it is strict, supported by belief $\delta_{\omega^M}$. 
    \item[(ii)] Model $\hat\theta$ admits only two strict SCEs, $a^L$ and $a^R$, supported by $\delta_{\omega^L}$ and $\delta_{\omega^R}$, respectively.
\end{itemize} 
\end{assumption}

Assumption \ref{assu:media bias} is natural and intuitive. With the correctly specified model $\theta$, the agent infers the true state and subscribes to the neutral media. With the misspecified model $\hat\theta$, however, the agent develops partisan bias and only subscribes to the media biased towards her political belief. The choices of the agent can be justified as the result of maximizing the sum of emotional and informational value from news consumption.\footnote{A micro-foundation is provided in Appendix \ref{sec:Online-Appendix}.}

Model $\hat \theta$ has an advantage over model $\theta$ due to its extremeness. Since both models $\theta$ and $\hat\theta$ admit at least one SCE,  Theorem \ref{thm:Global-Robustness} tells us that both models are globally robust at some prior. Interestingly, Theorem \ref{thm:concentrated-prior} implies a counter-intuitive result (see Proposition \ref{prop: media-bias-1} below): model $\theta$ is globally robust only when the associated prior assigns high enough probability to the true state $\omega^M$, while model $\hat\theta$ is globally robust at \emph{all} priors. In other words, model $\hat\theta$ is globally robust in a \emph{robust} way.

\begin{prop}\label{prop: media-bias-1}
Fix any $\alpha>1$. Model $\theta$ is globally robust at prior $\pi_0^\theta$ if and only if $\pi_0^\theta(\omega^M)\geq 1/\alpha$, while model $\hat\theta$ is globally robust at all priors. 
\end{prop}

Despite being misspecified, model $\hat\theta$ has a stronger global robustness property, because its narrative is simple, coherent, and balanced. To see this, notice that all parameter values in model $\hat\theta$ are consistent, i.e. $C^{\hat\theta}=\{\omega^L,\omega^R\}=\Omega^{\hat\theta}$. This makes it possible for model $\hat\theta$ to persist against any competing model. For example, model $\hat\theta$ can outperform a left-biased competing model in explaining the data when the agent happens to read a series of $r$ stories, and similarly it can outperform a right-biased competing model when the agent happens to read a series of $l$ stories. If the competing model is unbiased and correctly specified such as model $\theta$, model $\hat\theta$ still persists because of its simplicity.
 
Proposition \ref{prop: media-bias-1} characterizes the robustness properties of $\theta$ and $\hat\theta$ separately with the implicit assumption that they are the initial model choice of a switcher. What if a switcher originally adopts $\theta$ and entertains $\hat\theta$ as the competing model? Whether she will abandon $\theta$ in favor of $\hat\theta$ is \textit{a priori} unclear. While $\theta$ may not be globally robust at a given prior, this only tells us that $\theta$ does not persist against some competing model, but this competing model may not be $\hat\theta$.  Surprisingly, as I show in Proposition \ref{prop: media-bias-2}, $\hat\theta$ indeed replaces $\theta$ with positive probability if the switching threshold is low.

\begin{prop}\label{prop: media-bias-2}
Fix any full-support priors $\pi_0^\theta,\pi_0^{\hat\theta}$ and any $\alpha< 1/\pi_0^\theta(\omega^M)$. In the switcher's problem $(\theta,\hat\theta,\pi_0^\theta,\pi_0^{\hat\theta})$, the model choice $m_t$ eventually equals $\hat\theta$ with positive probability.
\end{prop}

In summary, this application generates three novel insights about news consumption and political beliefs. First, extremism and naivety about media bias go hand in hand and their persistence is robust against arbitrary competing narratives. Second, individuals may abandon their correct models and switch to incorrect alternatives because of their extremeness/simplicity. Third, even though the extreme and naive model has no built-in political bias, individuals who hold such a model gradually develop a strong partisan bias over time. The direction of the partisan bias is random and path-dependent, leading to long-term political polarization.

\subsection{\label{subsec:Positive-Negative}Overconfidence and Underconfidence}

In this application, I compare the robustness properties of over- and underconfidence in more general environments. I restrict attention on the prior-free local robustness notions since the interesting difference between over- and underconfidence only concerns the induced equilibria. I show that under natural assumptions, any level of overconfidence is locally robust while underconfidence is locally robust only on a union of unconnected intervals. 

This result breaks the symmetry between overconfidence and unconfidence and provides a novel mechanism for why we might expect one bias to be more persistent than the other. A plethora of evidence in psychology and economics suggests that overconfidence is more prevalent than underconfidence, and many hold the view that this is because agents derive ego utility from holding overconfident beliefs about their own positive traits \citep{brunnermeier2005optimal,koszegi2006ego,oster2013optimal}. By contrast, I provide a reason rooted in the learning environment itself: overconfidence has better robustness properties than underconfidence when the agent can switch models. 

As in our motivating example, an agent chooses effort $a_t$ from a finite set $\mathcal A$ in each period. The agent has payoff $u(a_t,y_t)=y_t$, where $y_t$ is the output of his work, including possibly any cost of effort. The output takes the form of $y_t=g(a_t,b^\ast,\omega^\ast)+\eta_t$, where function $g$ is twice continuously differntiable and strictly increasing in $b$ and $\omega$, $b^\ast\in[\underline b,\overline b]$ represents the agent's ability, and $\omega^\ast\in[\underline\omega,\overline\omega]$ captures a fundamental of the outside environment, such as the market demand or the quality of the agent's organization, and $\eta_t$ follows a known zero-mean normal distribution. The output function is strictly concave in effort $a_t$. In addition, following \cite{heidhues2018unrealistic}, I assume that the optimal effort decreases in the fundamental and weakly decreases in one's ability, as captured by Assumption \ref{assu:application-2}.\footnote{The assumption that effort and the fundamental are complements are natural but not critical. If we alter the orientation of $a_t$, then effort and the fundamental become substitutes, but all results remain unchanged. The important assumption that leads to positive belief reinforcement for overconfidence and negative belief reinforcement for underconfidence is that $g_{a\omega}$ has a different sign or is much larger than $g_{ab}$.  }
\begin{assumption}\label{assu:application-2} Function $g$ satisfies  $g_{ab}\coloneqq \frac{\partial^2 g}{\partial a\partial b}\leq0$ and $g_{a\omega}\coloneqq \frac{\partial^2 g}{\partial a\partial \omega}>0$.
\end{assumption}

I consider misspecified models that assign probability $1$ to some $\hat b\in[\underline b,\overline b]$ which deviates from its correct value. The agent is dogmatically overconfident about his ability when $\hat b>b^\ast$ and dogmatically underconfident when $\hat b<b^\ast$. To avoid trivial cases of non-robustness, I focus on models whose parameter sets are \emph{complete}: if the model assigns probability 1 to $\hat b$, then  for all $a\in\mathcal A$, the set $\Omega^\theta$ contains the value $\hat\omega$ that satisfies $g(a,\hat b,\hat \omega)=g(a,b^\ast,\omega^\ast)$. 

Under Assumption \ref{assu:application-2}, beliefs about the fundamental are \emph{positively reinforcing} when the agent is overconfident and \emph{negatively reinforcing} when underconfident, as summarized below. 
    
\begin{property}[Belief reinforcement]\label{property:application-2}
When the agent is overconfident (underconfident), higher beliefs lead to higher optimal actions, i.e. $\max A^\theta(\delta_{\omega^\prime})\leq\min A^\theta(\delta_{\omega^\dprime})$ for all $\omega^\dprime>\omega^\prime$, and higher actions leads to higher (lower) beliefs, i.e. $\max \Omega^\theta(\delta_{a^\prime})\leq\min \Omega^\theta(\delta_{a^\dprime})$ ($\min \Omega^\theta(\delta_{a^\prime})\geq\max \Omega^\theta(\delta_{a^\dprime})$) for all $a^\dprime>a^\prime$.
\end{property}

The first part that higher beliefs induce higher actions follows from the assumption that one's effort and the fundamental are complements. The intuition for the second part is similar to the motivating example: as a result of overconfidence (underconfidence), he underestimates (overestimates) the fundamental; when a higher action is played, the return to the fundamental $\omega$ is higher because $g_{a\omega}>0$, and thus the positive (negative) gap between the true state $\omega^\ast$ and the inferred fundamental $\hat \omega$ should be smaller such that expectations meet the reality, implying that the inferred state $\hat \omega$ is larger (smaller). Proposition \ref{prop:application-2} characterizes the robustness properties for different self-perceptions. 


\begin{prop}\label{prop:application-2}
Suppose model $\theta$ has a dogmatic self-perception $\hat b$ and a complete parameter space, then the following are true:
\begin{itemize}
    \item[(i)] Model $\theta$ with any level of overconfidence $\hat b>b^\ast$ is locally robust.
    \item[(ii)] There exists a strictly decreasing sequence $\beta_N<...<\beta_1<\beta_0=b^\ast$ such that, model $\theta$ with underconfidence is locally robust if $\hat b\in(\beta_{2k+1},\beta_{2k})$ for any $k\in\mathbb N$ and not locally robust if $\hat b\in(\beta_{2k},\beta_{2k-1})$ for some $k\in\mathbb N_+$.
\end{itemize}
\end{prop}

Proposition \ref{prop:application-2} shows that overconfidence is locally robust, but underconfidence is only locally robust on unconnected intervals.
The rest of this section illustrates the mechanism behind this result. First note that in this environment, if any pure Berk-Nash equilibrium induced by model $\theta$ must be self-confirming because the agent can, by the completeness of the parameter set, perfectly justify his observations by forming an incorrect belief over the fundamental. By contrast, a mixed Berk-Nash equilibrium can never be self-confirming in this environment---the agent necessarily finds his self-perception $\hat b$ inconsistent with the observations because no single value of the fundamental can reconcile the outcome distributions at more than two effort levels when $\hat b\not=b^\ast$. It can be shown that any competing model with a self-perception that is slightly closer to the truth would fit better uniformly at any effort levels. Therefore, Proposition \ref{prop:application-2} follows from Theorem \ref{thm:Global-Robustness}, provided that we can show overconfidence ensures the existence of a pure and p-absorbing BN-E while underconfidence sometimes gives rise to mixed BN-Es only. 

To illustrate, suppose the agent chooses from three effort levels $a^\prime<a^\ast<a^\dprime$ where $a^\ast$ is the unique optimal effort level when the agent knows the true DGP. Now consider an overconfident agent with $\hat b>b^\ast$ and suppose there exists a mixed BN-E in which the agent finds both $a^\prime$ and $a^\ast$ optimal against a supporting belief $\delta_{\hat \omega}$. Then for any assessment about the fundamental that is lower than $\hat\omega$, he strictly prefers the lower effort $a^\prime$, which, due to positive reinforcement, implies that he indeed finds a lower assessment of the fundamental more accurate. Therefore, effort $a^\prime$ must also constitute a SCE and, since it is strict, Corollary \ref{corr:Suff-robustness} implies the model is locally robust. 

Next let us turn to an underconfident agent with $\hat b<b^\ast$. Suppose there exists a mixed BN-E in which the agent finds both $a^\ast$ and $a^\dprime$ optimal against belief $\delta_{\hat \omega}$. Then for any higher assessment of $\omega$, he strictly prefers the higher effort $a^\dprime$, which, due to negative reinforcement, induces his belief over $\omega$ to drift downwards; by contrast, for any lower assessment of $\omega$, he strictly prefers the lower effort $a^\ast$, which induces his belief to drift upwards. Therefore, the mixed BN-E is the only BN-E and thus the model is not locally robust. That being said, as $\hat b$ further decreases, the agent might find the high effort $a''$ strictly optimal, upon which the unique BN-E becomes pure and thus self-confirming. This then gives rise to unconnected intervals of non-robustness and robustness as described in Proposition \ref{prop:application-2}.

\section{\label{sec:extension}Extensions and Discussion}

\subsection{\label{sec:multiple-competing-models}Multiple Competing Models}
This extension explores the consequence of the agent entertaining multiple competing models. The framework described in Section \ref{sec:Framework} could be easily extended to accommodate more than one competing models. Let $\Theta'\subseteq \Theta$ denote the finite subset of competing models that the agent entertains in the beginning and $\Theta^\dagger\coloneqq \Theta' \cup \{\theta\}$ denote the set of all models entertained. In the beginning of each period $t$, she compares her current model against all alternatives and switches to the one with the highest likelihood ratio if it exceeds the switching threshold $\alpha$. Specifically, the agent calculates the likelihood ratios between each model in $\Theta^\dagger$ and the model she used in last period, $\boldsymbol\lambda_t\coloneqq (\lambda_t^{\theta'})_{\theta'\in\Theta^\dagger}$, where
\begin{equation}
    \lambda_t^{\theta'} = \ell_t(\theta') /\ell_t(m_{t-1}).
\end{equation}
The agent then makes a switch if $\max_{\theta'\in\Theta^\dagger} \lambda_t^{\theta'}>\alpha$ and switch to the model $\theta'$ with the highest likelihood ratio.

The definitions of persistence and all notions of robustness can be modified by simply replacing $\theta'$ by $\Theta'$.  Suppose the agent entertains at most $K\geq 1$ competing models, then global robustness would require $\theta$ to persist against every $\Theta'\subseteq\Theta$ of size no larger than $K$ at all priors assigned to models in $\Theta'$. Local robustness notions can be similarly extended.\footnote{\label{fn:persistence-against-each-model}Note that if $\Theta'$ is not a singleton, then persistence against $\Theta'$ is not equivalent to persistence against each model in $\Theta'$, and neither implies the other. See Appendix \ref{sec:Online-Appendix} for examples.} 


Interestingly, the consideration of multiple competing models introduces overfitting. In particular, when the switching threshold $\alpha$ is not adjusted as the number of competing models $K$ becomes larger, even the true DGP may fail to be globally robust. Relatedly, \cite{schwartzstein2019using} find in a static setting, a decision maker switches from the true DGP to a competing model when a persuader is allowed to propose one after the data is realized. By contrast, I show that the persuader can achieve the same goal even if he has to propose before the outcomes are drawn, provided that he can present multiple competing models. More importantly, not only must the switcher switch at least once to the competing models, due to sticky switching, she may eventually settle down with one of them despite that the true DGP fits the data perfectly on average. For concreteness, I now construct a decision problem and $K$ competing models such that the true DGP $\theta^\ast$ does not persist when $K>\alpha+1$. 

\begin{example}[Overfitting] \label{exa:multiple-competing-models}
The agent has two actions $\mathcal A=\{a^\prime,a^\dprime\}$. The true DGP is a uniform distribution over $K$ outcomes, $\mathcal Y=\{1,...,K\}$, regardless of the action. The agent incurs a loss of $-K$ when drawing the outcome $y=1$ and receives a payoff of $0$ if any other outcome is realized. The agent pays an additional cost $c>0$ for playing $a^\prime$ and no cost if she plays $a^\dprime$. The agent's initial model $\theta$ is the true DGP. Hence, the agent optimally plays $a^\dprime$ in the first period to avoid the cost. 

Suppose the agent evaluates $K$ competing models that I describe below. Each model $\theta^k\in\{\theta^1,...,\theta^K\}$ has a single parameter $\omega^k$. When $a'$ is played, model $\theta^k$ agrees with the true DGP, and its prediction  corresponds to a uniform distribution. When $a''$ is played, model $\theta^k$ disagrees with the true DGP. For any $k>1$, $\theta^k$ predicts

\begin{equation*}
  q^{\theta^k}(y|a^\dprime,\omega^k) =
    \begin{cases}
      1-\frac{1}{K}-(K-1)\eta & \text{if }{y=n},\\
      \frac{1}{K}+\eta & \text{if }{y=1},\\
      \eta & \text{if }y\in\mathcal Y \setminus \{1,n\},
    \end{cases}
\end{equation*}
where $\eta$ is a small positive constant. When $k=1$, $q^{\theta^k}(\cdot|a^\dprime,\omega^k)$ is given by 
\begin{equation*}
  q^{\theta^1}(y|a^\dprime,\omega^1) =
    \begin{cases}
      1-(K-1)\eta & \text{if }{y=1,}\\
      \eta & \text{if }y\in\mathcal Y \setminus\{1\}.
    \end{cases}
    \end{equation*}
Importantly, model $\theta^k$ predicts that when $a^\dprime$ is played, the outcome $k$ is realized with probability near $1$. Since there is one such model for every possible outcome, the agent must switch to one of these competing models after the first period. In particular, if the realized outcome is $k$, the agent immediately switches to model $\theta^k$ when $\eta$ is sufficiently small, $$\frac{\ell_1(\theta^k)}{\ell_1(\theta)}\geq \frac{1-\frac{1}{K}-(K-1)\eta}{\frac{1}{K}}>\alpha,$$ where such $\eta$ exists because $K>\alpha+1$. 

Next, since each competing model assigns a probability larger than $1/K$ to the outcome $1$, once the switch happens, the agent finds it optimal to play $a^\prime$ to avoid the loss associated with outcome $1$ as long as $c$ is sufficiently small. However, since all models have the same correct predictions under $a^\prime$, the likelihood ratios remain unchanged thereafter. Hence, despite initially having the true model, the agent becomes permanently trapped with a wrong model and inefficient play. 
\end{example}

The trap described in Example \ref{exa:multiple-competing-models} is different from the trap we constructed in Example \ref{exa:endogenous-model-choice}. To see that, note that $\theta^\ast$ satisfies the no-trap conditions in Definition \ref{def:no-traps} since it is identifiable and has a quasi-strict SCE. The agent in Example \ref{exa:multiple-competing-models} gets trapped because $a'$ is strictly dominant under the competing models and all models have the exact same predictions once $a'$ is being played, eliminating any possibility of future learning and switching. However, the driving force that leads the agent into the trap in the first place is indeed similar in the two examples. In Example \ref{exa:endogenous-model-choice}, the agent holds a diffuse prior, rendering his model choice to be sensitive to early outcome realizations. In Example \ref{exa:multiple-competing-models}, overfitting also arises in the short term and prompts an early switch to other models. The more competing models the agent evaluates, the more likely such a switch occurs. 

Therefore, a natural remedy is to make switching stickier so that agent is less responsive to early outcome realizations. Indeed, Theorem \ref{thm:multiple-competing-models} shows that if $\alpha>K$, the scope of local and global robustness does not change at all and Theorem \ref{thm:Global-Robustness}  fully generalizes to this environment. A larger bound may be needed for constrained locally robustness because local dominance is weaker than the self-confirming property, but when $\alpha$ is large enough, Theorem \ref{thm:Local-robustness-sufficient} also generalizes.

\begin{thm} \label{thm:multiple-competing-models}
Suppose the agent evaluates at most $K$ competing models. 
\begin{enumerate}
    \item Suppose $\alpha>K$. Model $\theta$ is locally and globally robust for at least one prior if and only if there exists a p-absorbing SCE under $\theta$. 
     \item Suppose $\alpha>K^{1/d}$ for some $d>0$. Model  $\theta\in\Theta^{\bar\theta}$ is $\bar\theta$-constrained locally robust if $\theta$ admits a pure p-absorbing BN-E at which $\theta$ is locally dominant within the $\bar\theta$-family and the local dominance condition in Definition \ref{def:local-dominance} holds at $d$.
\end{enumerate}
\end{thm}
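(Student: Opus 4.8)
My plan is to reduce everything to the two--model arguments behind Theorems \ref{thm:Global-Robustness}, \ref{thm:Unconstrained-local-robustness}, and \ref{thm:Local-robustness-sufficient}, treating the only genuinely new feature---that a switch is now triggered by the \emph{maximum} of $K$ Bayes factors rather than a single one---with a union bound, and choosing the thresholds $\alpha>K$ and $\alpha>K^{1/d}$ precisely so that the union bound stays below the probabilistic slack supplied by the p-absorbing condition. The necessity direction of the first claim is immediate: global robustness (resp.\ unconstrained local robustness) in the $K$-model sense requires persistence against \emph{every} $\Theta^c$ of size at most $K$, hence against every singleton, so it implies the two--model notion of Definition \ref{def:Robustness} (resp.\ Definition \ref{def:unconstrained-local-robustness}); the existence of a p-absorbing SCE then follows from Lemma \ref{lem:Correctly-full-characterization} (resp.\ from Theorems \ref{thm:Unconstrained-local-robustness} and \ref{thm:Global-Robustness}). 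This direction uses no restriction on $\alpha$.

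\textbf{Sufficiency for the first claim.} Fix a p-absorbing SCE $\sigma$ and an arbitrary collection $\Theta^c=\{\theta^c_1,\dots,\theta^c_K\}$. By Lemma \ref{lem:p-absorbingness}, choose $\pi_0^\theta$ concentrated near $\Omega^\theta(\sigma)$ so that, with probability at least $\gamma$, the dogmatic modeler plays only actions in $\supp(\sigma)$ with belief never leaving $B_\epsilon(\Delta\Omega^\theta(\sigma))$ (call this event $A$), while also $\pi_0^\theta(\Omega^\theta(\sigma))$ is close to $1$. For each $j$ the process $L^j_t\coloneqq l^{\theta^c_j}_t/l^\ast_t$, with $l^\ast_t\coloneqq\prod_{\tau<t} q^\ast(y_\tau\mid a_\tau)$, is a nonnegative martingale with $L^j_0=1$ for \emph{any} full-support prior on $\theta^c_j$ and any adapted action sequence, since each per-period factor has conditional mean $1$; hence Ville's inequality gives $\mathbb P(\sup_t L^j_t\ge\alpha')\le 1/\alpha'$. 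Because $\sigma$ is self-confirming, all KL minimizers reproduce the true DGP along the played actions, so on $A$ we have $l^\theta_t\ge\pi_0^\theta(\Omega^\theta(\sigma))\,l^\ast_t$ and therefore, as long as no switch has yet occurred, $\lambda^{\theta^c_j}_t=l^{\theta^c_j}_t/l^\theta_t\le L^j_t/\pi_0^\theta(\Omega^\theta(\sigma))$. Setting $\alpha'=\alpha\,\pi_0^\theta(\Omega^\theta(\sigma))$, a union bound yields that with probability at least $\gamma-K/\alpha'$ the event $A$ holds and every $L^j_t$ stays below $\alpha'$, so no $\lambda^{\theta^c_j}_t$ ever exceeds $\alpha$ and the switcher never abandons $\theta$. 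Since $\alpha>K$ and $\pi_0^\theta(\Omega^\theta(\sigma))\to 1$ under concentration, one makes $\alpha'>K$ and $\gamma$ close enough to $1$ that $\gamma-K/\alpha'>0$; as this bound is uniform over the choice of $\Theta^c$, the \emph{same} prior certifies persistence against every collection of size at most $K$, delivering both global and unconstrained local robustness.

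\textbf{The second claim.} Here I mimic the proof of Theorem \ref{thm:Local-robustness-sufficient} but replace its single maximal--inequality estimate by a $K$-fold union bound sharpened by the exponent $d$. Let $\sigma$ be the pure p-absorbing BN-E, with (for concreteness) a unique KL minimizer $\omega^\ast$, and let $A$ be the event that the dogmatic modeler plays $\supp(\sigma)=\{a^\ast\}$ forever with belief near $\Delta\Omega^\theta(\sigma)$. Local dominance at $d$ makes $(R^{\omega'}_t)^d$ a supermartingale along $A$, where $R^{\omega'}_t\coloneqq\prod_{\tau<t} q(y_\tau\mid a^\ast,\omega')/q(y_\tau\mid a^\ast,\omega^\ast)$ for a neighbor parameter $\omega'\in\Omega^q$; after stopping at the first exit from $A$ and applying Ville's inequality to the stopped process, $\mathbb P(\sup_t R^{\omega'}_t\ge\kappa)\le\kappa^{-d}$. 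As in Theorem \ref{thm:Local-robustness-sufficient}, concentrating $\pi_0^\theta$ on $\omega^\ast$ and taking each $\pi_0^{\theta^c_j}$ sufficiently close lets me bound, on $A$, each Bayes factor $\lambda^{\theta^c_j}_t$ by an aggregate of such $R^{\omega'}_t$, so the event that $\lambda^{\theta^c_j}_t$ ever exceeds $\alpha$ has probability of order $\alpha^{-d}$. A union bound over the $K$ models makes the total failure probability of order $K\alpha^{-d}$, which is strictly below $1$ exactly when $\alpha>K^{1/d}$ (with the local dominance holding at this $d$); intersecting with $A$ leaves a positive-measure set of histories on which no switch is ever triggered, and there the switcher coincides with the dogmatic modeler and persists at $\theta$.

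\textbf{Main obstacle.} The routine ingredients---Ville's inequality, the (super)martingale structure of the likelihood ratios, and the concentration of the prior---are inherited from the two--model proofs. The delicate point is making the union bound both legitimate and tight. Two issues require care. First, each per-model tail bound must genuinely be $1/\alpha'$ (resp.\ $\alpha^{-d}$) despite the endogeneity of actions; this is secured by stopping the likelihood-ratio processes at the first departure from absorbing play, so that the (super)martingale property holds up to the stopping time and Ville's inequality applies to the stopped process, which coincides with the original on $A$. Second, and this is the crux, the $K$ failure events must be controlled by a bound independent of \emph{which} $K$ models are chosen, so that a single $\pi_0^\theta$ works against \emph{every} admissible $\Theta^c$; this forces the total failure probability to scale linearly in $K$, and balancing it against the slack $1-\gamma$ (which can be pushed to $0$) together with the per-model thresholds is exactly what pins down $\alpha>K$ and $\alpha>K^{1/d}$. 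A non-singleton $\Omega^\theta(\sigma)$ in the second claim is handled as in the generalized version of Theorem \ref{thm:Local-robustness-sufficient} noted in its accompanying footnote, by first summing over the finitely many KL minimizers and only then taking the union bound over models.
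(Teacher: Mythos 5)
Your proposal is correct and takes essentially the same route as the paper's proof: necessity reduces to the singleton ($K=1$) case, and sufficiency combines Ville's maximal inequality for each of the $K$ likelihood-ratio (super)martingales with a union bound, intersected with the high-probability absorbing event supplied by Lemma \ref{lem:p-absorbingness} (for part 1) and with the construction behind Theorem \ref{thm:Local-robustness-sufficient} (for part 2). One point in your favor: your tail bound of order $\alpha^{-d}$ for the $d$-th-power supermartingale is the correct application of Ville's inequality and is exactly what makes the union bound $K\alpha^{-d}<1$ match the stated threshold $\alpha>K^{1/d}$, whereas the paper's proof text writes the exponent as $1/d$ (e.g. $1-1/\kappa^{1/d}$ and the condition $K<\alpha^{1/d}$), an apparent typo that is inconsistent with the theorem statement itself.
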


This extension shows that the consideration of multiple competing models at the same time can make persistence more difficult due to overfitting. While one may expect  such considerations to work towards the direction of reducing robust misspecification, they turn out to have as much bearing on correctly specified models as on misspecified models---in particular, as demonstrated in Example \ref{exa:multiple-competing-models}, even the true model may not persist. After adjusting the switching threshold for the increasing number of competing models, our main characterization remains valid, continuing to allow both correctly specified and misspecified models to be robust.


\subsection{\label{subsec:patient agent}Non-myopic Agent}
Our baseline framework focuses on a myopic agent and rules out any experimentation motives. This assumption can be less substantial than one might think. In this subsection, I discuss two potential ways of relaxing this assumption. 

First, we may assume that the agent is non-myopic within each model but maintain that she is myopic across models. That is, when choosing an optimal action, the agent maximizes her expected discounted sum of payoffs assuming that she keeps her current model $m_{t}$ in the future. An optimal policy $f^\theta$ solves the following dynamic programming problem,
\begin{equation*}
U^{\theta}\left(\pi^{\theta}_t\right)=\max_{a\in\mathcal{A}}\sum_{\omega\in\Omega^{\theta}}\pi^\theta\left(\omega\right)\int_{y\in\mathcal{Y}}\left[u\left(a,y\right)+\delta U^{\theta}\left(B^{\theta}\left(a,y,\pi^{\theta}_t\right)\right)\right]q^{\theta}\left(y|a,\omega\right)v\left(dy\right).\label{eq:dynamic-programming}
\end{equation*}
How should we interpret the asymmetry between experimentation within models and no experimentation across models? This asymmetry again highlights the stickiness of switching models as opposed to the smoothness of Bayesian updating, and it is plausible when resources are constrained.  For instance, consider an applied data scientist who uses one single model to guide data collection and make policy recommendations. While he is aware of potential misspecification, he may choose not to spare valuable resources in additional experiments to find the best model. However, he may indeed switch to a different model if the data at hand happens to suggest its superiority.\footnote{This assumption is also natural in organizations where decision making and model estimation are handled by separate teams. For example, a manager (e.g. the chairman of a central bank) chooses policies based on the predictions made by the research team (e.g. a group of macroeconomists), while the research team focuses on estimating the models given the available data.  }

If we relax the myopicity assumption this way, Theorems \ref{thm:Global-Robustness} to \ref{thm:Local-robustness-sufficient} go through without changes. This claim may appear surprising at first, because  experimentation motives should make it harder to sustain a self-confirming equilibrium or a Berk-Nash equilibrium, and thus the set of robust misspecified models might be smaller if the agent is non-myopic. This intuition is correct---as the agent becomes more patient, p-absorbingness is harder to achieve. However, note that the theorems only establish the equivalence relationship between the existence of p-absorbing equilibria and the models' robustness properties, so whether p-absorbingness can be achieved is irrelevant. 
In the Appendix, I provide stronger sufficient conditions for p-absorbingness such that variants of Corollaries \ref{corr:Suff-robustness} and \ref{cor:Local-robustness-sufficient} continue to hold.\footnote{In particular, any uniformly quasi-strict SCE is p-absorbing, and so is any uniformly strict BN-E.}

Alternatively, we may assume the agent is forward-looking both within and across models.  If  the agent anticipates future model switches, she may intentionally take actions that allow her to distinguish different models, even if her current model predicts a different optimal action. Characterizing robust models in this environment is significantly more challenging and beyond the scope of this paper. I conjecture that the set of robust misspecified models will shrink as the agent becomes increasingly patient.

\subsection{\label{subsec:alternative def persistence}Alternative Definitions of Persistence}

This subsection discusses alternative definitions of model persistence. Our definition of persistence in Section \ref{sec:Framework} requires that if a switcher initially adopts this model, she eventually settles down with it with positive probability. This concept has a natural interpretation and can be used to predict whether a particular bias is likely to exist in a stable form. However, by relaxing or strengthening different parts of this definition, we can obtain a couple of variants that are also worth exploring. These alternative definitions are useful if one looks for models with stronger or weaker persistence properties. This investigation also helps us better understand the original definition since it sheds light on the importance of the different parts of the concept.

\paragraph{Almost sure eventual adoption.} The first natural extension is to strengthen persistence by requiring that the model is eventually adopted with probability 1. That is, any such model is guaranteed to win out in the competition. But almost-sure persistence turns out to make global and local robustness impossible. In fact, given any model $\theta$ (including the true model $\theta^\ast$), we can easily construct a nearby competing model $\theta'$ such that the competing model is eventually adopted with positive probability. The idea is that the agent can draw a sequence of outcome realizations that can be better explained by the competing model, and once a switch happens, the agent does not feel compelled to switch back since the predictions of the two models are identical in the limit. Therefore, almost sure eventual adoption can be too restrictive to be any useful.\footnote{To do this, let us construct $\theta'$ such that it contains all DGPs in $\theta$ and one additional DGP that differs from any other DGPs for all actions in $\theta$. That is, we have $\Omega^{\theta'}=\Omega^{\theta}\cup \{\hat\omega\}$, where $q^{\theta'}(\cdot|a,\omega)=q^{\theta}(a|\cdot,\omega)$ and $q^{\theta'}(\cdot|a,\hat\omega)\not=q^{\theta}(\cdot|a,\omega)$ for all $\omega\in\Omega^\theta$ and all actions $a\in\mathcal A$. In addition, let the prior $\pi_0^\theta$ be proportional to $\pi_0^{\theta'}$ for the all shared parameters. With this structure, the Bayes factor $\lambda_t$ is bounded below by $ \pi_0^{\theta'}(\Omega^\theta)$. Note that since $\hat \omega$ predicts differently from model $\theta$, it is always a positive-probability event that the agent finds model $\theta'$ particularly compelling and make a switch because of the existence of $\hat\omega$. But then the agent never switches back if the lower bound of the Bayes factor, $\pi_0^{\theta'}(\Omega^\theta)$, is higher than $1/\alpha$, which can be achieved if we make $\pi_0^{\theta'}(\Omega^\theta)$ sufficiently close to 1.} 

\paragraph{No switch.} The current definition of persistence allows for back-and-forth switching before the eventual adoption of the model. A more conservative  definition could have required the agent to adopt the same model throughout. It turns out that all theorems continue to hold even with this conservative definition. Indeed, model switching does not play any role in ensuring local and global robustness when there are no traps---in the proof of all main theorems, I show that there exists a sequence of outcome realizations that induce the agent to play a SCE while remaining under the same model. When there are traps in the model, however, a temporary switch to the competing model can be instrumental to the persistence of the initial model because switching to the competing model may happen to keep the agent away from the traps. A full characterization of this case is left for future research.

\section{\label{sec:Concluding-Remarks}Concluding Remarks}
In this paper, I propose a new theoretical framework to study the persistence of misspecified models when decision makers are aware of potential model misspecification. I introduce sticky switching to the standard model of individual active learning  and study the limit of the model choices. I explore three different robustness notions and use them to derive novel insights about which models persist and when they persist. I show that all three robustness notions can be characterized in terms of two properties, asymptotic accuracy and prior tightness.

The idea that the existence of self-confirming equilibria can explain why incorrect models persist can be found in the existing literature \citep{heidhues2019overconfidence}. Instead of assuming that the agent starts outright from an equilibrium, my framework incorporates full-fledged model switching dynamics into active learning processes. The characterization highlights the importance of this consideration. Robustness not only requires the existence of a self-confirming equilibrium but also needs it to be p-absorbing, which connects the notion of model persistence with the stability of equilibria. Furthermore, global robustness requires high prior tightness around the set of p-absorbsing self-confirming equilibria. This finding provides a theoretical justification for the empirical observation that simple narratives and entrenched worldviews tend to be more persistent. 

The model-switching framework has great application value. My characterization of robust models, stated in the form of simple criteria that can be easily verified from the primitives, provide a learning foundation for certain misspecified models, some of which are already studied in misspecified learning literature.  It can also be used to predict the persistence of given behavioral biases in specific contexts, which can be useful for guiding empirical work on behavioral economics and relevant policy making.

Within this general framework of model switching, there are many other interesting questions to pursue. For example, persistence requires a positive chance of eventual adoption, but this concept is silent about the size of this probability. New insights may emerge from studying how this probability is determined by key primitives of the model, such as whether it is correctly specified or misspecified, and features of the learning environment, such as the switching stickiness. Another potentially fruitful direction could be to restrict attention to a given small set of models and fully characterize the dynamics of model choices, i.e. how a decision maker oscillates between two or more competing models persistently. 

\bibliographystyle{ecta}
\bibliography{references}

\pagebreak{}

\appendix

\section{\label{sec:Auxiliary-Results}Auxiliary Lemmas}

\noindent The underlying probability space $\left(Y,\mathcal{F},\mathbb{P}\right)$ is constructed as follows. The sample space is $\mathscr{Y}\coloneqq\left(\mathcal{Y}^{\infty}\right)^{\mathcal{A}}$, each element of which consists of infinite sequences of outcome realizations $\left(y_{a,0},y_{a,1},...\right)$ for all actions $a\in\mathcal{A}$, where $y_{a,t}$ denotes the outcome when the agent takes $a$ in period $t$. Let us denote by $\mathbb{P}$ the probability measure over $\mathscr{Y}$ induced by independent draws from $q^{\ast}$ and denote by $\mathcal{F}$ the product sigma algebra. Let $h\coloneqq\left(a_{t},y_{t}\right)_{t=0}^{\infty}$ denote an infinite history and $H\coloneqq\left(\mathcal{A}\times\mathcal{Y}\right)^{\infty}$ be the set of infinite histories. Combined with the switching threshold $\alpha$, the switcher's problem $(\theta,\theta',\pi_0^\theta,\pi_0^{\theta'})$, and policies $(f^{\theta},f^{\theta'})$, $\mathbb{P}$ induces a probability measure over $H$ when the agent is a switcher, denoted by $\mathbb{P}_{S}$. Meanwhile, the measure $\mathbb{P}$, prior $\pi_{0}^{\theta}$, and policy $f^{\theta}$ induce a different probability measure over $H$ for a $\theta$-modeler who uses the same prior and policy, denoted by $\mathbb{P}_{D}$. All probabilistic statements about a switcher are made with respect to $\mathbb P_S$ and all those about a $\theta$-modeler are with respect to $\mathbb P_D$, unless indicated otherwise.  

\begin{lem}
\label{lem:RL-converge} Consider any switcher's problem $(\theta,\theta',\pi_0^\theta,\pi_0^{\theta'})$ in which $\theta,\theta'\in\Theta$ and $\theta'$ is correctly specified. The ratio $\ell_{t}(\theta)/\ell_{t}(\theta')$ a.s. converges to a non-negative random variable with finite expectation. 
\end{lem}
\begin{proof}
Let $\kappa_{t}=\ell_{t}(\theta)/\ell_{t}(\theta')$, then $\kappa_{0}=1$ and $\kappa_{t}\geq0,\forall t$. I now construct the probability space in which $\kappa_{t}$ is a martingale. Given prior $\pi_{0}^{\theta'}$, denote by $\mathbb{P}_{S}^{\theta'}$ the probability measure over the set of histories $H$ as implied by model $\theta'$. Formally, for any $\hat{H}\subseteq H$, we have $\mathbb{P}_{S}^{\theta'}\left(\hat H\right)=\sum_{\omega\in\Omega^{\theta'}}\pi_{0}^{\theta'}\left(\omega\right)\mathbb{P}_{S}^{\theta',\omega}\left(\hat H\right)$, where $\mathbb{P}_{S}^{\theta',\omega}$ is the probability measure over $H$ induced by the switcher if the true DGP is as described by $\theta'$ and $\omega$. Take the conditional expectation of $\kappa_t$ with respect to $\mathbb{P}_{S}^{\theta'}$, then we have 
\begin{align*}
 & \mathbb{E}^{\mathbb{P}_{S}^{\theta'}}\left(\kappa_{t}|h_{t}\right)\\
 & =\mathbb{E}^{\mathbb{P}_{S}^{\theta'}}\left[\frac{\sum_{\omega\in\Omega^{\theta}}q^{\theta}\left(y_{t-1}|a_{t-1},\omega\right)\pi_{t-1}^{\theta}\left(\omega\right)}{\sum_{\omega^{\prime}\in\Omega^{\theta'}}q^{\theta'}\left(y_{t-1}|a_{t-1},\omega^{\prime}\right)\pi_{t-1}^{\theta'}\left(\omega^{\prime}\right)}\kappa_{t-1}|h_{t}\right]\\
 & =\kappa_{t-1}\sum_{\tilde{\omega}\in\Omega^{\theta'}}\pi_{t-1}^{\theta'}\left(\tilde{\omega}\right)\left[\int_{\mathcal{Y}}\frac{\sum_{\omega\in\Omega^{\theta}}q^{\theta}\left(y_{t-1}|a_{t-1},\omega\right)\pi_{t-1}^{\theta}\left(\omega\right)}{\sum_{\omega^{\prime}\in\Omega^{\theta'}}q^{\theta'}\left(y_{t-1}|a_{t-1},\omega^{\prime}\right)\pi_{t-1}^{\theta'}\left(\omega^{\prime}\right)}q^{\theta'}\left(y_{t-1}|a_{t-1},\tilde{\omega}\right)\nu\left(dy_{t-1}\right)\right]\\
 & =\kappa_{t-1}\int_{\mathcal{Y}}\left[\frac{\sum_{\omega\in\Omega^{\theta}}q^{\theta}\left(y_{t-1}|a_{t-1},\omega\right)\pi_{t-1}^{\theta}\left(\omega\right)}{\sum_{\omega^{\prime}\in\Omega^{\theta'}}q^{\theta'}\left(y_{t-1}|a_{t-1},\omega^{\prime}\right)\pi_{t-1}^{\theta'}\left(\omega^{\prime}\right)}\left(\sum_{\tilde{\omega}\in\Omega^{\theta'}}q^{\theta'}\left(y_{t-1}|a_{t-1},\tilde{\omega}\right)\pi_{t-1}^{\theta'}\left(\tilde{\omega}\right)\right)\right]\nu\left(dy_{t-1}\right)\\
 & =\kappa_{t-1}\int_{\mathcal{Y}}\left[\sum_{\omega\in\Omega^{\theta}}q^{\theta}\left(y_{t-1}|a_{t-1},\omega\right)\pi_{t-1}^{\theta}\left(\omega\right)\right]\nu\left(dy_{t-1}\right)\\
 & =\kappa_{t-1}\sum_{\omega\in\Omega^{\theta}}\left[\int_{\mathcal{Y}}q^{\theta}\left(y_{t-1}|a_{t-1},\omega\right)\nu\left(dy_{t-1}\right)\right]\pi_{t-1}^{\theta}\left(\omega\right)=\kappa_{t-1}.
\end{align*}
Hence, $\kappa_{t}$ is a martingale w.r.t. $\mathbb{P}_{S}^{\theta'}$. Since $\kappa_{t}\geq0,\forall t$, the Martingale Convergence Theorem implies that $\kappa_{t}$ converges to $\kappa$ almost surely w.r.t. $\mathbb{P}_{S}^{\theta'}$, and $\mathbb{E}^{\mathbb{P}_{S}^{\theta'}}\kappa\leq\mathbb{E}^{\mathbb{P}_{S}^{\theta'}}\kappa_{0}=1$. Since $\theta'$ is correctly specified, there exists a parameter $\omega^{\ast}\in\Omega^{\theta'}$ such that $q^{\ast}\left(\cdot|a\right)\equiv q^{\theta'}\left(\cdot|a,\omega^{\ast}\right),\forall a\in\mathcal{A}$. It then follows from $\pi_{0}^{\theta'}\left(\omega^{\ast}\right)>0$ that $\kappa_{t}$ also converges to $\kappa$ almost surely w.r.t. $\mathbb{P}_{S}^{\theta',\omega^{\ast}}$, which is the same measure as $\mathbb{P}_{S}$. Moreover, $\mathbb{E}\kappa<\infty$ because otherwise it contradicts $\mathbb{E}^{\mathbb{P}_{S}^{\theta'}}\kappa\leq1.$
\end{proof}
\begin{lem}
\label{lem:Convergence-lambda--correct} Suppose $\theta\in\Theta$ persists against a correctly specified model $\theta'\in\Theta$ at some full-support priors $\pi_{0}^{\theta},\pi_{0}^{\theta'}$. Then on paths where $m_{t}$ eventually equals $\theta$, we have $\lambda_t\xrightarrow{\text{a.s.}} \lambda_\infty \leq \alpha$, $\pi_{t}^{\theta'}\xrightarrow{\text{a.s.}}\pi_{\infty}^{\theta'}$, and $\pi_{t}^{\theta}\xrightarrow{\text{a.s.}}\pi_{\infty}^{\theta}$.
\end{lem}
\begin{proof}
It immediately follows from Lemma \ref{lem:RL-converge} that $\ell_{t}(\theta')/\ell_{t}(\theta)\xrightarrow{\text{a.s.}}\iota\leq\alpha$ on paths where $m_{t}$ converges to $\theta$. I now show that $\pi_{t}^{\theta}$ and $\pi_{t}^{\theta'}$ also converge almost surely. Given any $\omega\in\Omega^{\theta}$, we can write
\begin{align*}
\frac{\pi_{t}^{\theta}\left(\omega\right)}{\pi_{0}^{\theta}\left(\omega\right)} & =\frac{\prod_{\tau=0}^{t-1}q^{\theta}\left(y_{\tau}|a_{\tau},\omega\right)}{\sum_{\omega^{\prime}\in\Omega^{\theta}}\prod_{\tau=0}^{t-1}q^{\theta}\left(y_{\tau}|a_{\tau},\omega^{\prime}\right)\pi_{0}^{\theta}\left(\omega^{\prime}\right)}\\
 & =\frac{\ell_{t}(\theta')}{\ell_{t}(\theta)}\cdot\frac{\prod_{\tau=0}^{t-1}q^{\theta}\left(y_{\tau}|a_{\tau},\omega\right)}{\sum_{\omega^{\prime\prime}\in\Omega^{\theta'}}\prod_{\tau=0}^{t-1}q^{\theta'}\left(y_{\tau}|a_{\tau},\omega^{\prime\prime}\right)\pi_{0}^{\theta'}\left(\omega^{\prime\prime}\right)}\\
 & \coloneqq\frac{\ell_{t}(\theta')}{\ell_{t}(\theta)}\cdot\frac{\ell_{t}(\theta,\omega)}{\ell_{t}(\theta')},
\end{align*}
where the second term $\ell_{t}(\theta,\omega)/\ell_{t}(\theta')$ can be seen as the likelihood ratio of a model that consists of a single parameter $\omega$ and the competing model $\theta'$. By Lemma \ref{lem:RL-converge}, $\ell_{t}(\theta,\omega)/\ell_{t}(\theta')$ a.s. converges to a random variable with finite expectation. Consider the paths on which $m_{t}$ converges to $\theta$. On these paths, both $\ell_{t}(\theta')/\ell_{t}(\theta)$ and $\ell_{t}(\theta,\omega)/\ell_{t}(\theta')$ converges a.s., which implies that $\pi_{t}^{\theta}\left(\omega\right)$ a.s. converges to a random variable with finite expectation as well. Since this is true for all $\omega\in\Omega^{\theta}$, $\pi_{t}^{\theta}$ a.s. converges to some limit $\pi_{\infty}^{\theta}$ on those paths. Analogously, for any $\omega^{\prime}\in\Omega^{\theta'}$, we can write
\[
\frac{\pi_{t}^{\theta'}\left(\omega^{\prime}\right)}{\pi_{0}^{\theta'}\left(\omega^{\prime}\right)}= \frac{\ell_{t}(\theta',\omega')}{\ell_{t}(\theta')},
\]
which, again by Lemma \ref{lem:RL-converge}, converges almost surely.
\end{proof}
\begin{lem}
\label{lem:kronecker-lemma} Fix any $\theta,\theta^{\prime}\in\Theta$, $\omega\in\Omega^{\theta},\omega^{\prime}\in\Omega^{\theta^{\prime}}$ and any sequence of actions $\left(a_{1},a_{2},...\right)$. For each infinite history $h\in\left(\mathcal{A}\times\mathcal{Y}\right)^{\infty}$ that is generated according to $\left(a_{1},a_{2},...\right)$ by the true DGP, let 
\[
\xi_{t}\left(h\right)=\ln\frac{q^{\theta}\left(y_{t}|a_{t},\omega\right)}{q^{\theta^{\prime}}\left(y_{t}|a_{t},\omega^{\prime}\right)}-\mathbb{E}\left(\ln\frac{q^{\theta}\left(y_{t}|a_{t},\omega\right)}{q^{\theta^{\prime}}\left(y_{t}|a_{t},\omega^{\prime}\right)}|h_{t}\right).
\]
Then for any fixed $t_{0}\geq1$, 
\[
\lim_{t\rightarrow\infty}\left(t-t_{0}+1\right)^{-1}\sum_{\tau=t_{0}}^{t}\xi_{\tau}\left(h\right)=0,\text{ a.s..}
\]
\end{lem}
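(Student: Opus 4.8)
The plan is to recognize the left-hand side as a Cesàro mean of the centered sequence $\{\xi_{\tau}\}$ and to obtain the result from Kronecker's lemma once the associated weighted series is shown to converge. Since the action sequence $\left(a_{1},a_{2},\dots\right)$ is fixed and the outcomes are generated by independent draws from the true DGP, each $y_{\tau}$ is an independent coordinate of the product space $\mathscr{Y}$ and is therefore independent of the history $h_{\tau}=\left(a_{s},y_{s}\right)_{s<\tau}$. Consequently $\ln\frac{q^{\theta}\left(y_{\tau}|a_{\tau},\omega\right)}{q^{\theta^{\prime}}\left(y_{\tau}|a_{\tau},\omega^{\prime}\right)}$ depends only on $y_{\tau}$, and $\mathbb{E}\left(\,\cdot\,|h_{\tau}\right)$ coincides with integration against $Q^{\ast}\left(\cdot|a_{\tau}\right)$. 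Hence each $\xi_{\tau}$ is centered, and with $\mathcal{F}_{\tau}=\sigma\left(y_{0},\dots,y_{\tau}\right)$ the sequence $\{\xi_{\tau}\}$ is a martingale-difference sequence, $\mathbb{E}[\xi_{\tau}\mid\mathcal{F}_{\tau-1}]=0$; in fact the $\xi_{\tau}$ are independent, so the conditioning is inessential.

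Next I would bound the variances uniformly. Writing $\ln\frac{q^{\theta}}{q^{\theta^{\prime}}}=\ln\frac{q^{\theta}}{q^{\ast}}-\ln\frac{q^{\theta^{\prime}}}{q^{\ast}}$ and applying Assumption~\ref{assu:Subjective-DGP}(iv) to each term yields $Q^{\ast}\left(\cdot|a\right)$-almost-sure square-integrable envelopes $g^{\theta}_{a},g^{\theta^{\prime}}_{a}$, so that
\[
\operatorname{Var}\left(\xi_{\tau}\right)\le\mathbb{E}_{Q^{\ast}\left(\cdot|a_{\tau}\right)}\!\left[\left(\ln\tfrac{q^{\theta}}{q^{\theta^{\prime}}}\right)^{2}\right]\le 2\,\mathbb{E}_{Q^{\ast}\left(\cdot|a_{\tau}\right)}\!\left[\left(g^{\theta}_{a_{\tau}}\right)^{2}+\left(g^{\theta^{\prime}}_{a_{\tau}}\right)^{2}\right]<\infty.
\]
Because $\mathcal{A}$ is finite, taking the maximum over the finitely many actions produces a uniform bound $\sup_{\tau}\operatorname{Var}\left(\xi_{\tau}\right)\le\bar{V}<\infty$.

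Finally I would run the Kronecker argument. Define $W_{t}=\sum_{\tau=t_{0}}^{t}\xi_{\tau}/\left(\tau-t_{0}+1\right)$, a martingale with respect to $\{\mathcal{F}_{\tau}\}$ whose increments are orthogonal in $L^{2}$, so that $\mathbb{E}[W_{t}^{2}]=\sum_{\tau=t_{0}}^{t}\operatorname{Var}\left(\xi_{\tau}\right)/\left(\tau-t_{0}+1\right)^{2}\le\bar{V}\sum_{k\ge1}k^{-2}<\infty$. Thus $W_{t}$ is $L^{2}$-bounded and converges almost surely by the martingale convergence theorem, i.e. the series $\sum_{\tau}\xi_{\tau}/\left(\tau-t_{0}+1\right)$ converges a.s. Kronecker's lemma, applied with the weights $b_{n}=n\uparrow\infty$, then delivers $\left(t-t_{0}+1\right)^{-1}\sum_{\tau=t_{0}}^{t}\xi_{\tau}\to0$ almost surely, which is the claim.

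The only nonroutine step is the uniform variance bound: it relies on reading Assumption~\ref{assu:Subjective-DGP}(iv) as furnishing a finite second moment of the log-likelihood ratio under $Q^{\ast}\left(\cdot|a\right)$ for each action, after which finiteness of $\mathcal{A}$ makes the bound uniform across $\tau$. Given that bound, the martingale-convergence-plus-Kronecker machinery is entirely standard, and the fixed (deterministic) action sequence is what makes the increments independent and the conditioning transparent.
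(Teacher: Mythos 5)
Your proposal is correct and follows essentially the same route as the paper: center the log-likelihood-ratio increments to get a martingale-difference sequence, bound second moments uniformly via Assumption \ref{assu:Subjective-DGP}(iv) (splitting through $q^{\ast}$), deduce a.s. convergence of the $1/\tau$-weighted series from $L^{2}$-bounded martingale convergence, and finish with Kronecker's lemma. If anything, your weighting $\left(\tau-t_{0}+1\right)^{-1}$ is the correct one for the martingale structure, whereas the paper writes $\left(t-\tau+1\right)^{-1}$, which is evidently a typo for the same standard construction.
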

\begin{proof}
Note that $\xi_{t}\left(h\right)$ is a martingale difference process since $E\left(\xi_{t}\left(h\right)|h_{t}\right)=0$. So for any $t_{0}$, $\xi_{t_{0}}^{t}\left(h\right)\coloneqq\sum_{\tau=t_{0}}^{t}\left(t-\tau+1\right)^{-1}\xi_{\tau}\left(h\right)$ is also a martingale difference process. To use the Martingale Convergence Theorem, I now show that $\sup_{t}\mathbb{E}\left(\left(\xi_{t_{0}}^{t}\right)^{2}\right)<\infty$. Notice that 
\begin{align*}
\mathbb{E}\left(\left(\xi_{t_{0}}^{t}\right)^{2}\right)= & \mathbb{E}\left[\left(\sum_{\tau=t_{0}}^{t}\left(t-\tau+1\right)^{-1}\xi_{\tau}\left(h\right)\right)^{2}\right]\\
\leq & \sum_{\tau=t_{0}}^{t}\left(t-\tau+1\right)^{-2}\mathbb{E}\left[\left(\xi_{\tau}\left(h\right)\right)^{2}\right]\\
\leq & \sum_{\tau=t_{0}}^{t}\left(t-\tau+1\right)^{-2}\mathbb{E}\left[\left(\ln\frac{q^{\theta}\left(y_{t}|a_{t},\omega\right)}{q^{\theta^{\prime}}\left(y_{t}|a_{t},\omega^{\prime}\right)}\right)^{2}\right]\\
\leq & \sum_{\tau=t_{0}}^{t}\left(t-\tau+1\right)^{-2}\mathbb{E}\left[\left(\ln\frac{q^{\ast}\left(y_{t}|a_{t}\right)}{q^{\theta}\left(y_{t}|a_{t},\omega\right)}\right)^{2}+\left(\ln\frac{q^{\ast}\left(y_{t}|a_{t}\right)}{q^{\theta^{\prime}}\left(y_{t}|a_{t},\omega^{\prime}\right)}\right)^{2}\right]\\
\leq & 2\sum_{\tau=t_{0}}^{t}\left(t-\tau+1\right)^{-2}\max_{a}\mathbb{E}\left[\left(r_{a}\left(y\right)\right)^{2}\right]<\infty,
\end{align*}
where the first inequality follows from the fact that, for any $\tau^{\prime}>\tau\geq t_{0}$, $\mathbb{E}\left(\xi_{\tau}\left(h\right)\xi_{\tau^{\prime}}\left(h\right)\right)=\mathbb{E}\left(\mathbb{E}\left(\xi_{\tau^{\prime}}\left(h\right)|h_{\tau}\right)\xi_{\tau}\left(h\right)\right)=0$ and the last inequality follows from Assumption \ref{assu:Subjective-DGP}. Now we can invoke the Martingale Convergence Theorem which implies that $\xi_{t_{0}}^{t}$ converges to a random variable $\xi_{t_{0}}^{\infty}$ almost surely with $\mathbb{E}\left(\left(\xi_{t_{0}}^{\infty}\right)^{2}\right)<\infty$. Since $\xi_{t_{0}}^{\infty}=\lim_{t\rightarrow\infty}\sum_{\tau=t_{0}}^{t}\left(t-\tau+1\right)^{-1}\xi_{\tau}\left(h\right)$ is finite a.s., it follows from the Kronecker Lemma that 
\[
\lim_{t\rightarrow\infty}\left(t-t_{0}+1\right)^{-1}\sum_{\tau=t_{0}}^{t}\xi_{\tau}\left(h\right)=0,\text{a.s.}.
\]
\end{proof}

Let us define action frequency $\sigma_{t}:\mathcal{A}^{t+1}\rightarrow\Delta\mathcal{A}$ to measure how frequent each action has been played up to period $t$. In particular, given an action sequence $\left(a_{0},a_{1},...\right)$, let
\[
\sigma_{t}\left(a\right)=\frac{\sum_{\tau=0}^{t}\boldsymbol{1}\left(a_{t}=a\right)}{t}.
\]

\begin{lem}
Fix any $\theta\in\Theta$. \label{lem:belief-when-action-freq-converges}Suppose the action frequency of a $\theta$-modeler converges to $\sigma$, then her belief ${\pi}_{t}^{\theta}$ satisfies ${\pi}^{\theta}_t\left(\Omega^{\theta}\left(\sigma\right)\right)\xrightarrow{\text{a.s.}}1$. Similarly, if the action frequency of a switcher with $\Theta^{\dagger}\ni\theta$ converges to $\sigma$, then her belief $\pi_{t}^{\theta}$ also satisfies ${\pi}^{\theta}_t\left(\Omega^{\theta}\left(\sigma\right)\right)\xrightarrow{\text{a.s.}}1$.
\end{lem}
\begin{proof}
The proof is completely identical for either a $\theta$-modeler or a switcher. Since $\Omega^{\theta}$ is finite, for any given $\sigma$, there exists $\epsilon>0$ such that 
\begin{equation}
\sum_{\mathcal{A}}\sigma\left(a\right)\left[D_{KL}\left(q^{\ast}\left(\cdot|a\right)\parallel q^{\theta}\left(\cdot|a,\omega\right)\right)-D_{KL}\left(q^{\ast}\left(\cdot|a\right)\parallel q^{\theta}\left(\cdot|a,\omega^{\prime}\right)\right)\right]<-\epsilon \label{eq:minus-eps}
\end{equation}
for all $\omega\in\Omega^{\theta}\left(\sigma\right)$ and $\omega^{\prime}\in\Omega^{\theta}/\Omega^{\theta}\left(\sigma\right)$. For any $\omega\in\Omega^{\theta}\left(\sigma\right)$ and $\omega^{\prime}\in\Omega^{\theta}/\Omega^{\theta}\left(\sigma\right)$ at time $t$,
\begin{align*}
\frac{\pi_{t}^{\theta}\left(\omega^{\prime}\right)}{\pi_{t}^{\theta}\left(\omega\right)} & =\frac{\prod_{\tau=0}^{t-1}q^{\theta}\left(y_{\tau}|a_{\tau},\omega^{\prime}\right)\pi_{0}^{\theta}\left(\omega^{\prime}\right)}{\prod_{\tau=0}^{t-1}q^{\theta}\left(y_{\tau}|a_{\tau},\omega\right)\pi_{0}^{\theta}\left(\omega\right)}\\
 & =\exp\left(\sum_{\tau=0}^{t-1}\ln\frac{q^{\theta}\left(y_{\tau}|a_{\tau},\omega^{\prime}\right)}{q^{\theta}\left(y_{\tau}|a_{\tau},\omega\right)}+\ln\frac{\pi_{0}^{\theta}\left(\omega^{\prime}\right)}{\pi_{0}^{\theta}\left(\omega\right)}\right).
\end{align*}
We are done if this ratio converges to 0. Notice that
\begin{align*}
 & \frac{1}{t}\sum_{\tau=0}^{t-1}\mathbb{E}\left(\ln\frac{q^{\theta}\left(y_{\tau}|a_{\tau},\omega^{\prime}\right)}{q^{\theta}\left(y_{\tau}|a_{\tau},\omega\right)}|h_{t}\right)\\
= & -\sum_{\mathcal{A}}\sigma_{t}\left(a\right)\left[D_{KL}\left(q^{\ast}\left(\cdot|a\right)\parallel q^{\theta}\left(\cdot|a,\omega^{\prime}\right)\right)-D_{KL}\left(q^{\ast}\left(\cdot|a\right)\parallel q^{\theta}\left(\cdot|a,\omega\right)\right)\right],
\end{align*}
which converges to the left-hand side of Eq. \eqref{eq:minus-eps} as $\sigma_{t}$ converges to $\sigma$. Hence, there exists $T_{1}$ such that 
\[
\frac{1}{t}\sum_{\tau=0}^{t-1}\mathbb{E}\left(\ln\frac{q^{\theta}\left(y_{\tau}|a_{\tau},\omega^{\prime}\right)}{q^{\theta}\left(y_{\tau}|a_{\tau},\omega\right)}|h_{t}\right)<-\frac{\epsilon}{2},\forall t>T_{1}.
\]
By Lemma \ref{lem:kronecker-lemma}, there exists $T_{2}$ such that when $t>T_{2}$, 
\begin{align*}
\frac{1}{t}\sum_{\tau=0}^{t-1}\ln\frac{q^{\theta}\left(y_{\tau}|a_{\tau},\omega^{\prime}\right)}{q^{\theta}\left(y_{\tau}|a_{\tau},\omega\right)} & <\frac{1}{t}\sum_{\tau=0}^{t-1}\mathbb{E}\left(\ln\frac{q^{\theta}\left(y_{\tau}|a_{\tau},\omega^{\prime}\right)}{q^{\theta}\left(y_{\tau}|a_{\tau},\omega\right)}|h_{t}\right)+\frac{\epsilon}{3}
\end{align*}
It follows that when $t>\max\left\{ T_{1},T_{2}\right\} $,
\[
\sum_{\tau=0}^{t-1}\ln\frac{q^{\theta}\left(y_{\tau}|a_{\tau},\omega^{\prime}\right)}{q^{\theta}\left(y_{\tau}|a_{\tau},\omega\right)}<t\cdot\left(-\frac{\epsilon}{6}\right).
\]
Hence, $\frac{\pi_{t}^{\theta}\left(\omega^{\prime}\right)}{\pi_{t}^{\theta}\left(\omega\right)}$ converges to $0$ for all $\omega\in\Omega^{\theta}\left(\sigma\right)$ and $\omega^{\prime}\in\Omega^{\theta}/\Omega^{\theta}\left(\sigma\right)$.
\end{proof}
\begin{lem}
\label{lem:Uhc}For any $\theta\in\Theta$, the optimal action correspondence $A^{\theta}:\Delta\Omega^{\theta}\rightrightarrows\mathcal{A}$ is upper hemicontinuous in both the belief $\pi$ and the agent's discount factor $\delta$.
\end{lem}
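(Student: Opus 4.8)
The plan is to deduce the result from Berge's Maximum Theorem. Since the action set $\mathcal{A}$ is finite, the constraint correspondence is constant and hence trivially compact-valued and continuous, so upper hemicontinuity of $A^{\theta}$ will follow immediately once I show that the per-action objective
\[
W^{\theta}(a,\pi,\delta)\coloneqq\sum_{\omega\in\Omega^{\theta}}\pi(\omega)\int_{\mathcal{Y}}\left[u(a,y)+\delta\,U^{\theta}\!\left(B^{\theta}(a,y,\pi);\delta\right)\right]q^{\theta}(y|a,\omega)\,\nu(dy)
\]
is jointly continuous in $(\pi,\delta)$, where I write $U^{\theta}(\cdot\,;\delta)$ to make the dependence of the value function on the discount factor explicit. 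Concretely, if $(\pi_{n},\delta_{n})\to(\pi,\delta)$ and $a_{n}\in A^{\theta}(\pi_{n},\delta_{n})$, finiteness of $\mathcal{A}$ lets me pass to a subsequence along which $a_{n}\equiv a$; joint continuity of $W^{\theta}$ then gives $W^{\theta}(a,\pi,\delta)\geq W^{\theta}(a',\pi,\delta)$ for every $a'$, i.e. $a\in A^{\theta}(\pi,\delta)$. The entire task therefore reduces to continuity of the objective, which in turn rests on joint continuity of $U^{\theta}$.

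First I would establish that, for each fixed $\delta<1$, the value function $U^{\theta}(\cdot\,;\delta)$ is the unique fixed point of the Bellman operator $T_{\delta}$ acting on the Banach space of bounded continuous functions on the compact simplex $\Delta\Omega^{\theta}$. Blackwell's sufficient conditions make $T_{\delta}$ a contraction of modulus $\delta$; boundedness of the fixed point follows from Assumption \ref{assu:Subjective-DGP}(iii), which makes each expected flow payoff $\int u(a,y)q^{\theta}(y|a,\omega)\nu(dy)$ finite and hence, over the finite set $\mathcal{A}\times\Omega^{\theta}$, uniformly bounded by some $\bar{U}$, giving $\|U^{\theta}(\cdot\,;\delta)\|_{\infty}\leq\bar{U}/(1-\delta)$. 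The key step is that $T_{\delta}$ maps continuous functions to continuous functions. Here I would use that, because the densities $q^{\theta}(\cdot|a,\omega)$ are strictly positive (Assumption \ref{assu:Subjective-DGP}(ii)), the predictive denominator $\sum_{\omega'}q^{\theta}(y|a,\omega')\pi(\omega')$ is a convex combination of positive numbers and hence bounded away from $0$, so the Bayesian operator $\pi\mapsto B^{\theta}(a,y,\pi)$ is continuous at every $\pi$ in the simplex, including the boundary. Given a bounded continuous $V$ and $\pi_{n}\to\pi$, the integrand $[u(a,y)+\delta V(B^{\theta}(a,y,\pi_{n}))]q^{\theta}(y|a,\omega)$ then converges pointwise and is dominated by the integrable function $(|u(a,y)|+\delta\|V\|_{\infty})q^{\theta}(y|a,\omega)$; dominated convergence, followed by the maximum over the finitely many actions, yields continuity of $T_{\delta}V$, and continuity of $U^{\theta}(\cdot\,;\delta)$ follows since it is the uniform limit of the continuous iterates $T_{\delta}^{n}V$.

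Next I would upgrade continuity in $\pi$ to joint continuity by controlling the dependence on $\delta$. Restricting $\delta$ to an arbitrary compact interval $[0,\bar{\delta}]$ with $\bar{\delta}<1$, the operators $T_{\delta}$ are uniform contractions of modulus at most $\bar{\delta}$, and they depend Lipschitz-continuously on $\delta$, since $\|T_{\delta}V-T_{\delta'}V\|_{\infty}\leq|\delta-\delta'|\,\|V\|_{\infty}$. The standard fixed-point stability estimate then gives $\|U^{\theta}(\cdot\,;\delta)-U^{\theta}(\cdot\,;\delta')\|_{\infty}\leq\bar{U}\,|\delta-\delta'|/(1-\bar{\delta})^{2}$, so $\delta\mapsto U^{\theta}(\cdot\,;\delta)$ is Lipschitz in sup-norm. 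Combining this uniform-in-$\pi$ convergence in $\delta$ with continuity in $\pi$ at the limiting $\delta$ yields joint continuity of $U^{\theta}$, whence $W^{\theta}$ is jointly continuous by one further dominated-convergence argument, now invoking joint continuity of $U^{\theta}$ together with continuity of $B^{\theta}$. This closes the argument sketched in the first paragraph.

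The main obstacle I anticipate is the continuity of the value function, and within it the delicate point is continuity of the Bayesian operator and the interchange of limit and integral on the whole simplex rather than only its interior. This is exactly where the positivity of the densities (Assumption \ref{assu:Subjective-DGP}(ii)) and the integrability condition (Assumption \ref{assu:Subjective-DGP}(iii)) do the work: positivity keeps the posterior map continuous up to the boundary and supplies a $\pi$-independent integrable dominating function, so no separate treatment of degenerate beliefs is required. The remaining ingredients, namely the contraction and fixed-point machinery and the final appeal to the Maximum Theorem, are routine given the finiteness of $\mathcal{A}$ and $\Omega^{\theta}$.
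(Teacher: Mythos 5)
Your proposal is correct, but it is worth noting how it relates to what the paper actually does: the paper offers no argument at all for Lemma \ref{lem:Uhc}, simply declaring it a standard consequence of \citet{blackwell1965discounted} and \citet{maitra1968discounted}. What you have written is, in effect, the argument hiding behind that citation, instantiated for this particular setting: contraction of the Bellman operator on continuous functions over the compact simplex, continuity of the Bayes map $\pi\mapsto B^{\theta}(a,y,\pi)$ up to the boundary via strict positivity of the densities (Assumption \ref{assu:Subjective-DGP}(ii)), dominated convergence with the $\pi$-independent dominating function $\left(|u(a,y)|+\delta\|V\|_{\infty}\right)q^{\theta}(y|a,\omega)$ supplied by Assumption \ref{assu:Subjective-DGP}(iii), a Lipschitz fixed-point stability estimate in $\delta$ on compact subintervals $[0,\bar{\delta}]$, and a final finite-action argmax argument in place of full Berge machinery. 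All of these steps check out, including the two points a generic citation glosses over: the flow payoff here is only $L^{1}$ (not bounded), which your uniform bound $\bar{U}=\max_{a,\omega}\int|u(a,y)|q^{\theta}(y|a,\omega)\nu(dy)$ over the finite set $\mathcal{A}\times\Omega^{\theta}$ handles correctly, and joint continuity in $(\pi,\delta)$ rather than continuity in $\pi$ alone, which your uniform-in-$\pi$ Lipschitz bound in $\delta$ delivers; restricting to $[0,\bar{\delta}]$ is harmless since any $\delta<1$ has such a compact neighborhood. The only trade-off is length: the paper's one-line citation is economical and appropriate for a genuinely classical fact, whereas your reconstruction makes the lemma self-contained and verifies that the paper's specific assumptions, rather than the stronger boundedness hypotheses common in textbook treatments, are indeed sufficient.
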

\begin{proof}
This is a standard result directly following from \citet{blackwell1965discounted}
and \citet{maitra1968discounted}.
\end{proof}
\begin{lem}
\label{lem:compact-Berk-NE}For any $\theta\in\Theta$, the set of
all Berk-Nash equilibria under $\theta$ is compact.
\end{lem}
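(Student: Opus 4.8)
The plan is to show that the set of Berk-Nash equilibria under $\theta$, call it $\mathcal{E}$, is a closed subset of the simplex $\Delta\mathcal{A}$; since $\mathcal{A}$ is finite, $\Delta\mathcal{A}$ is a compact subset of a Euclidean space, so closedness immediately delivers compactness. To prove closedness I would argue sequentially. Take any sequence of equilibria $\sigma_n \in \mathcal{E}$ with $\sigma_n \to \sigma$, and aim to show $\sigma \in \mathcal{E}$. By Definition \ref{def:Berk-Nash-Equilibria}, for each $n$ there is a supporting belief $\pi_n \in \Delta\Omega^\theta(\sigma_n)$ with $\sigma_n \in \Delta A_m^\theta(\pi_n)$. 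Because $\Omega^\theta$ is finite, $\Delta\Omega^\theta$ is compact, so I can pass to a subsequence along which $\pi_n \to \pi$ for some $\pi \in \Delta\Omega^\theta$. The limit $\pi$ is the candidate supporting belief for $\sigma$, and the two things left to verify are $\pi \in \Delta\Omega^\theta(\sigma)$ and $\sigma \in \Delta A_m^\theta(\pi)$.

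For the first requirement, I would use that the objective $K(\sigma,\omega') \coloneqq \sum_{a}\sigma(a)\, D_{KL}\!\left(q^{\ast}(\cdot|a)\parallel q^{\theta}(\cdot|a,\omega')\right)$ is affine, hence continuous, in $\sigma$ for each fixed $\omega' \in \Omega^\theta$, its coefficients being finite real numbers. Fix any $\omega \in \supp(\pi)$; since $\pi(\omega)>0$ and $\Omega^\theta$ is finite, $\pi_n(\omega)>0$ for all large $n$ along the subsequence, so $\omega \in \Omega^\theta(\sigma_n)$, i.e. $K(\sigma_n,\omega) \le K(\sigma_n,\omega'')$ for every $\omega'' \in \Omega^\theta$. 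Letting $n \to \infty$ and using continuity of $K$ in its first argument yields $K(\sigma,\omega) \le K(\sigma,\omega'')$ for all $\omega''$, so $\omega \in \Omega^\theta(\sigma)$. As $\omega$ was an arbitrary point of $\supp(\pi)$, this gives $\supp(\pi)\subseteq\Omega^\theta(\sigma)$, i.e. $\pi \in \Delta\Omega^\theta(\sigma)$. This step is precisely the upper hemicontinuity of the KL-minimizer correspondence $\sigma \mapsto \Omega^\theta(\sigma)$, which is automatic here because it is an $\arg\min$ over a finite set of functions each continuous in $\sigma$.

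For the second requirement, I would invoke Lemma \ref{lem:Uhc} at $\delta=0$, which makes the myopic optimal-action correspondence $A_m^\theta$ upper hemicontinuous in $\pi$. Fix any $a \in \supp(\sigma)$; then $\sigma(a)>0$ forces $\sigma_n(a)>0$ for large $n$, so $a \in A_m^\theta(\pi_n)$. Since $\pi_n \to \pi$ and $\mathcal{A}$ is finite, upper hemicontinuity gives $a \in A_m^\theta(\pi)$, whence $\supp(\sigma)\subseteq A_m^\theta(\pi)$, i.e. $\sigma \in \Delta A_m^\theta(\pi)$. Combining the two steps, $\pi$ supports $\sigma$ as a Berk-Nash equilibrium, so $\sigma \in \mathcal{E}$ and $\mathcal{E}$ is closed, hence compact. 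The only real content is the two upper-hemicontinuity facts: the best-response half is handed to us by Lemma \ref{lem:Uhc}, and the KL-minimizer half reduces to continuity of $K(\cdot,\omega')$. The one point to check carefully is that each $D_{KL}(q^{\ast}(\cdot|a)\parallel q^{\theta}(\cdot|a,\omega'))$ is a finite real number so that $K(\cdot,\omega')$ is a genuine affine map, which is where the integrability built into Assumption \ref{assu:Subjective-DGP} enters; granting that, passing to the limit in the two defining inequalities is routine.
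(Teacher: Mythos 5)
Your proof is correct, but it takes a different route from the paper's. You argue directly: extract a convergent subsequence of supporting beliefs $\pi_n \to \pi$, then verify that $\pi$ supports the limit strategy $\sigma$, using continuity of the $\sigma$-weighted KL objective to get $\supp(\pi)\subseteq\Omega^{\theta}(\sigma)$ and Lemma \ref{lem:Uhc} at $\delta=0$ to get $\supp(\sigma)\subseteq A_{m}^{\theta}(\pi)$. The paper instead argues by contradiction and never takes a limit of beliefs: assuming $\sigma_n\to\sigma$ with $\sigma$ not an equilibrium, it uses upper hemicontinuity of $\Omega^{\theta}(\cdot)$ to conclude $\Omega^{\theta}(\sigma_n)\subseteq\Omega^{\theta}(\sigma)$ for large $n$, so that any belief eligible to support $\sigma_n$ is also eligible for $\sigma$; since $\sigma$ is not an equilibrium, every such belief leaves some action of $\supp(\sigma)\subseteq\supp(\sigma_n)$ myopically suboptimal, contradicting that $\sigma_n$ is an equilibrium. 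Notably, the paper's argument never invokes upper hemicontinuity of the best-response correspondence $A_{m}^{\theta}$ at all --- the contradiction structure exploits the fact that equilibrium is a support-containment condition, so failure of optimality transfers from $\sigma$ to $\sigma_n$ through the same belief. What each approach buys: yours is the standard closed-graph argument and is more self-contained, since your step establishing $\pi\in\Delta\Omega^{\theta}(\sigma)$ is precisely a proof of the upper hemicontinuity of $\Omega^{\theta}(\cdot)$ that the paper merely asserts (via affineness of $\sigma\mapsto\sum_{a}\sigma(a)D_{KL}(q^{*}(\cdot|a)\parallel q^{\theta}(\cdot|a,\omega'))$, with finiteness of the coefficients resting on Assumption \ref{assu:Subjective-DGP}); the paper's is shorter and rests on weaker ingredients, since it dispenses with Lemma \ref{lem:Uhc} entirely. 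Both proofs are valid.
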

\begin{proof}
Denote the set of all Berk-Nash equilibria under model $\theta$ as $BN^{\theta}\subseteq\Delta\mathcal{A}$. Since $\Delta\mathcal{A}$ is bounded, we only need to show that $BN^{\theta}$ is closed. Suppose $\sigma$ is the limit of some sequence $\left(\sigma_{n}\right)_{n}$ of Berk-Nash equilibria, but $\sigma$ is not a Berk-Nash equilibrium, i.e. $\sigma\not\in BN^{\theta}$. Then for every belief $\pi\in\Delta\Omega^{\theta}\left(\sigma\right)$, we have that $\sigma\not\in\Delta A_{m}^{\theta}\left(\pi\right)$. Since $\Omega^{\theta}\left(\cdot\right)$ is upper hemicontinuous, it must be that $\Omega^{\theta}\left(\sigma_{n}\right)\subseteq\Omega^{\theta}\left(\sigma\right)$ for large enough $n$. Hence, we have $\sigma\not\in\Delta A_{m}^{\theta}\left(\pi\right)$ for every belief $\pi\in\Delta\Omega^{\theta}\left(\sigma_{n}\right)$ when $n$ is large enough. However, we know that $\supp \left(\sigma\right)\subseteq\supp \left(\sigma_{n}\right)$ for large enough $n$, which implies that $\sigma_{n}\not\in\Delta A_{m}^{\theta}\left(\pi\right)$ for large $n$. This is a contradiction.
\end{proof}

\section{\label{sec:Proofs-Main-Results}Proofs of Main Results}
\addtocontents{toc}{\protect\setcounter{tocdepth}{1}}

I prove all theorems under the assumption that the agent may be non-myopic within each model but is myopic across models (see Section \ref{subsec:patient agent}). This includes the special case where the agent is myopic everywhere.

\subsection{Proof of Theorem \ref{thm:Global-Robustness}: (ii)$\Leftrightarrow$(iii)} \label{proof:thm:Global-Robustness}

I first prove Lemma \ref{lem:Correctly-full-characterization}, which implies the necessity of a p-absorbing SCE for global robustness. I then prove Lemma \ref{lem:p-absorbingness}, which is then used to show sufficiency.
\begin{proof}[Proof of Lemma \ref{lem:Correctly-full-characterization}]

By Lemma \ref{lem:Convergence-lambda--correct}, on paths where $\theta$ is eventually forever adopted, beliefs $\pi_t^\theta$ and $\pi_t^{\theta'}$ both converge almost surely. Consider any $\hat{\omega}$ such that with positive probability, $m_{t}$ eventually equals $\theta$ and $\hat{\omega}\in\text{supp\ensuremath{\left(\pi_{\infty}^{\theta}\right)}}$. Let $A^{-}\left(\hat{\omega}\right)\equiv\left\{ a\in\mathcal{A}:q^{\theta}\left(\cdot|a,\hat{\omega}\right)\not=q^{\ast}\left(\cdot|a\right)\right\} $. I now show that every action in $A^{-}\left(\hat{\omega}\right)$ is played at most finite times a.s. on the paths where $m_{t}$ converges to $\theta$ and $\hat{\omega}\in\text{supp\ensuremath{\left(\pi_{\infty}^{\theta}\right)}}$. Suppose instead that actions in $A^{-}\left(\hat{\omega}\right)$ are played infinitely often. Then there must exist some $\gamma>0$ such that $\mathbb{E}\ln\frac{q^{\ast}\left(y|a_{t}\right)}{q^{\theta}\left(y|a_{t},\hat{\omega}\right)}>\gamma$ for infinitely many $t$. Since $\theta'$ is correctly specified, there exists a parameter $\omega^{\ast}\in\Omega^{\theta'}$ such that $q^{\ast}\left(\cdot|a\right)\equiv q^{\theta'}\left(\cdot|a,\omega^{\ast}\right),\forall a\in\mathcal{A}$. Hence, $\mathbb{E}\ln\frac{q^{\theta'}\left(y|a_{t},\omega^{\ast}\right)}{q^{\theta}\left(y|a_{t},\hat{\omega}\right)}>\gamma$ for infinitely many $t$. Notice that 
\begin{align*}
\frac{\ell_{t}(\theta')}{\ell_{t}(\theta)} & =\frac{\sum_{\omega^{\prime}\in\Omega^{\theta'}}\prod_{\tau=0}^{t-1}q^{\theta'}\left(y_{\tau}|a_{\tau},\omega^{\prime}\right)\pi_{0}^{\theta'}\left(\omega^{\prime}\right)}{\sum_{\omega\in\Omega^{\theta}}\prod_{\tau=0}^{t-1}q^{\theta}\left(y_{\tau}|a_{\tau},\omega\right)\pi_{0}^{\theta}\left(\omega\right)}\\
 & >\pi_{t}^{\theta}\left(\hat{\omega}\right)\frac{\pi_{0}^{\theta'}\left(\omega^{\ast}\right)}{\pi_{0}^{\theta}\left(\hat{\omega}\right)}\frac{\prod_{\tau=0}^{t-1}q^{\theta'}\left(y_{\tau}|a_{t},\omega^{\ast}\right)}{\prod_{\tau=0}^{t-1}q^{\theta}\left(y_{\tau}|a_{\tau},\hat{\omega}\right)}\\
 & =\pi_{t}^{\theta}\left(\hat{\omega}\right)\frac{\pi_{0}^{\theta'}\left(\omega^{\ast}\right)}{\pi_{0}^{\theta}\left(\hat{\omega}\right)}\exp\left[\sum_{\tau=0}^{t-1}1_{\left\{ a_{\tau}\in A^{-}\left(\hat{\omega}\right)\right\} }\ln\frac{q^{\theta'}\left(y_{\tau}|a_{t},\omega^{\ast}\right)}{q^{\theta}\left(y_{\tau}|a_{\tau},\hat{\omega}\right)}\right],
\end{align*}
which, by Lemma \ref{lem:kronecker-lemma}, a.s. increases to infinity as $t\rightarrow\infty$, contradicting the assumption that $m_{t}$ converges to $\theta$. Therefore, on the paths where $m_{t}\text{ eventually equals }\theta$, almost surely, there exists $T$ such that $a_{t}\in\mathcal{A}\backslash\cup_{\omega^{\prime}\in\text{supp\ensuremath{\left(\pi_{\infty}^{\theta}\right)}}}A^{-}\left(\hat{\omega}\right),\forall t>T$.

Since $q^{\theta}\left(\cdot|a,\omega^{\prime}\right)\equiv q^{\ast}\left(\cdot|a\right)$ for all $\omega^{\prime}\in\supp \left(\pi_{\infty}^{\theta}\right)$ and all $a\in\mathcal{A}\backslash\cup_{\omega^{\prime}\in\text{supp\ensuremath{\left(\pi_{\infty}^{\theta}\right)}}}A^{-}\left(\omega^{\prime}\right)$, the actions that are played in the limit have no experimentation value and are myopically optimal. Therefore, any strategy that takes support on the limit actions is a self-confirming equilibrium. Fixing a particular value of $\pi_{\infty}^{\theta}$ that is a limit belief for a positive measure of histories where $m_{t}\text{ eventually equals }\theta$, there exists a set of actions $\hat A\subseteq A_{m}^{\theta}\left(\pi_{\infty}^{\theta}\right)$ such that on those histories, the agent only plays actions from this set in the limit. Since $m_{t}$ eventually converges to $\theta$, it must be true that with positive probability, a $\theta$-modeler who inherits the switcher's prior and policy from the period when the last switch happens also only plays actions from $\hat A$ in the limit with positive probability. Therefore, take any strategy $\sigma$ with $\supp \left(\sigma\right)=\hat A$, it is a p-absorbing self-confirming equilibrium under $\theta$.
\end{proof}

\begin{lem}\label{lem:p-absorbingness}
If $\sigma$ is a p-absorbing SCE, then for any $\gamma \in (0,1)$ and $\epsilon>0$, there exists a full-support prior $\pi^\theta_0$ under which, with probability higher than $\gamma$, a $\theta$-modeler only plays actions in $\supp(\sigma)$ and her belief stays within $B_\epsilon(\Delta \Omega^\theta(\sigma))$ for all periods.\footnote{For any set of finite probability distributions $Z$ over sample space $S$, I use $B_\epsilon(Z)$ to denote the set of probability distributions whose minimum distance from any element in $Z$ is smaller than $\epsilon$, i.e. $B_\epsilon(Z)=\{z\in\Delta S:\min_{z^\prime\in Z}d_P(z,z^\prime)<\epsilon\}$, where $d_P$ represents the usual Prokhorov metric over $\Delta S$.}
\end{lem}
\begin{proof}[Proof of Lemma \ref{lem:p-absorbingness}]
 
Suppose there exists a p-absorbing SCE $\sigma$ under $\theta$. Consider the learning process of a $\theta$-modeler. By definition, there exists a full-support prior $\pi_{0}^{\theta}\in\Delta\Omega^{\theta}$ such that with positive probability, she eventually only plays actions in $\supp \left(\sigma\right)$ and each element of $\supp \left(\sigma\right)$ is played infinitely often (this is without loss of generality). Denote those paths by $\tilde{H}$. Then by a similar argument as in the proof of Lemma \ref{lem:Correctly-full-characterization}, $\pi_{t}^{\theta}$ a.s. converges to a limit $\pi_{\infty}^{\theta}$ on $\tilde{H}$, with $\supp \left(\pi_{\infty}^{\theta}\right)\subseteq \Omega^\theta(\sigma)=\{\omega\in\Omega^{\theta}:q^{\ast}\left(\cdot|a\right)=q^{\theta}\left(\cdot|a,\omega\right),\forall a\in\supp \left(\sigma\right)\}$. 

This implies the existence of an integer $T>0$ such that, with positive probability, we have (1) $a_t\in\supp (\sigma),\forall t\geq T$, (2) $\pi^\theta_t$ converges to a limit $\pi^\theta_\infty$ with $\supp (\pi^\theta_\infty)\subseteq \Omega^\theta(\sigma)$. Pick any $\epsilon>0$. Since the learning processes are Markov, we can find  a new prior $\tilde \pi_0^\theta \in B_\epsilon (\Delta \Omega^\theta(\sigma))$ under which, on a positive measure of histories, a $\theta$-modeler behaves such that (1$^\prime$) $a_t\in\supp (\sigma),\forall t\geq0$, and (2$^\prime$) the posterior $\tilde\pi^\theta_t$ almost surely converges to $\pi^\theta_\infty$ and never leaves $B_\epsilon (\Delta \Omega^\theta(\sigma))$ for all $t\geq 0$.

Denote the event described by (1$^\prime$) and (2$^\prime$) by $E$ . I now show for any constant $\gamma\in(0,1)$, there exists a full-support prior $\hat \pi^\theta_0$ under which $\mathbb P_D(E)>\gamma$. Suppose for contradiction that this is not true. Denote the probability of $E$ under any full-support prior by $\gamma(\pi^\theta)$ and let $\overline \gamma \coloneqq \sup_{\pi^\theta_0 \in\text{int}(\Delta\Omega^{\theta})} \gamma(\pi^\theta)$, where $\text{int}(\Delta\Omega^{\theta})$ denotes all full-support beliefs over $\Omega^\theta$, then it follows that $\overline\gamma<1$. By definition, for any $\psi>0$, there exists some prior $\pi^{\theta,\psi}_0$ such that $\gamma (\pi^{\theta,\psi}_0)>\overline \gamma-\psi$. But under this prior, with probability $1-\gamma (\pi^{\theta,\psi}_0)$, the dogmatic modeler  eventually  either arrives at some posterior $\pi^{\theta,\psi}_t$ that either leads her to play an action outside $\supp (\sigma)$ or leaves the neighborhood $B_\epsilon (\Delta \Omega^\theta(\sigma))$. Hence, there exists an integer $T>0$ such that $$\mathbb P_D\left(\gamma(\pi^{\theta,\psi}_T)=0\right)>\gamma (\pi^{\theta,\psi}_0)-\psi>\overline \gamma-2\psi.$$ Now, consider the supremum probability that $E$ is achieved if the agent starts with a prior that is equal to one of the possible posteriors $\pi^{\theta,\psi}_T$. Since 
$$\gamma(\pi^{\theta,\psi}_0)= \mathbb{E}_{h_T\in H_T}^{\mathbb P_D} \gamma(\pi^{\theta,\psi}_T),$$ 
we have
\begin{align*}
    \sup_{h_T\in H_T} \gamma(\pi^{\theta,\psi}_T) & > \frac{\gamma(\pi^{\theta,\psi}_0)}{1-\mathbb P_D\left(\gamma(\pi^{\theta,\psi}_T)=0\right)} \\
    & >  \frac{\overline \gamma -\psi}{1-\overline\gamma+2\psi}.
\end{align*}
But notice that when $\psi$ is sufficiently small, the term $\frac{\overline \gamma -\psi}{1-\overline\gamma+2\psi}$ is strictly larger than $\overline\gamma $, contradicting the assumption that $\overline \gamma$ is the supremum of $\gamma(\pi^\theta)$ over all full-support beliefs.  

\end{proof}

\begin{proof}[Proof of Theorem \ref{thm:Global-Robustness} (iii)$\Rightarrow$(ii)]
Pick any competing model $\theta^\prime\in\Theta$ and any full-support prior $\pi^{\theta^\prime}_0\in\Delta\Omega^{\theta'}$. Let 
$S_t \coloneqq \ell_t(\theta')/\ell_t(\theta^\ast),$ then $S_t$ is a martingale with respect to both $\mathbb P_D$ and $\mathbb P_S$ by Lemma \ref{lem:RL-converge}. By the Ville's maximal inequality for supermartingales, the probability that $S_n$ is bounded above by a positive constant larger than $1$ is bounded away from $0$. In particular, for any $\eta\in(1,\alpha)$,
$$\mathbb P_D(S_t\leq \eta,\forall t\geq 0)\geq 1- \frac{\mathbb E^{\mathbb P_D} S_0}{\eta}= 1-\frac{1}{\eta}.$$ Note that this inequality holds for any model $\theta^\prime$. 

Denote by $\sigma$ a p-absorbing SCE under $\theta$. By Lemma \ref{lem:p-absorbingness}, we know that for any $\eta\in(1,\alpha)$ and $\epsilon>0$, there exist a prior $\pi^\theta_0\in B_\epsilon(\Delta\Omega^\theta(\sigma))$ such that $\mathbb P_D(E)> 1/\eta$ (the event $E$ is defined in the proof of Lemma \ref{lem:p-absorbingness}). Therefore, 
\begin{align*}
    &\mathbb P_D (E\text{ occurs and }S_t\leq \eta,\forall t\geq 0) \\
    \geq \quad&\mathbb P_D (E)+\mathbb P_D(S_t\leq \eta,\forall t\geq 0)-1>0.
\end{align*}Denote the histories where $E\text{ occurs and }S_t\leq \eta,\forall t\geq 0$ by $\hat H$. When $\epsilon$ is small enough, we have that on $\hat H$,
\begin{align*}
\lambda_t=\frac{\ell_{t}(\theta^{\prime})}{\ell_{t}(\theta)} & =\frac{\sum_{\omega^{\prime}\in\Omega^{\theta^{\prime}}}\pi_{0}^{\theta^{\prime}}\left(\omega^{\prime}\right)\prod_{\tau=0}^{t-1}q^{\theta^{\prime}}\left(y_{\tau}|a_{\tau},\omega^{\prime}\right)}{\sum_{\omega\in\Omega^{\theta}}\pi_0^{\theta}\left(\omega\right)\prod_{\tau=0}^{t-1}q^{\theta}\left(y_{\tau}|a_{\tau},\omega\right)}\\
 & < \frac{\sum_{\omega^{\prime}\in\Omega^{\theta^{\prime}}}\pi_{0}^{\theta^{\prime}}\left(\omega^{\prime}\right)\prod_{\tau=0}^{t-1}q^{\theta^{\prime}}\left(y_{\tau}|a_{\tau},\omega^{\prime}\right)}{\pi_0^{\theta}\left(\Omega^\theta(\sigma)\right) \prod_{\tau=0}^{t-1}q^{\ast}\left(y_{\tau}|a_{\tau}\right)}  \\
 & \leq\frac{\eta}{1-\epsilon}<\alpha 
\end{align*}
where the first inequality follows from the fact that $\pi_0^{\theta}$ is full-support and the second inequality follows from the definition of $\hat H$. Thus, on $\hat{H}$, the switcher never makes any switch to the competing model $\theta^{\prime}$, i.e. $m_{t}=\theta,\forall t\geq0$, and her action choices would be identical to the $\theta$-modeler. Therefore, if we endow the switcher with the same prior $\pi_0^{\theta}$, event $\hat{H}$ also occurs with positive probability under $\mathbb{P}_{S}$.
\end{proof}

\subsection{Proof of Theorem \ref{thm:Global-Robustness}: (i)$\Rightarrow$(iii)}

I show that if $\theta$ is locally robust at some prior, then it must admit a p-absorbing SCE. Construct a competing model $\theta'$ as follows. Let $\theta'$ have the identical parameter space as $\theta$, i.e. $\Omega^{\theta'}=\Omega^{\theta}$, and let its predictions be given by $q^{\theta'}\left(\cdot|a,\omega\right)=\mu q^{\theta}\left(\cdot|a,\omega\right)+(1-\mu)  q^{\ast}\left(\cdot|a\right)$, for all $a\in\mathcal{A}$ and all $\omega\in\Omega^{\theta}$, where $\mu\in(0,1)$.  
For any $\epsilon>0$, when $\mu$ is close enough to $1$, we have $\theta'\in N_{\epsilon}\left(\theta\right)$. By the definition of local robustness, there exists $\epsilon>0$ such that $\theta$ persists against $\theta'$ under some full-support priors $\pi_0^\theta$ and $\pi_0^{\theta'}=\pi_0^\theta$. Consider any $\hat\omega\in\Omega^\theta$ such that 
$$\mathbb P_S \left(m_t\text{ eventually equals }\theta\text{ and }\liminf_{t\rightarrow\infty} \pi_t^\theta(\hat\omega)>0 \right)>0.$$
Let $A^-(\hat\omega)\coloneqq \{a\in\mathcal A:q^\theta(\cdot|a,\hat\omega)\not=q^\ast(\cdot|a)\}$. Then every action in $A^-(\hat\omega)$ is played at most finite times a.s. on the path where $m_t\text{ eventually equals }\theta\text{ and }\liminf_{t\rightarrow\infty} \pi_t^\theta(\hat\omega)>0$. Suppose instead that actions in $A^{-}\left(\hat{\omega}\right)$ are played infinitely often. Then there must exist some $\gamma>0$ such that $\mathbb{E}\ln\frac{q^{\ast}\left(y|a_{t}\right)}{q^{\theta}\left(y|a_{t},\hat{\omega}\right)}>\gamma$ for infinitely many $t$. So we have 
\begin{align*}
    \mathbb{E}\ln\frac{q^{\theta'}\left(y|a_{t},\hat{\omega}\right)}{q^{\theta}\left(y|a_{t},\hat{\omega}\right)} = \mathbb{E}\ln\left(\mu+(1-\mu)\frac{q^{\ast}\left(y|a_{t}\right)}{q^{\theta}\left(y|a_{t},\hat{\omega}\right)}\right) > (1-\mu)\gamma
\end{align*} where the inequality follows from the concavity of the logarithm function. Therefore, 
\begin{align*}
\lambda_t = & \frac{\sum_{\omega\in\Omega^{\theta}}\prod_{\tau=0}^{t-1}q^{\theta'}\left(y_{\tau}|a_{\tau},\omega\right)\pi_{0}^{\theta}\left(\omega\right)}{\sum_{\omega\in\Omega^{\theta}}\prod_{\tau=0}^{t-1}q^{\theta}\left(y_{\tau}|a_{\tau},\omega\right)\pi_{0}^{\theta}\left(\omega\right)}\\
> &  \pi_{t}^{\theta}\left(\hat{\omega}\right)\frac{\pi_{0}^{\theta}\left(\hat\omega\right)}{\pi_{0}^{\theta}\left(\hat{\omega}\right)}\frac{\prod_{\tau=0}^{t-1}q^{\theta'}\left(y_{\tau}|a_{t},\hat\omega\right)}{\prod_{\tau=0}^{t-1}q^{\theta}\left(y_{\tau}|a_{\tau},\hat{\omega}\right)}\\
= & \pi_{t}^{\theta}\left(\hat{\omega}\right) \exp \left[\sum_{\tau=0}^{t-1}1_{\left\{ a_{\tau}\in A^{-}\left(\hat{\omega}\right)\right\} }\ln\frac{q^{\theta'}\left(y_{\tau}|a_{t},\hat\omega\right)}{q^{\theta}\left(y_{\tau}|a_{\tau},\hat{\omega}\right)}\right],
\end{align*}
which, by Lemma \ref{lem:kronecker-lemma}, a.s. increases to infinity when $m_{t}$ converges to $\theta$ and $\liminf_{t\rightarrow\infty} \pi_t^\theta(\hat\omega)>0$.
This implies that, letting $\hat\Omega^\theta\coloneqq\{\omega\in\Omega^\theta:\liminf_{t\rightarrow\infty}\pi_t^\theta(\hat\omega)>0\}$, on the paths where $m_{t}\text{ eventually equals }\theta$, there almost surely exists $T$ such that $a_{t}\in\mathcal{A}\backslash\cup_{\hat\omega\in\hat\Omega^\theta}A^{-}\left(\hat{\omega}\right),\forall t>T$. Since $q^{\theta}\left(\cdot|a,\hat\omega\right)$ is equal to $q^{\ast}\left(\cdot|a\right)$ for all $\hat\omega\in\hat\Omega^\theta$ and all $a\in\mathcal{A}\backslash\cup_{\hat\omega\in\hat\Omega^\theta}A^{-}\left(\hat\omega\right)$, the posterior $\pi_t^\theta$ must converge to a limit $\pi_\infty^\theta$. The rest of the arguments are identical to those in the proof of Lemma \ref{lem:Correctly-full-characterization}; it follows that $\theta$ must admit a p-absorbing SCE.
\hfill $\Box$

\subsection{Proof of Corollary \ref{corr:Suff-robustness}}
I prove Corollary \ref{corr:Suff-robustness} assuming the agent is myopic. I show below that any quasi-strict SCE satisfies a stability property stronger than p-absorbingness, which implies Corollary \ref{corr:Suff-robustness}. Next, I show that if we   strengthen quasi-strictness with uniform quasi-strictness, then Corollary \ref{corr:Suff-robustness} holds for a non-myopic agent as well.

\begin{lem}\label{lem:quasi-strictness}
Suppose $\sigma$ is a quasi-strict SCE with supporting belief $\hat \pi$, then for any $\gamma\in(0,1)$, there exists $\epsilon>0$ such that starting from any prior $\pi_0^\theta\in B_\epsilon(\hat \pi)$, the probability that the $\theta$-modeler always plays actions in $\supp(\sigma)$ for all periods is strictly larger than $\gamma$.
\end{lem}

\begin{proof}

If $\sigma$ is quasi-strict, then $\supp \left(\sigma\right)=A^\theta_m(\hat \pi)$. Since $A^{\theta}$ is upper hemicontinuous (Lemma \ref{lem:Uhc}), there exists $\tilde\epsilon>0$ small enough such that $\supp \left(\sigma\right)\supset A^{\theta}\left({\pi}\right)$ for all ${\pi}\in B_{\tilde\epsilon}\left(\hat\pi\right)$.

Suppose $a_{t}\in\supp \left(\sigma\right),\forall t\geq0$, then for every $\omega\in\Omega^{\theta}\backslash \Omega^\theta(\sigma)$,
\begin{align*}
\mathbb{E}\left[\frac{\pi_{t}^{\theta}\left(\omega\right)}{\pi_{t}^{\theta}\left(\Omega^\theta(\sigma)\right)}|h_{t}\right] & =\mathbb{E}\left[\frac{\pi_{0}^{\theta}\left(\omega\right)\prod_{\tau=0}^{t-1}q^{\theta}\left(y_{\tau}|a_{\tau},\omega\right)}{\sum_{\omega^{\prime}\in \Omega^\theta(\sigma)}\pi_{0}^{\theta}\left(\omega^{\prime}\right)\prod_{\tau=0}^{t-1}q^{\theta}\left(y_{\tau}|a_{\tau},\omega^{\prime}\right)}|h_{t}\right]\\
 & =\mathbb{E}\left[\frac{\pi_{0}^{\theta}\left(\omega\right)}{\pi_{0}^{\theta}\left(\Omega^\theta(\sigma)\right)}\frac{\prod_{\tau=0}^{t-1}q^{\theta}\left(y_{\tau}|a_{\tau},\omega\right)}{\prod_{\tau=0}^{t-1}q^{\ast}\left(y_{\tau}|a_{\tau}\right)}|h_{t}\right]\\
 & =\frac{\pi_{0}^{\theta}\left(\omega\right)\prod_{\tau=0}^{t-2}q^{\theta}\left(y_{\tau}|a_{\tau},\omega\right)}{\pi_{0}^{\theta}\left(\Omega^\theta(\sigma)\right)\prod_{\tau=0}^{t-2}q^{\ast}\left(y_{\tau}|a_{\tau}\right)}=\frac{\pi_{t-1}^{\theta}\left(\omega\right)}{\pi_{t-1}^{\theta}\left(\Omega^\theta(\sigma)\right)}
\end{align*}
Therefore, $\frac{\pi_{t}^{\theta}\left(\omega\right)}{\pi_{t}^{\theta}\left(\Omega^\theta(\sigma)\right)}$ is a non-negative supermartingale for every $\omega\in\Omega^{\theta}\backslash \Omega^\theta(\sigma)$. It follows that $\frac{\pi_{t}^{\theta}\left(\Omega^{\theta}\backslash \Omega^\theta(\sigma)\right)}{\pi_{t}^{\theta}\left(\Omega^\theta(\sigma)\right)}$ is also non-negative supermartingale. By the Ville's maximal inequality for supermartingales, for any $\eta>0$, 
\[
\mathbb{P}_{D}\left(\frac{\pi_{t}^{\theta}\left(\Omega^{\theta}\backslash \Omega^\theta(\sigma)\right)}{\pi_{t}^{\theta}\left(\Omega^\theta(\sigma)\right)}\geq\eta\text{ for some }t\right)<\frac{1}{\eta}\frac{\pi_{0}^{\theta}\left(\Omega^{\theta}\backslash \Omega^\theta(\sigma)\right)}{\pi_{0}^{\theta}\left(\Omega^\theta(\sigma)\right)}.
\]
Since $\pi_{t}^{\theta}\left(\Omega^\theta(\sigma)\right)=1-\pi_{t}^{\theta}\left(\Omega^{\theta}\backslash \Omega^\theta(\sigma)\right)$, the above inequality implies that
\[
\mathbb{P}_{D}\left(\pi_{t}^{\theta}\left(\Omega^{\theta}\backslash \Omega^\theta(\sigma)\right)\geq\frac{\eta}{1+\eta}\text{ for some }t\right)<\frac{1}{\eta}\frac{\pi_{0}^{\theta}\left(\Omega^{\theta}\backslash \Omega^\theta(\sigma)\right)}{\pi_{0}^{\theta}\left(\Omega^\theta(\sigma)\right)}.
\]
Pick some $\epsilon\in\left(0,\tilde\epsilon\right)$ and $\pi_{0}^{\theta}\in B_{\epsilon}\left(\hat\pi\right)$, then $\pi_{0}^{\theta}\left(\Omega^\theta \backslash \Omega^\theta(\sigma)\right)<\epsilon$. Notice that the ratio $\frac{\pi_{t}^{\theta}\left(\omega\right)}{\pi_{t}^{\theta}\left(\omega^{\prime}\right)}$ remain unchanged throughout all periods for any $\omega,\omega^{\prime}\in \Omega^\theta(\sigma)$. Hence, if $\pi_{t}^{\theta}\not\in B_{\tilde\epsilon}\left(\hat\pi\right)$ for some $t\geq0$, then there exists $t$ such that $\pi_{t}^{\theta}\left(\Omega^{\theta}\backslash \Omega^\theta(\sigma)\right)\geq\pi_{0}^{\theta}\left(\Omega^{\theta}\backslash \Omega^\theta(\sigma)\right)+\tilde\epsilon-\epsilon$. Using the previous inequality,
\begin{align*}
 & \mathbb{P}_{D}\left(\pi_{t}^{\theta}\not\in B_{\tilde\epsilon}\left(\hat\pi\right)\text{ for some }t\geq0\right)\\
\leq \quad & \mathbb{P}_{D}\left(\pi_{t}^{\theta}\left(\Omega^{\theta}\backslash \Omega^\theta(\sigma)\right)\geq\pi_{0}^{\theta}\left(\Omega^{\theta}\backslash \Omega^\theta(\sigma)\right)+\tilde\epsilon-\epsilon\text{ for some }t\right) \\
<\quad & \left(\frac{1}{\pi_{0}^{\theta}\left(\Omega^{\theta}\backslash \Omega^\theta(\sigma)\right)+\tilde\epsilon-\epsilon}-1\right)\frac{\pi_{0}^{\theta}\left(\Omega^{\theta}\backslash \Omega^\theta(\sigma)\right)}{\pi_{0}^{\theta}\left(\Omega^\theta(\sigma)\right)} \\
<\quad & \left(\frac{1}{\tilde\epsilon-\epsilon}-1\right)\frac{\epsilon}{1-\epsilon}
\end{align*}
which converges to $0$ as $\epsilon$ approaches $0$.
This implies that for any $\gamma\in(0,1)$ we have $\mathbb{P}_{D}\left(\pi_{t}^{\theta}\in B_{\tilde\epsilon}\left(\hat\pi\right),\forall t\geq0\right)>\gamma$ when $\epsilon$ is sufficiently small. Notice that $\pi_{t}^{\theta}\in B_{\tilde\epsilon}\left(\hat\pi\right),\forall t\geq0$ in turn implies that $a_{t}\in\supp \left(\sigma\right),\forall t\geq0$, validating our assumption. 

\end{proof}

Say that a SCE or a BN-E $\sigma$ with supporting belief $\pi$ is \textit{uniformly quasi-strict} if $\supp \left(\sigma\right)=A_{m}^{\theta}\left(\pi\right)$
for every belief $\pi\in\Delta\Omega^{\theta}\left(\sigma\right)$. The following lemma implies that given any discount factor, a uniformly quasi-strict SCE is p-absorbing. 
\begin{lem}\label{lem:uniform quasi-strictness} Suppose the $\theta$-modeler has discount factor $\delta\in(0,1)$.
Suppose $\sigma$ is a uniformly quasi-strict SCE with supporting belief $\hat \pi$, then for any $\gamma\in(0,1)$, there exists $\epsilon>0$ such that starting from any prior $\pi_0^\theta\in B_\epsilon(\hat \pi)$, the probability that the $\theta$-modeler always plays actions in $\supp(\sigma)$ for all periods is strictly larger than $\gamma$.
\end{lem}

\begin{proof}
Since $\sigma$ is uniformly quasi-strict with supporting belief $\pi$, $\supp \left(\sigma\right)$ contains all myopically optimal actions against each degenerate belief $\delta_{\omega}$ concentrated on $\omega\in\supp \left(\pi\right)$. In addition, $\supp \left(\sigma\right)$ must be optimal against $\delta_{\omega}$ for an agent who maximizes discounted utility, because the dynamic programming problem described by \eqref{eq:dynamic-programming} reduces to a static maximization problem when the belief is degenerate. This implies that $\supp \left(\sigma\right)$ is also (dynamically) optimal against $\pi$. Further, since $A^{\theta}$ is upper hemicontinuous (by Lemma \ref{lem:Uhc}), there exists $\tilde\epsilon>0$ small enough such that $\supp \left(\sigma\right)=A^{\theta}\left(\tilde{\pi}\right)$ for all $\tilde{\pi}\in B_{\tilde\epsilon}\left(\pi\right)$. The rest of the proof is identical to the proof of Lemma \ref{lem:quasi-strictness}.
\end{proof}

\subsection{Proof of Theorem \ref{thm:concentrated-prior}(i)}
\paragraph{Necessity.} Suppose $\theta$ is locally robust at some full-support prior $\pi_0^\theta$. It follows from Theorem \ref{thm:Global-Robustness} and identifiability that there exists $\hat\omega\in\Omega^\theta$ such that the degenerate belief $\delta_\omega$ supports a p-absorbing SCE under $\theta$, i.e. $C^\theta\not=\emptyset$. 

\paragraph{Sufficiency.} Suppose $ C^\theta\not=\emptyset$. Take any $\hat\omega\in C^\theta$ and any full-support prior $\pi_0^\theta$.  Consider the probability measure $\mathbb P_S^{\theta,\hat\omega}$,  i.e. the probability measure over infinite histories $H$ induced by the switcher if the true DGP is as described by $\theta$ and $\hat\omega$. By identifiability and Lemma \ref{lem:kronecker-lemma}, the posterior $\pi_t^\theta$ converges to $\delta_{\hat\omega}$ almost surely under $\mathbb P_S^{\theta,\hat\omega}$. So for any $\mu>0$, we can find a positive measure of length-$T$ histories $\hat H_T$ where the posterior for model $\theta$ enters the $\mu$-neighborhood of $\delta_{\hat\omega}$, i.e. $\pi_T^\theta\in B_\mu(\delta_{\hat\omega})$. Let $\mu$ be small enough so that the posterior $\pi_T^\theta(\hat\omega)>1/\sqrt{\alpha}$. By absolute continuity (Assumption \ref{assu:application-2}), we know $\hat H_T$ is also realized with positive probability under the true measure $\mathbb P_S$. 

Next I show that when $\mathcal Y$ is discrete, we can choose $\epsilon$ to be sufficiently small such that for any $\theta'\in N_\epsilon(\theta)$ and prior $\pi_0^{\theta'}\in N_\epsilon(\pi_0^\theta;\theta,\theta')$, the Bayes factor $\lambda_t$ never exceeds $\sqrt{\alpha}$ before period $T$.\footnote{The proof for the case of a continuous $\mathcal Y$ can be found in  Appendix \ref{sec:Online-Appendix}.} For each $\omega\in\Omega^\theta$, with a slight abuse of notation, denote the set of ``nearby'' parameters within $\theta'$ by  $N_\epsilon(\omega;\theta')\coloneqq \{\omega'\in\Omega^{\theta'}:d(Q^{\theta,\omega},Q^{\theta',\omega'})\leq \epsilon\}$. Then we have $q^{\theta'}(y|a,\omega')\leq q^{\theta}(y|a,\omega)+\epsilon$ for all $y\in\mathcal Y$, $a\in\mathcal A$, and $\omega'\in N_\epsilon(\omega;\theta')$. Let $\epsilon$ be sufficiently small such that $N_\epsilon(\omega;\theta')$ is disjoint across $\Omega^\theta$. By construction we have $\pi_0^{\theta'}(N_\epsilon(\omega;\theta'))\leq \pi_0^\theta(\omega) + \epsilon$. Hence,
\begin{align*}
    \lambda_t = \frac{\ell_t(\theta')}{\ell_t(\theta)}= & \frac{\sum_{\omega\in\Omega^\theta}\sum_{\omega'\in  N_\epsilon(\omega;\theta')}\pi_{0}^{\theta}\left(\omega'\right)\prod_{\tau=0}^{t-1}q^{\theta'}\left(y_{\tau}|a_{\tau},\omega'\right)}{\sum_{\omega\in\Omega^{\theta}}\pi_{0}^{\theta}\left(\omega\right)\prod_{\tau=0}^{t-1}q^{\theta}\left(y_{\tau}|a_{\tau},\omega\right)} \\
    \leq & \frac{\sum_{\omega\in\Omega^\theta}(\pi_{0}^{\theta}\left(\omega\right)+\epsilon)\prod_{\tau=0}^{t-1}(q^{\theta}\left(y_{\tau}|a_{\tau},\omega\right)+\epsilon)}{\sum_{\omega\in\Omega^{\theta}}\pi_{0}^{\theta}\left(\omega\right)\prod_{\tau=0}^{t-1}q^{\theta}\left(y_{\tau}|a_{\tau},\omega\right)} \\
    = & \max_{\omega\in\Omega^{\theta}}  \left(1+\frac{\epsilon}{\pi_0^\theta(\omega)}\right)\prod_{\tau=0}^{t-1}\left(1+\frac{\epsilon}{q^{\theta}\left(y_{\tau}|a_{\tau},\omega\right)}\right).
\end{align*}
We can choose $\epsilon$ to be sufficiently small so that $\lambda_t$ does not exceed $\sqrt{\alpha}>1$ for $t=0,...,T$ regardless of the action and outcome history. 

Finally, note that for any $t>T$, we can write 
\begin{align*}
    \lambda_t =  \lambda_T \frac{\sum_{\omega'\in\Omega^{\theta'}} \prod_{\tau = T}^{t-1}\pi_\tau^{\theta'}(\omega') q^{\theta'}(y_\tau|a_\tau,\omega')}{\sum_{\omega\in\Omega^{\theta}} \prod_{\tau = T}^{t-1}\pi_\tau^{\theta}(\omega) q^{\theta}(y_\tau|a_\tau,\omega)}
    \coloneqq  \lambda_T \lambda_{T,t}.
\end{align*}
Recall that on histories $\hat H_T$ we have $\pi_T^{\theta}(\hat\omega)>1/\sqrt{\alpha}$, so we can use the same arguments as in the proof of Theorem \ref{thm:concentrated-prior}(ii) to show that $\mathbb P_S (\lambda_{T,t}\leq \sqrt{\alpha}, \forall t>T)>0$. Since on $\hat H_T$ we have no switches before period $T$ and $\epsilon$ is small enough such that $\lambda_T<\sqrt{\alpha}$, we have $\mathbb P_S (\lambda_t\leq \alpha, \forall t\geq 0)\geq \mathbb P_S(\hat H_T) \cdot \mathbb P_S(\lambda_{T,t}\leq \sqrt{\alpha}, \forall t>T)>0$. \hfill$\Box$

\subsection{Proof of Theorem \ref{thm:concentrated-prior}(ii) and (iii)}

Note that part (iii) immediately follows from part (ii). I now prove part (ii) in two steps. 

\paragraph{Necessity.} Suppose $\theta$ is globally robust at prior $\pi_0^\theta$. By Theorem \ref{thm:Global-Robustness}, we know that there must exist a p-absorbing SCE under $\theta$. By identifiability, any SCE can only be supported by a pure belief, and hence $ C^\theta\not=\emptyset$. Pick any prior $\pi_0^\theta$ such that $\pi_0^\theta( C^\theta)<1/\alpha$.

Let us construct a competing model $\theta'$ such that it contains the prediction of $ C^\theta$ and the true DGP. In particular, let $\Omega^{\theta'} =  C^\theta\cup\{\omega^\ast\}$ and let predictions $q^{\theta'}$ satisfy that for all $a\in\mathcal A$,
\begin{align*}
    q^{\theta'}(\cdot|a,\omega) =  
    \begin{cases}
     q^{\theta}(\cdot|a,\omega) & \text{ if }\omega\in C^\theta,\\
     q^\ast(\cdot|a) & \text{ if }\omega=\omega^\ast.
     \end{cases}
\end{align*}
In addition, pick some $\epsilon\in(0,1)$ and let the prior $\pi_0^{\theta'}$ be such that 
\begin{align*}
    \pi_0^{\theta'}(\omega) =  \begin{cases}
     (1-\epsilon) \frac{\pi_0^{\theta}(\omega)}{\pi_0^{\theta}( C^\theta)}  & \text{ if }\omega\in C^\theta,\\
     \epsilon & \text{ if }\omega=\omega^\ast.\end{cases}
\end{align*}

Since $\theta'$ is correctly specified, by Lemma \ref{lem:Correctly-full-characterization}, on the paths where $m_t$ eventually equals $\theta$, the agent eventually only play actions in the support of a SCE almost surely, and her posterior converges to a supporting belief of the SCE, i.e. $\pi_t^\theta( C^\theta)\xrightarrow{\text{a.s.}}1$. By construction 
$$ \ell_t(\theta') = (1-\epsilon) \sum_{\omega\in C^\theta}\frac{\pi_0^{\theta}(\omega)}{\pi_0^{\theta}( C^\theta)}\ell_t(\theta,\omega)+\epsilon \ell_t(\theta^\ast),$$ so we have
\begin{align*}
    \frac{\ell_t(\theta')}{\ell_t(\theta)} & = (1-\epsilon)\frac{\pi_t^\theta( C^\theta)}{\pi_0^\theta( C^\theta)} + \epsilon\frac{\ell_t(\theta^\ast)}{\ell_t(\theta)}.
\end{align*}
Since $\theta'$ is correctly specified, by Lemma \ref{lem:Correctly-full-characterization}, on paths where $m_t$ eventually equals $\theta$, the first term almost surely converges to $(1-\epsilon)\frac{1}{\pi_0^\theta( C^\theta)}$. Since $\pi_0^\theta( C^\theta)<1/\alpha$, there exists a small enough $\epsilon$ such that $\frac{\ell_t(\theta')}{\ell_t(\theta)}>\alpha$ for sufficiently large $t$, contradicting the assumption that $m_t$ eventually equals $\theta$.

\paragraph{Sufficiency.} Suppose $ C^\theta\not=\emptyset$ and $\pi_0^\theta( C^\theta)\geq 1/\alpha$. Pick any competing model $\theta'$ and a full-support prior $\pi_0^{\theta'}$. We will show that model $\theta$ persists against $\theta'$ at the given priors.  

Define a new probability measure $\hat{\mathbb P}$ over the action and outcome histories $H$ such that for any histories $\hat H\subset H$,
$$\hat{\mathbb P}\left(\hat H\right)=\sum_{\omega\in C^{\theta}}\frac{\pi_{0}^{\theta}\left(\omega\right)}{\pi_0^\theta( C^\theta)}\mathbb{P}_{S}^{\theta,\omega}\left(\hat H\right),$$ where $\mathbb{P}_{S}^{\theta,\omega}$ is the probability measure over histories induced by the agent switcher if the true DGP is identical to the DGP prescribed by $\theta$ and $\omega$. Define the following process,
$$\hat \lambda_t\coloneqq \frac{1}{\pi_0^\theta( C^\theta)}\frac{\ell_t(\theta')}{\sum_{\omega\in C^{\theta}}\frac{\pi_{0}^{\theta}\left(\omega\right)}{\pi_0^\theta( C^\theta)}\ell_t(\theta,\omega)}.$$ Then it is a martingale w.r.t. $\hat{\mathbb P}$
with $\mathbb E^{\hat{\mathbb P}} (\hat\lambda_0)=1/\pi_0^\theta( C^\theta)$.
By definition, $\hat\lambda_t \geq \lambda_t$, where the equality holds only if $\Omega^\theta =  C^\theta$. By Ville's maximum inequality, 
\begin{align*}
    \hat{\mathbb P}(\lambda_t \leq \alpha,\forall t)\geq \hat{\mathbb P}(\hat\lambda_t \leq \alpha,\forall t)> 1- \frac{1}{\pi_0^\theta( C^\theta)\alpha}\geq 0.
\end{align*}
This then implies that there exists $\hat\omega\in C^\theta$ such that 
\begin{align*}
    {\mathbb P^{\theta,\hat\omega}_S}(\lambda_t \leq \alpha,\forall t)> 0.
\end{align*}
Since $\theta$ has no traps, it is identifiable and all of its p-absorbing SCEs are quasi-strict. Identifiability implies that $\mathbb P^{\theta,\hat\omega}_S(\pi_t^\theta(\hat\omega))\xrightarrow{\text{a.s.}}1$. With quasi-strictness, by Lemma \ref{lem:quasi-strictness}, there exists $\epsilon>0$ such that the optimal actions must be in the support of a SCE when $\pi_t^\theta(\hat\omega)>1-\epsilon$. Taken together, the no-trap conditions imply that there exists $T>0$ such that with positive probability (measured by $\mathbb P^{\theta,\hat\omega}_S$), the agent plays only SCE actions after period $T$ and never switches. Denote the set of such histories by $\hat H$. Moreover, for any $\hat h\in\hat H$, denote the observable history for the first $T$ periods by $\hat h_{T-}$ and the history after the first $T$ periods by $\hat h_{T+}$. Since $T$ is finite, by absolute continuity (Assumption \ref{assu:Subjective-DGP}), for any $\hat h\in\hat H$, the history $\hat h_{T-}$ also occurs with positive probability under the true measure $\mathbb P_S$. Conditional on $\hat h_{T-}$, since the agent plays only SCE actions on $\hat H$ after the first $T$ periods, the two probability measures $\mathbb P^{\theta,\hat{\omega}}_S$ and $\mathbb P_S$ over $\hat H$ are identical to each other. Therefore,
\begin{align*}
    \mathbb P_S(\hat H_T) &= \sum_{\hat h\in\hat H_T}\mathbb P_S(\hat h_{T-}) \mathbb P_S(\hat h_{T+}|\hat h_{T-}) \\
    &= \sum_{\hat h\in\hat H_T}\mathbb P_S(\hat h_{T-}) \mathbb P^{\theta,\hat{\omega}}_S(\hat h_{T+}|\hat h_{T-}) \\
     & \geq  \min_{\tilde h\in\hat H_T}\frac{\mathbb P_S(\tilde h_{T-})}{\mathbb P^{\theta,\hat{\omega}}_S(\tilde h_{T-})} \mathbb P^{\theta,\hat{\omega}}_S(\hat H_T)>0.
\end{align*}
This means that with positive probability (under the true probability measure $\mathbb P_S$), the agent never switches to $\theta'$. Therefore, model $\theta$ persists against $\theta'$.

\subsection{Proof of Corollary \ref{cor:alpha equals 1}}
Note that the proof of Theorem \ref{thm:concentrated-prior} does not use the assumption that $\alpha>1$. Therefore, Corollary \ref{cor:alpha equals 1} is immediately implied by Theorem \ref{thm:concentrated-prior}.

\subsection{Proof of Theorem \ref{thm:Local-lobustness-w-convergence}}\label{subsec:Proof-thm-local-robustness-w-convergence}
I show a more generalized version of Theorem \ref{thm:Local-lobustness-w-convergence}: \smallskip

\emph{Suppose that the action frequency of a $\theta$-modeler almost surely converges under all full-support priors. If model $\theta\in\Theta^{\bar\theta}$ is $\bar\theta$-constrained locally robust, then it must admit a p-absorbing BN-E $\sigma$ at which $\theta$ is locally KL-minimizing within the $\bar\theta$-family.}\smallskip

Suppose no Berk-Nash equilibrium $\sigma$ under $\theta$ is locally KL-minimizing at $\sigma$ within the $\bar\theta$-family, but $\theta$ is $\bar\theta$-constrained locally robust within the $\epsilon$-neighborhood for some $\epsilon>0$. Define a function $K^{\bar\theta}:\Delta\mathcal{A}\times\Omega^{\bar\theta}\rightarrow\mathbb{R}$, where
\begin{equation}
K^{\bar\theta}\left(\sigma,\omega\right)\coloneqq\sum_{\mathcal{A}}\sigma\left(a\right)D_{KL}\left(q^{\ast}\left(\cdot|a\right)\parallel q\left(\cdot|a,\omega\right)\right).
\end{equation}
That is, $K^{\bar\theta}\left(\sigma,\omega\right)$ is the $\sigma$-weighted KL divergence between the prediction of $\omega$ and the true DGP.
Take any Berk-Nash equilibrium $\sigma\in\Delta\mathcal{A}$ under $\theta$. By assumption, there must exist some parameter $\omega^{\prime}\in\Omega^{\bar\theta}$ such that $\min_{\omega\in\Omega^{\theta}}\left\Vert \omega-\omega^{\prime}\right\Vert \leq\epsilon$ and 
\begin{equation}
\min_{\omega\in\Omega^{\theta}}K^{\bar\theta}\left(\sigma,\omega\right)>K^{\bar\theta}\left(\sigma,\omega^{\prime}\right).\label{eq:BNE-finite-cover}
\end{equation}
By continuity, there exists some open neighborhood of $\sigma$, denoted as $O_{\sigma}$, in which $\omega^{\prime}$ yields a strictly lower KL divergence than $\Omega^{\theta}$, i.e. $\forall\sigma^{\prime}\in O_{\sigma},$ we have \[ \min_{\omega\in\Omega^{\theta}}K^{\bar\theta}\left(\sigma^{\prime},\omega\right)>K^{\bar\theta}\left(\sigma^{\prime},\omega^{\prime}\right). \] We know from Lemma \ref{lem:compact-Berk-NE} that the set of Berk-Nash equilibria under $\theta$ is compact. Therefore, by the Heine-Borel theorem, there must exist a finite number of $\epsilon$-close parameters, collected by a set $R_{\epsilon}\subset B_\epsilon(\Omega^\theta)$, such that for any Berk-Nash equilibrium $\sigma$, we can find some parameter from the set $R_{\epsilon}$ such that the above inequality \eqref{eq:BNE-finite-cover} holds.

Consider a competing model $\theta'$ with an expanded parameter space $\Omega^{\theta'}=\Omega^{\theta}\cup R_{\epsilon}$, and some prior $\pi_{0}^{\theta'}$ that allocates a total probability of $\epsilon$ evenly to $R_{\epsilon}$. Formally, let 
\begin{align*}
\pi_{0}^{\theta'}\left(\omega\right) & =\left(1-\epsilon\right)\pi_{0}^{\theta}\left(\omega\right),\forall\omega\in\Omega^{\theta},\\
\pi_{0}^{\theta'}\left(\omega\right) & =\frac{\epsilon}{\left|R_{\epsilon}\right|},\forall\omega\in R_{\epsilon}.
\end{align*}
Consider all possible histories in which the switcher eventually adopts $\theta$. It must be that $\lim\sup_{t}\ell_{t\rightarrow\infty}(\theta')/\ell_{t}(\theta)\leq\alpha$ on those paths. Note that the switcher's action frequency a.s. converges to a Berk-Nash equilibrium by assumption. Consider the paths where this limit equilibrium is $\sigma$. Then we can find some $T>0$ and $\eta>0$ such that $\forall t>T$ , there exists $\omega^{\prime\prime}\in R_{\epsilon}$ such that $K^{\bar\theta}\left(\sigma_{t},\omega^{\prime\prime}\right)-K^{\bar\theta}\left(\sigma_{t},\omega\right)<-\eta,\forall\omega\in\Omega^{\theta}$. It then follows that
\begin{align*}
\lambda_{t}^{\theta'} & =\frac{\sum_{\omega^{\prime}\in\Omega^{\theta'}}\pi_{0}^{\theta'}\left(\omega^{\prime}\right)\prod_{\tau=0}^{t-1}q^{\theta'}\left(y_{\tau}|a_{\tau},\omega^{\prime}\right)}{\sum_{\omega\in\Omega^{\theta}}\pi_{0}^{\theta}\left(\omega\right)\prod_{\tau=0}^{t-1}q^{\theta}\left(y_{\tau}|a_{\tau},\omega\right)}\\
 & >\frac{\frac{\epsilon}{\left|R_{\epsilon}\right|}\prod_{\tau=0}^{t-1}q^{\theta'}\left(y_{\tau}|a_{\tau},\omega^{\prime\prime}\right)}{\sum_{\omega\in\Omega^{\theta}}\pi_{0}^{\theta}\left(\omega\right)\prod_{\tau=0}^{t-1}q^{\theta}\left(y_{\tau}|a_{\tau},\omega\right)}\\
 & =\frac{\epsilon}{\left|R_{\epsilon}\right|}\frac{1}{\sum_{\omega\in\Omega^{\theta}}\pi_{0}^{\theta}\left(\omega\right)\exp\left(t\left(K^{\bar\theta}\left(\sigma_{t},\omega^{\prime\prime}\right)-K^{\bar\theta}\left(\sigma_{t},\omega\right)\right)\right)}\\
 & >\frac{\epsilon}{\left|R_{\epsilon}\right|}\exp\left(t\eta\right)
\end{align*}
Therefore, for any $\alpha>0$, almost surely, $\ell_{t}^{\theta'}/\ell_{t}(\theta)$ exceeds $\alpha$ for infinitely many $t$, contradicting the assumption that $\lim\sup_{t\rightarrow\infty}\ell_{t}^{\theta'}/\ell_{t}(\theta)\leq\alpha$ on those paths. Therefore, $\theta$ does not persist against $\theta'$. Since the choice of $\epsilon$ is arbitrary, this implies that $\theta$ is not ${\bar\theta}$-constrained locally robust. \hfill $\Box$

\subsection{A Necessary Condition for Constrained Local Robustness}
The following theorem states a necessary condition for constrained local robustness in a special environment with binary actions. 
\begin{thm}
\label{thm:Local-dim2-necessary}Suppose $\left|\mathcal{A}\right|=2$. Then a model $\theta\in\Theta^{\bar\theta}$ is $\bar\theta$-constrained locally robust only if it admits a BN-E $\sigma$ at which $\theta$ is locally KL-minimizing within the $\bar\theta$-family. 
\end{thm}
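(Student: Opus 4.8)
The plan is to prove the contrapositive: assuming that no Berk-Nash equilibrium under $\theta$ is locally KL-minimizing w.r.t. $\Omega^{q}$, I would construct a neighbor competing model against which $\theta$ fails to persist, following the template of the proof of Theorem \ref{thm:Local-lobustness-w-convergence}. For each BN-E $\sigma$, the failure of local KL-minimization delivers a parameter $\omega'\in\Omega^{q}$ within distance $\epsilon$ of $\Omega^{\theta}$ that strictly lowers the $\sigma$-weighted KL divergence at $\sigma$, hence on an open neighborhood of $\sigma$ by continuity. Since the set $BN^{\theta}$ of all Berk-Nash equilibria is compact (Lemma \ref{lem:compact-Berk-NE}), Heine--Borel extracts a finite collection $R_{\epsilon}$ of such parameters whose neighborhoods cover $BN^{\theta}$. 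I then take $\theta^{c}$ with $\Omega^{\theta^{c}}=\Omega^{\theta}\cup R_{\epsilon}$ and a prior placing a small total mass $\epsilon$ spread over $R_{\epsilon}$, so that $\theta^{c}\in N_{\epsilon}^{q}(\theta)$ and $\pi_{0}^{\theta^{c}}\in D_{\epsilon}^{q}(\pi_{0}^{\theta})$.

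The analytic engine is the a.s.\ estimate, obtained from Lemma \ref{lem:kronecker-lemma}, that $\tfrac{1}{t}\ln\bigl(l_{t}^{\theta^{c}}/l_{t}^{\theta}\bigr)=G(\sigma_{t})+o(1)$, where $G(\sigma)\coloneqq\min_{\omega\in\Omega^{\theta}}K^{q}(\sigma,\omega)-\min_{\omega'\in\Omega^{\theta^{c}}}K^{q}(\sigma,\omega')\ge0$ is the fit advantage of $\theta^{c}$ at the empirical action frequency $\sigma_{t}$, with $K^{q}$ as in the proof of Theorem \ref{thm:Local-lobustness-w-convergence}. Because $R_{\epsilon}$ strictly improves fit on a neighborhood of every BN-E, $G$ is continuous and strictly positive on a neighborhood $V$ of $BN^{\theta}$. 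On any path where the switcher adopts $\theta$ forever, her play eventually coincides with that of a dogmatic modeler, so it suffices to show that this modeler's empirical frequency $\sigma_{t}$ enters $V$ infinitely often: at such times $G(\sigma_{t})\ge c>0$, so $\ln\bigl(l_{t}^{\theta^{c}}/l_{t}^{\theta}\bigr)\ge(c-o(1))t\to\infty$, forcing $\lambda_{t}^{\theta^{c}}$ above $\alpha$ and triggering a switch to $\theta^{c}$, contradicting persistence. Thus the whole argument reduces to showing $L\cap BN^{\theta}\neq\emptyset$, where $L$ is the limit set of $\sigma_{t}$.

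The main obstacle, and the only place where $|\mathcal{A}|=2$ and $\delta=0$ enter, is exactly this claim that $\sigma_{t}$ cannot remain bounded away from $BN^{\theta}$ forever. Here I would exploit the one-dimensionality of $\Delta\mathcal{A}\cong[0,1]$: since $\sigma_{t+1}-\sigma_{t}=O(1/t)$, the limit set $L$ is a closed subinterval of $[0,1]$. If one action is eventually abandoned, $\sigma_{t}$ converges to a pure strategy that is myopically optimal against its own limiting belief, which by the belief-concentration of Lemma \ref{lem:belief-when-action-freq-converges} is a BN-E, so $L\subseteq BN^{\theta}$. If instead both actions recur, I would argue that whenever the scalar $\sigma_{t}$ sits in a region containing no BN-E, the myopic best response (well-defined because $\delta=0$) against the fast-concentrating belief pinned down by the current frequency points consistently to one side, producing a monotone drift that cannot be sustained without reaching a neighborhood of $BN^{\theta}$ or a pure equilibrium at an endpoint. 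Making this drift argument rigorous---controlling the interaction between the slowly moving frequency and the belief that tracks $\Omega^{\theta}(\sigma_{t})$, and ruling out that the frequency lingers only at non-equilibrium points of $L$---is the crux, and it is the binary-action substitute for the global-convergence hypothesis assumed in Theorem \ref{thm:Local-lobustness-w-convergence}. Once $L\cap BN^{\theta}\neq\emptyset$ is secured, the estimate above closes the contradiction and the theorem follows.
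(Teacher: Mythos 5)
Your reduction is the same as the paper's: argue the contrapositive, use compactness of the set of Berk--Nash equilibria (Lemma \ref{lem:compact-Berk-NE}) and Heine--Borel to extract a finite set $R_{\epsilon}$ of strictly-better-fitting neighbor parameters, form $\theta^{c}$ with $\Omega^{\theta^{c}}=\Omega^{\theta}\cup R_{\epsilon}$ and a prior putting mass $\epsilon$ on $R_{\epsilon}$, and observe via Lemma \ref{lem:kronecker-lemma} that $\tfrac{1}{t}\ln\bigl(l_{t}^{\theta^{c}}/l_{t}^{\theta}\bigr)$ tracks the fit advantage at the empirical frequency $\sigma_{t}$, so that whenever $\sigma_{t}$ visits a neighborhood of the equilibrium set the Bayes factor explodes and a switch is forced. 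All of this mirrors the proof of Theorem \ref{thm:Local-lobustness-w-convergence} and is fine. The problem is that you then reduce everything to the claim that the dogmatic modeler's frequency a.s.\ enters every $\epsilon$-neighborhood of the set of Berk--Nash equilibria infinitely often, and you do not prove that claim: you offer a ``monotone drift'' heuristic and explicitly concede that making it rigorous is the crux. That claim is precisely the content of the paper's proof, so what you have is the easy outer shell around an unproven core. Moreover, the heuristic as stated is shaky: the posterior does not ``fast-concentrate'' on $\Omega^{\theta}(\sigma_{t})$ in real time---Lemma \ref{lem:belief-when-action-freq-converges} applies only on paths where $\sigma_{t}$ converges---so in the oscillating regime you cannot assert that the myopic best response at a non-equilibrium frequency points consistently to one side; the lag between the belief (driven by cumulative, $t$-weighted KL comparisons plus martingale noise) and the slowly moving frequency is exactly the difficulty.

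The paper closes this gap with a combinatorial/topological argument rather than a drift argument. Paths are split into those where $\sigma_{t}$ converges (there, upper hemicontinuity of $A_{m}^{\theta}$ together with belief concentration shows the limit is itself a BN-E, covering both your pure-limit case and the mixed-limit case your sketch does not address) and those where $\sigma_{t}$ oscillates forever. On oscillating paths one considers $\Omega_{\infty}^{\theta}$, the parameters that are KL minimizers infinitely often, and the sets $C_{\omega}=\{\beta\in[0,1]:\omega\in\Omega^{\theta}(\beta a^{1}+(1-\beta)a^{2})\}$, which are compact intervals. If some $\omega\in\Omega_{\infty}^{\theta}$ has both actions myopically optimal at $\delta_{\omega}$, then every frequency in $C_{\omega}$ is already a BN-E and we are done. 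Otherwise one shows no element of $\Omega_{\infty}^{\theta}$ can be ``isolated'' (a frequency shuttling across the gap around an isolated $C_{\omega^{\ast}}$ would force parameters outside $\Omega_{\infty}^{\theta}$ to be minimizers infinitely often, a contradiction), which yields a related pair $\omega,\omega^{\prime}\in\Omega_{\infty}^{\theta}$ with $A_{m}^{\theta}(\delta_{\omega})=\{a^{1}\}$, $A_{m}^{\theta}(\delta_{\omega^{\prime}})=\{a^{2}\}$ and $C_{\omega}\cap C_{\omega^{\prime}}\neq\emptyset$; an indifference-inducing mixed belief over $\{\omega,\omega^{\prime}\}$ makes every point of $C_{\omega}\cap C_{\omega^{\prime}}$ a mixed BN-E, and since $\sigma_{t}$ visits both intervals infinitely often with $O(1/t)$ increments, it must approach their intersection infinitely often. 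This is the binary-action substitute for the global-convergence hypothesis, and it is the piece your proposal is missing.
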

The critical step in proving Theorem \ref{thm:Local-dim2-necessary} is to show that a dogmatic modeler's action frequency almost surely enters an arbitrarily small neighborhood of the set of Berk-Nash equilibria infinitely often. From here, we can use an analogous argument to the proof of Theorem \ref{thm:Local-lobustness-w-convergence}. I use Figure \ref{fig:example-dim2} to illustrate this first step. 
\begin{figure}
\begin{centering}
\includegraphics[width=0.4\textwidth]{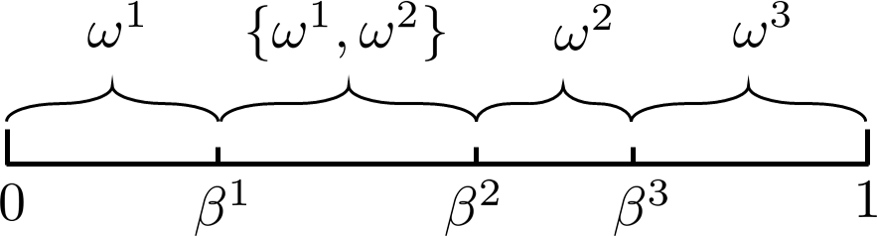}
\par\end{centering}
\caption{\label{fig:example-dim2}Example of a binary-action setting: Each point in the interval represents a mixed action's weight assigned to $a^{2}$; the parameter(s) placed above a segment of the interval are the minimizer(s) of $K^{\bar\theta}\left(\sigma,\omega\right)$ in $\Omega^{\theta}$ for all $\sigma$ in this segment.}
\end{figure}

When the action space is binary, we can write any mixed action as $\beta\cdot a^{1}+\left(1-\beta\right)\cdot a^{2}$, where $\beta\in\left[0,1\right]$. Therefore, the strategy space can be represented as the unit interval denoting the set of possible weights on $a^{2}$. To add more structure, suppose that the parameter space $\Omega^{\theta}$ has four elements, each of which is a KL-minimizer in $\Omega^{\theta}$ at some mixed strategies. Since the KL divergence is continuous in the probability of each action, it is straightforward to show that the set of mixed strategies at which a parameter is a KL-minimizer is compact and connected. For example, in Figure \ref{fig:example-dim2}, $\omega^{1}$ uniquely minimizes the KL divergence when evaluated at a mixed action when $\beta\in\left[0,\beta^{1}\right]$, while both $\omega^{1}$ and $\omega^{2}$ are minimizers when $\beta\in\left[\beta^{1},\beta^{2}\right]$. Restrict attention to the set of paths where the sequence of the action frequency $\left\{ \sigma_{t}\right\} _{t}$ is such that both $\omega^{1}$ and $\omega^{2}$ are KL-minimizers infinitely often but not $\omega^{3}$. Since the action space is binary, if $\sigma_{t}$ enters two non-connected regions on the unit interval infinitely often, it must also cross the region in between infinitely often.\footnote{This does not hold when $\left|\mathcal{A}\right|\geq3$ because there can be multiple paths connecting any two mixed actions. In fact, Example 2 in \citet{esponda2021asymptotic} describes a setting with $\left|\mathcal{A}\right|=3$, in which the dogmatic modeler's action frequency almost surely oscillates around the unique Berk-Nash equilibrium but remains bounded away from it.} This implies that $\sigma_{t}$ must enter $\left[\beta^{1},\beta^{2}\right]$ infinitely often. To generate this pattern, it must be that $a^{2}\in A_{m}^{\theta}\left(\delta_{\omega^{1}}\right)$ and $a^{1}\in A_{m}^{\theta}\left(\delta_{\omega^{2}}\right)$, because otherwise only one action will be played in the limit. Thus, there exists a mixed belief over $\omega^{1}$ and $\omega^{2}$ that makes the myopic agent indifferent between the actions. Since both $\omega^{1}$ and $\omega^{2}$ are KL-minimizers when $\beta\in\left[\beta_{1},\beta_{2}\right]$, any mixed action with $\beta\in\left[\beta_{1},\beta_{2}\right]$ is a BN-E, supported by the aforementioned mixed belief. Therefore, the agent's action frequency is almost surely arbitrarily close to the set of Berk-Nash equilibria infinitely often. The argument for other cases is analogous.

\begin{proof}

We only need to show that given any $\epsilon>0$, almost surely, a dogmatic modeler's action frequency $\sigma_{t}$ enters the $\epsilon$-neighborhood of some Berk-Nash equilibrium infinitely often from every full-support prior and policy. Then using a similar argument as in the proof of Theorem \ref{thm:Local-lobustness-w-convergence}, it can be shown that $\theta$ is not $\bar\theta$-constrained locally robust if there is no Berk-Nash equilibrium $\sigma$ such that $\theta$ is locally KL-minimizing at $\sigma$.

For convenience, let $\mathcal{A}=\left\{ a^{1},a^{2}\right\} $. First, consider the paths where $\sigma_{t}$ converges to some limit $\sigma$. denoted by $H^{1}$. Then Lemma \ref{lem:belief-when-action-freq-converges} tells us that $\pi_{t}^{\theta}\left(\Omega^{\theta}\left(\sigma\right)\right)$ converges to 1. Therefore, any action $a\not\in\cup_{\pi\in\Delta\Omega^{\theta}\left(\sigma\right)}A_{m}^{\theta}\left(\pi\right)$ cannot be in the support of $\sigma$. Hence, for each action $a$ in the support of $\sigma$, there exists some belief $\pi_{a}\in\Delta\Omega^{\theta}\left(\sigma\right)$ such that $a\in A_{m}^{\theta}\left(\pi_{a}\right)$. If $\supp \left(\sigma\right)$ is a singleton, then this immediately implies that $\sigma$ is a Berk-Nash equilibrium. If instead $\supp \left(\sigma\right)=\left\{ a^{1},a^{2}\right\} $, then by the hemi-continuity of $A_{m}^{\theta}$, there must exist some $\pi_{\sigma}\in\Delta\Omega^{\theta}\left(\sigma\right)$ such that $\left\{ a^{1},a^{2}\right\} =A_{m}^{\theta}\left(\pi_{a}\right)$, which again implies that $\sigma$ is a Berk-Nash equilibrium. Therefore, her action frequency $\sigma_{t}$ enters the $\epsilon$-neighborhood of some Berk-Nash equilibrium infinitely often for any $\epsilon>0$ almost surely on $H^{1}$.

Now consider paths where her action frequency oscillates forever, denoted by $H^{2}$. Let $\Omega_{\infty}^{\theta}$ be the set of all parameters in $\Omega^{\theta}$ that are KL-minimizers infinitely often, i.e. $\Omega_{\infty}^{\theta}=\left\{ \omega\in\Omega^{\theta}:\omega\in\Omega^{\theta}\left(\sigma_{t}\right)\text{ for infinitely many }t\text{ on }H^{2}\right\} $. Take any $\omega\in\Omega_{\infty}^{\theta}$. Suppose that $A_{m}^{\theta}\left(\delta_{\omega}\right)=\left\{ a^{1},a^{2}\right\} $, then each action frequency $\sigma_{\omega}$ that satisfies $\omega\in\Omega^{\theta}\left(\sigma_{\omega}\right)$ is a Berk-Nash equilibrium. By construction, this means $\sigma_{t}$ constitutes a Berk-Nash equilibrium infinitely often. 

Suppose instead that $\forall\omega\in\Omega_{\infty}^{\theta}$, we have that $A_{m}^{\theta}\left(\delta_{\omega}\right)$ is singleton. Since $\sigma_{t}$ oscillates, $\Omega_{\infty}^{\theta}$ cannot be a singleton. It must be that $A_{m}^{\theta}\left(\delta_{\omega}\right)=\left\{ a^{1}\right\} $ for some $\omega\in\Omega_{\infty}^{\theta}$ and or $A_{m}^{\theta}\left(\delta_{\omega^{\prime}}\right)=\left\{ a^{2}\right\} $ for some other $\omega^{\prime}\in\Omega_{\infty}^{\theta}$. Given any $\omega$ and $\omega^{\prime}\in\Omega_{\infty}^{\theta}$, say they are \textit{related} if there exists some mixed action $\sigma$ such that $\omega,\omega^{\prime}\in\Omega^{\theta}\left(\sigma\right)$. I now show that there must exist such a pair of related parameters such that $A_{m}^{\theta}\left(\delta_{\omega}\right)=\left\{ a^{1}\right\} $ and $A_{m}^{\theta}\left(\delta_{\omega^{\prime}}\right)=\left\{ a^{2}\right\} $.

First of all, every parameter in $\Omega_{\infty}^{\theta}$ must be related to some other parameter in $\Omega_{\infty}^{\theta}$. Suppose not for the sake of a contradiction. Then there exists some ``isolated'' parameter $\omega^{\ast}\in\Omega_{\infty}^{\theta}$ in the following sense: let $C_{\omega}=\left\{ \beta\in\left[0,1\right]:\omega\in\Omega^{\theta}\left(\beta a^{1}+(1-\beta)a^{2}\right)\right\} $, then there exists some positive constant $\gamma$ such that $B_{\gamma}\left(C_{\omega^{\ast}}\right)\cap\left(\cup_{\omega\in\Omega_{\infty}^{\theta}\backslash\left\{ \omega^{\ast}\right\} }C_{\omega}\right)=\emptyset$. However, since $\omega^{\ast}$ is a KL-minimizer infinitely often, it happens infinitely often that $\sigma_{t}\in C_{\omega^{\ast}}$. It implies that some KL-minimizer at $\sigma\in B_{\gamma}\left(C_{\omega^{\ast}}\right)\backslash C_{\omega^{\ast}}$ should also be a KL-minimizer at $\sigma_{t}$ infinitely often yet not included by $\Omega_{\infty}^{\theta}$, contradicting the definition of $\Omega_{\infty}^{\theta}$. By the same logic, there cannot be two cliques in $\Omega_{\infty}^{\theta}$ such that every parameter in the first clique is unrelated to every parameter in the second clique.

Hence, if every pair of related parameters in $\Omega_{\infty}^{\theta}$ induce the same optimal action, then $A_{m}^{\theta}\left(\delta_{\omega}\right)=\left\{ a^{1}\right\} $ or $\left\{ a^{2}\right\} $ for all $\omega\in\Omega_{\infty}^{\theta}$, which we know is not true. Therefore, there exists a related pair $\omega,\omega^{\prime}\in\Omega_{\infty}^{\theta}$ such that $A_{m}^{\theta}\left(\delta_{\omega}\right)=\left\{ a^{1}\right\} $ and $A_{m}^{\theta}\left(\delta_{\omega^{\prime}}\right)=\left\{ a^{2}\right\} $. Therefore, each mixed action in $C_{\omega}\cap C_{\omega^{\prime}}$ is a Berk-Nash equilibrium. Notice that each $C_{\omega}$ is compact and convex. Since $\sigma_{t}$ enters both $C_{\omega}$ and $C_{\omega^{\prime}}$ infinitely many times, it must be that $\sigma_{t}$ enters the $\epsilon$-neighborhood of $C_{\omega}\cap C_{\omega^{\prime}}$ infinitely often for any $\epsilon>0$. The proof is now complete.
\end{proof}

\subsection{\label{subsec:Constrained-LR-proof}Proof of Theorem \ref{thm:Local-robustness-sufficient}}

\paragraph{Part (i).}Suppose $\sigma$ is a pure p-absorbing BN-E with $\theta$ being locally dominant at $\sigma$ within the $\bar\theta$-family, and $\sigma$ assigns probability 1 to $\hat a\in\mathcal{A}$. Then there exists a full-support prior $\pi_{0}^{\theta}$ such that a $\theta$-modeler eventually only plays $\hat a$ with positive probability. It follows from Lemma \ref{lem:belief-when-action-freq-converges} that $\pi_{t}^{\theta}\left(\Omega^{\theta}\left(\sigma\right)\right)\stackrel{a.s.}{\rightarrow}1$. For any $\zeta,\gamma\in(0,1)$, using a similar argument as in Lemma \ref{lem:p-absorbingness}, we know that there exists a full-support prior ${\pi}_{0}^{\theta}$ with ${\pi}_{0}^{\theta}\left(\Omega^{\theta}\left(\sigma\right)\right)>\gamma$ such that with probability higher than $\zeta$, a $\theta$-modeler plays $a_{t}=\hat a$  for all $t\geq0$. 


Since $\Omega^\theta$ is finite, there exists $\epsilon_1>0$ such that whenever $\epsilon<\epsilon_1$, the $\epsilon$-neighborhoods of any $\omega\in\Omega^\theta$ and $\tilde\omega\in\Omega^\theta$ do not coincide. It follows from Lemma 2 in \cite{frick2021belief} that the local dominant condition implies that all KL-minimizers in $\Omega^\theta(\sigma)$ correspond to the same distribution under $\sigma$. Moreover, there exists $\epsilon_2>0$ s.t. whenever $\epsilon<\epsilon_2$, we have 
$$\mathbb E \left(\frac{q(\cdot|a,\omega^\prime)}{q(\cdot|a,\omega)}\right)^{d}< 1$$
for all $\omega\in\Omega^\theta(\sigma)$, all $\omega^\prime \in\Omega^{\bar\theta}\cap B_\epsilon(\Omega^\theta(\sigma))\setminus \Omega^\theta(\sigma)$, and all $a\in\supp(\sigma)$.

Let $\epsilon<\min\{\epsilon_1,\epsilon_2\}$. Pick any competing model $\theta'\in N^{\overline\theta}_\epsilon(\theta)$ with belief $\pi^{\theta'}_0\in N^{\overline\theta} _\epsilon(\pi^\theta_0)$. Then there exists a correspondence $\iota:\Omega^\theta\rightrightarrows\Omega^{\theta'}$ s.t. $\iota(\omega)=\Omega^{\theta'}\cap B_\epsilon(\omega)\setminus\Omega^\theta(\sigma)$ and $\cup_{\Omega^\theta} \iota(\omega) \equiv \Omega^{\theta'}\setminus\Omega^\theta(\sigma)$. Pick any $\hat\omega\in\Omega^\theta (\sigma)$. Since $q$ is uniformly continuous in $\omega$, there exists $\epsilon_3\leq\epsilon_2$ such that if $\epsilon<\epsilon_3$, then at all actions $a\in\supp(\sigma)$, we have
$$\mathbb E \left(\frac{\sum_{\omega^\prime\in \iota(\omega)}f(\omega^\prime)q(\cdot|a,\omega^\prime)}{q(\cdot|a,\hat\omega)}\right)^{d}< 1$$ 
for all $\omega\in\Omega^\theta\setminus\Omega^\theta(\sigma)$ and all probability distributions $f$ over $\iota(\omega)$, and
 $$\mathbb E \left(\frac{\sum_{\omega^\prime\in \cup_{\Omega^\theta(\sigma)}\iota(\omega)}f(\omega^\prime)q(\cdot|a,\omega^\prime)}{q(\cdot|a,\hat\omega)}\right)^{d}< 1$$ 
for all probability distributions $f$ over $\cup_{\Omega^\theta(\sigma)}\iota(\omega)$.

Denote the likelihood of outcomes under the predictions of $q$ and $\omega$ by $\ell_t(q,\omega)$, i.e. $\ell_t(q,\omega)\coloneqq \prod_{\tau=0}^{t-1} q(y_\tau|a_\tau,\omega)$. Let $\xi_t$ be an indicator function  such that $\xi_t=1$ if $a_t=\hat a$ and $\xi_t=0$ otherwise. Note that $\xi_t$ is adapted to the filtration of $H_{t}$. Define a data-generating process $\hat q$ as follows,
$$\hat q (\cdot|a_t,\hat\omega)\coloneqq \xi_t q (\cdot|a_t,\hat\omega)+(1-\xi_t) q^\ast (\cdot|a_t).$$ For any $\omega\in\Omega^\theta\setminus\Omega^\theta(\sigma)$ and any distribution $f$ over $\iota(\omega)$, we have
\begin{align*}
    & \mathbb E^{\mathbb P_D} \left( \left(\frac{\sum_{\omega^\prime\in \iota(\omega)}f(\omega^\prime)\ell_t(q,\omega^\prime)}{\ell_t(\hat q,\hat\omega)}\right)^{d}|h_{t} \right) \\
    = &
    \left(\frac{\sum_{\omega^\prime\in \iota(\omega)}f(\omega^\prime)\ell_{t-1}(q,\omega^\prime)}{\ell_{t-1}(\hat q,\hat\omega)}\right)^{d} \mathbb E^{\mathbb P_D} \left( \left(\frac{\sum_{\omega^\prime\in \iota(\omega)}\hat f(\omega^\prime)q(y_t|a_t,\omega^\prime)}{\hat q(y_t|a_t,\hat\omega)}\right)^{d}|h_{t} \right)\\
    < & \left(\frac{\sum_{\omega^\prime\in \iota(\omega)}f(\omega^\prime)\ell_{t-1}(q,\omega^\prime)}{\ell_{t-1}(\hat q,\hat\omega)}\right)^{d},
\end{align*}
where $\hat f(\omega^\prime) = f(\omega^\prime) \ell_{t-1}(q,\omega^\prime)/\left( \sum_{\omega^{\prime\prime}\in \iota(\omega)}f(\omega^{\prime\prime})\ell_{t-1}(q,\omega'')\right)$. Hence, by Ville's maximal inequality for supermartingales, we know that for any $\kappa>1$, 
\begin{align*}
    \mathbb P_D \left(\frac{\sum_{\omega^\prime\in \iota(\omega)}f(\omega^\prime)\ell_t(q,\omega^\prime)}{\ell_t(\hat q,\hat\omega)}\leq\kappa,\forall t \right)\geq 1-\frac{1}{\kappa^{d}}.
\end{align*}
Similarly, for all probability distributions $f$ over $\cup_{\Omega^\theta(\sigma)}\iota(\omega)$ and any $\eta\in (1,\alpha)$, we have that 
\begin{align*}
    \mathbb P_D \left(\frac{\sum_{\omega^\prime\in \cup_{\Omega^\theta(\sigma)}\iota(\omega)} f(\omega^\prime)\ell_t(q,\omega^\prime)}{\ell_t(\hat q,\hat\omega)}\leq \eta,\forall t \right)\geq 1-\frac{1}{\eta^{d}}.
\end{align*}
Let $M\coloneqq |\Omega^\theta\setminus\Omega^\theta(\sigma)|$. When $\kappa$ is sufficiently large, we have
\begin{align} 
    & \mathbb P_D\left(\frac{\sum_{\omega'\in\Omega^{\theta'}} \pi^{\theta'}(\omega') \ell_t(q,\omega')}{\ell_t(\hat q,\hat\omega)}\leq \pi^{\theta'}(\Omega^\theta(\sigma))+\pi^{\theta'}(\iota(\Omega^\theta(\sigma)))\eta +\pi^{\theta'}(\iota(\Omega^\theta\setminus\Omega^\theta(\sigma))) \kappa ,\forall t\geq 0\right) \nonumber \\ 
    \geq & 1-\frac{1}{\eta^{d}} + M\cdot \left(1-\frac{1}{\kappa^{d}}\right)-M =1-\frac{1}{\eta^{d}}-\frac{M}{\kappa^{d}}>0\label{eq:thm4-lkr}
\end{align} where the first inequality follows from the inequality $P(A_1 ... A_n)\geq P(A_1)+...+P(A_n)-(n-1)$ for random events $A_1,...,A_n$.
Note that when $\pi^{\theta'}\in N^{\overline \theta}_\epsilon (\pi^\theta)$ and $\pi^\theta(\Omega^\theta(\sigma))>\gamma$, we have
$$\pi^{\theta'}(\Omega^\theta(\sigma))+\pi^{\theta'}(\iota(\Omega^\theta(\sigma)))\eta +\pi^{\theta'}(\iota(\Omega^\theta\setminus\Omega^\theta(\sigma))) \kappa\leq \eta +(1-\gamma+\epsilon) \kappa.$$
Using the observation we derived in the first paragraph, we know that there exists a full-support prior $\pi^{\theta}_0$ under which with probability larger than $\frac{1}{\eta^{d}}+\frac{M}{\kappa^{d}}$, the dogmatic modeler's behavior satisfies that $a_{t}=\hat a$ and ${\pi}_{t}^{\theta}\left(\Omega^{\theta}\left(\sigma\right)\right)>\gamma,\forall t\geq0$. Note that when $a_t=\hat a$ for all $t\geq 0$, we have $\hat q(\cdot|a_t,\hat\omega)\equiv q(\cdot|a_t,\hat\omega)$ for all $t\geq 0$. Using the inequality $P(AB)\geq P(A)+P(B)-1$ for random events $A$ and $B$, we know that when $\epsilon<\min\{\epsilon_1,\epsilon_2,\epsilon_3\}$, there exists a full-support  prior $\pi^\theta_0$ such that the following event happens with positive probability:
\begin{align*}
    & a_{t}=\hat a \text{ and }{\pi}_{t}^{\theta}\left(\Omega^{\theta}\left(\sigma\right)\right)>1-\epsilon,\forall t\geq0 \\
    &\frac{\sum_{\omega'\in\Omega^{\theta'}} \pi^{\theta'}_0(\omega') \ell_t(q,\omega')}{\ell_t( q,\hat\omega)}\leq \eta+(1-\gamma+\epsilon)\kappa,\forall t\geq0
\end{align*}
When $\epsilon$ is small enough and $\gamma$ is sufficiently close to $1$, conditional on the above event, we have
\begin{align*}
    \frac{\ell_t(\theta')}{\ell_t(\theta)} < \frac{\sum_{\omega'\in\Omega^{\theta'}} \pi^{\theta'}_0(\omega') \ell_t(q,\omega')}{\pi^\theta_0 (\Omega^\theta(\sigma)) \ell_t( q,\hat\omega)}<\frac{\eta+(1-\gamma+\epsilon)\kappa}{\gamma}<\alpha.
\end{align*}
Hence, conditional on this event, the switcher never switches to the competing model $\theta'$. It follows that the switcher adopts $\theta$ forever with positive probability w.r.t. $\mathbb P_S$.

\paragraph{Part (ii).} Suppose in addition that model $\theta$ has no traps. Denote by $\hat\omega$ the consistent parameter associated with the pure p-absorbing BNE at which $\theta$ is locally dominant within the $\overline\theta$-family. A similar argument as in the proof of Theorem \ref{thm:concentrated-prior} implies that for all $\gamma\in(0,1)$, starting from any full-support prior $\pi_0^\theta$, there exists $\epsilon_4$ and a positive number $T$ such that when $\epsilon<\epsilon_4$ and the competing model $\theta'$ is $\epsilon$-close to $\theta$, the Bayes factor never exceeds $\sqrt{\alpha}$ before period $T$ and $\pi_T^\theta(\hat\omega)>\gamma$. 

On the other hand, we have the following lemma, which mirrors Lemma \ref{lem:quasi-strictness} but applies to a strict BN-E. The proof uses similar argument and thus is omitted.
\begin{lem}
Suppose $\hat a$ is a strict BN-E with supporting belief $\delta_{\hat\omega}$, then for any $\zeta\in (0,1)$, there exists $\gamma\in(0,1)$ such that starting from any prior $\pi_0^\theta\in B_{1-\gamma}(\delta_{\hat\omega})$, the probability that the $\theta$-modeler always actions in $\supp(\sigma)$ for all periods is strictly larger than $\zeta$.
\end{lem}
Following the same steps in the proof of Part (i), we can show that there exists a small enough $\gamma$, whose value depends only on the characteristics of model $\theta$ and $\epsilon$ but not the specific competing model, such that with any belief $\pi_T^\theta\in B_{1-\gamma}(\delta_{\hat\omega})$, the switcher adopts model $\theta$ for all periods after $T$ with positive probability. Therefore, model $\theta$ is $\overline\theta$-constrained locally robust at all priors.

\subsection{Proof of Theorem \ref{thm:multiple-competing-models}}
To show that Theorem \ref{thm:Global-Robustness} continues to hold when $\alpha>K$, it suffices to show that a model $\theta$ is globally robust at some prior by the new definition if $\theta$ admits a p-absorbing SCE. Without loss of generality, take any $\Theta'=\{\theta^1,...,\theta^K\}\subseteq\Theta$ and define for each $k\in\{1,...,K\}$ a process $\{S_t^k\}_t$ as follows,

$$S_t^k = \frac{\sum_{\omega^\prime\in\Omega^{\theta^k}}\pi_{0}^{\theta^{k}}\left(\omega^{\prime}\right) \prod_{\tau=0}^t q^{\theta^k} (y_\tau|a_\tau,\omega^\prime)}{\prod_{\tau=0}^t q^\ast (y_\tau|a_\tau)}.$$ Then for any $\eta\in(1,\alpha)$, we have 
$$\mathbb P_D(S_t^k\leq \eta,\forall t\geq 0)\geq 1- \frac{\mathbb E^{\mathbb P_D} S_0^k}{\eta}= 1-\frac{1}{\eta}.$$ Hence, when $\eta$ is sufficiently close to $\alpha$,
\begin{align*}
    &\mathbb P_D(S_t^k\leq \eta,\forall t\geq 0,\forall k\in\{1,...,K\})  \\
    \geq & 1-\sum_{k=1}^K P_B(S_t^k> \eta\text{ for some } t\geq 0)\\
    \geq & 1-\frac{K}{\eta}>0. 
\end{align*}
The rest of the argument is identical to the proof in Section \ref{proof:thm:Global-Robustness}. It follows that Theorem \ref{thm:Global-Robustness} also continues to hold when $\alpha>K$. 

For $\bar\theta$-constrained local robustness, note that Inequality \eqref{eq:thm4-lkr}  implies 
\begin{align*} 
    & \mathbb P_D\Bigg(\frac{\sum_{\omega^k\in\Omega^{\theta^k}} \pi^{\theta^k}(\omega^k) \ell_t(q,\omega^k)}{\ell_t(\hat q,\omega^\ast)}\leq \pi^{\theta^k}(\Omega^\theta(\sigma))+\pi^{\theta^k}(\iota(\Omega^\theta(\sigma)))\eta \\
    &\hspace{13em} +\pi^{\theta^k}(\iota(\Omega^\theta\setminus\Omega^\theta(\sigma))) \kappa ,\forall t\geq 0,\forall k\in\{1,...,K\}\Bigg)  \\ 
    \geq & 1-\sum_{k=1}^K \mathbb P_D\Bigg(\frac{\sum_{\omega^k\in\Omega^{\theta^k}} \pi^{\theta^k}(\omega^k) \ell_t(q,\omega^k)}{\ell_t(\hat q,\omega^\ast)}> \pi^{\theta^k}(\Omega^\theta(\sigma))+\pi^{\theta^k}(\iota(\Omega^\theta(\sigma)))\eta \\
    &\hspace{13em} +\pi^{\theta^k}(\iota(\Omega^\theta\setminus\Omega^\theta(\sigma))) \kappa\text{ for some }t\geq 0\Bigg) \\
    \geq & 1-K\left(\frac{1}{\eta^{d}}+\frac{M}{\kappa^{d}}\right).
\end{align*}
If $K<\alpha^{d}$, then the term above is strictly positive when $\eta$ is sufficiently close to $\alpha$ and $\kappa$ is sufficiently large. The rest of the proof is analogous to the proof in Section \ref{subsec:Constrained-LR-proof}.
\hfill $\Box$

\subsection{Proof of Proposition \ref{prop: media-bias-2}}

It suffices to show that the agent makes a switch to $\hat\theta$ with positive probability. It then follows from Proposition \ref{prop: media-bias-1} that $\hat\theta$ is eventually adopted forever with positive probability. 

Define a new probability measure $\hat{\mathbb P}$ over the action and outcome histories $H$ such that for any histories $\hat H\subset H$,
$$\hat{\mathbb P}\left(\hat H\right)=\pi_0^{\hat\theta}(\omega^L)\mathbb{P}_{S}^{\hat\theta,\omega^L}\left(\hat H\right)+\pi_0^{\hat\theta}(\omega^R)\mathbb{P}_{S}^{\hat\theta,\omega^R}\left(\hat H\right),$$ where $\mathbb{P}_{S}^{\hat\theta,\omega}$ is the probability measure over histories induced by the agent switcher if the true DGP is identical to the DGP prescribed by $\hat\theta$ and $\omega$. Then 
$\ell_t(\theta,\omega^M)/\ell_t(\hat\theta)$ is a martingale w.r.t. $\hat{\mathbb P}$
with an expectation of $1$. Hence, for any $\eta>1$, the probability that $\ell_t(\theta,\omega^M)/\ell_t(\hat\theta)\leq \eta$ for all $t$ is positive (measured by $\hat{\mathbb P}$). Since $a^M$ is the only SCE under model $\theta$, by Lemma \ref{lem:Correctly-full-characterization} the agent almost surely eventually play $a^M$ on the paths where the model choice eventually equals $\theta$. If so, the agent's posterior $\pi_t^\theta$ almost surely converges to $\delta_{\omega^M}$. In summary, on paths where $m_t$ eventually equals $\theta$, it happens with positive probability (measured by $\hat{\mathbb P}$) that $\ell_t(\theta,\omega^M)/\ell_t(\hat\theta)\leq \eta$ for all $t$ and $\pi_t^\theta\xrightarrow{\text{a.s.}} \delta_{\omega^M}$. This then implies that for any $\epsilon>0$, we can construct a finite sequence of outcome realizations $(y_0,...,y_{t-1})$ such that $\ell_t(\theta,\omega^M)/\ell_t(\hat\theta)\leq \eta$ for all $t\leq T$ and $\pi_T^\theta\in B_\epsilon(\delta_{\omega^M})$. Moreover, since $T$ is finite, this sequence of outcomes are also realized with positive probability under the true measure $\mathbb P_S$. Notice that 
\begin{align*}
    \frac{\ell_T(\hat\theta)}{\ell_T(\theta)}   = \pi_T^{\theta}(\omega^M)\frac{\ell_T(\hat\theta)}{\pi_0^\theta(\omega^M)\ell_t(\theta,\omega^M)} 
     \geq (1-\epsilon)\frac{\eta}{\pi_0^\theta(\omega^M)},
\end{align*}
where the right-hand side is strictly larger than $\alpha$ when $\pi_0^\theta(\omega^M)<1/\alpha$ if $\epsilon$ is close enough to $0$ and $\eta$ is close enough to 1. Therefore, the agent makes a switch from $\theta$ to $\hat\theta$ with positive probability.

\subsection{Proof of Proposition \ref{prop:application-2}}

Suppose the agent's action space contains $K$ elements, $a^1<a^2<...<a^K$. Define function $h:[\underline\omega,\overline\omega]\rightarrow[\underline\omega,\overline\omega]$, such that $h(\omega)$ returns the KL-minimizer evaluated at the largest myopically optimal action against the degenerate belief $\delta_\omega$ i.e. $h(\omega)$ minimizes $D_{KL} \left(q^\ast(\cdot|\overline a(\omega))\parallel q(\cdot|\overline a(\omega),\hat b,\omega)\right)$ where $\overline a(\omega)=\max A^\theta(\delta_\omega)$. By Assumption \ref{assu:application-2}, there exists an increasing sequence of intervals $\{(\omega_k,\omega_{k+1})\}^K_{k=0}$ such that $\omega_0=\underline \omega$, $\omega_K=\overline\omega$, $a^k$ is the unique myopically optimal action over $(\omega_{k-1},\omega_k)$ and both $a^{k-1}$ and $a_k$ are myopically optimal at $\omega_{k-1}$. Function $h$ is flat within each interval. If there exists a pure BN-E under model $\theta$, then it must be supported by a degenerate belief at $\omega$ such that $h(\omega)=\omega$. By Assumption \ref{assu:application-2}, any pure BN-E must also be self-confirming, and any mixed BN-E cannot be self-confirming. 

Suppose $\hat b>b^\ast$, then $h$ jumps up discontinuously at all cutoffs $\{\omega_k\}_{1\leq k\leq K-1}$. Suppose there exists no solution to $h(\omega)=\omega$. Then since $h(\underline\omega)\geq\underline\omega$ and $h(\overline\omega)\leq\overline\omega$, we know that there must exist $\hat k$ such that $h(\omega)>\omega$ for all $\omega\in(\omega_{k^\ast-1},\omega_{k^\ast})$ and $h(\omega^\prime)<\omega^\prime$ for all $\omega^\prime\in(\omega_{k^\ast},\omega_{k^\ast+1})$. But this contradicts the observation that $h$ jumps up at $\omega_{k^\ast}$. It also immediately follows that there exists a solution $\hat \omega$ to $h(\hat \omega)=\hat \omega$ such that  $h(\omega^\prime)>\omega^\prime$ for $\omega^\prime<\hat \omega$ and   $h(\omega^\dprime)<\omega^\dprime$ for $\omega^\dprime<\hat \omega$. Let $\hat a$ be the unique myopically optimal action at $\delta_{\hat\omega}$. Then $\hat a$ is a pure self-confirming equilibrium, supported by the generate belief at $\hat \omega$. By assumption \ref{assu:application-2}, $\omega\in\Omega^\theta$, and thus $\hat a$ is also a self-confirming equilibrium under $\theta$. Note that $\hat a$ is uniformly strict. By Corollary \ref{corr:Suff-robustness}, model $\theta$ is globally robust. 

Now suppose the agent is underconfident, then $h$ jumps down discontinuously at the cutoffs $\{\omega_k\}_{1\leq k\leq K-1}$. Hence, there exists at most one solution to $h(\omega)=\omega$. Suppose there exists a SCE $\sigma^\dagger$ when the agent believes his ability is given by $\tilde b$. Then by the upper-hemicontinuity of $A^\theta$, when $\hat b$ is lower than but sufficiently close to $\tilde b$, there exists some $\hat \omega>\omega^\ast$ such that $g(a^\dagger,\hat b,\hat\omega)=g(a^\dagger,b^\ast,\omega^\ast)$, where $a^\dagger=\max\supp(\sigma^\dagger)$ and is the unique myopically optimal action against $\delta_{\hat\omega}$. It follows that $a^\dagger$ is a uniformly strict SCE under $\theta$. Since there always exists a SCE when the agent is correctly specified, i.e. $\tilde b=b^\ast$, we infer that model $\theta$ is globally robust when $b^\ast-\hat b$ is sufficiently small. 

Suppose instead that there is no solution to $h(\omega)=\omega$ when the agent's self-perception is given by $\hat b$. If so, there exists no SCE under model $\theta$. By Theorem \ref{thm:Global-Robustness}, $\theta$ is not globally robust. By continuity, $h(\omega)=h(\omega)$ also does not admit any solution at $\tilde b$ if it is sufficiently close to $\tilde b$. Therefore, there exists an open neighborhood around $\hat b$ such that model $\theta$ is not globally robust. I now show that $\theta$ is not $\bar\theta$-constrained locally robust. Suppose $\theta$ admits a mixed BN-E $\hat \sigma$, supported by a potentially mixed belief $\hat\pi^\theta\in\Delta\Omega^\theta$. Suppose $\hat\sigma$ takes support over $a^k$ and $a^{k+1}$ (note that both $\hat\sigma$ and $\hat\pi^\theta$ have at most two elements in their support). Then for any $\hat\omega\in\supp(\pi^\theta)$, we have 
$$\hat\omega\in\arg\min_{\Omega^\theta} \sigma(a^k) D_{KL}\left(q^\ast(\cdot|a^k)\parallel q^\theta (\cdot|a^k,\omega^\prime)\right)+\sigma(a^{k+1}) D_{KL}\left(q^\ast(\cdot|a^{k+1})\parallel q^\theta (\cdot|a^{k+1},\omega^\prime)\right).$$
Let $\omega^k$ denote the KL-minimizer in $[\underline\omega,\overline\omega]$ at $a^k$ and $\omega^{k+1}$ denote the KL-minimizer in $[\underline\omega,\overline\omega]$ at $a^{k+1}$. Since $g(a,\hat b,\omega)-g(a,b^\ast,\omega^\ast)$ is strictly increasing in $a$ when $\hat b<b^\ast$ and $\omega>\omega^\ast$, we know that $\omega^k>\omega^{k+1}$. Since $\omega^k,\omega^{k+1}\in\Omega^\theta$ by assumption, we have $\supp(\hat\pi^\theta)\subseteq [\omega^{k+1},\omega^k]$. Hence, for all $\hat\omega\in\supp(\hat\pi^\theta)$,
\begin{align*}
    g(a^k,\hat b,\hat\omega)-g(a^k,b^\ast,\omega^\ast)&\leq0 \\
    g(a^{k+1},\hat b,\hat\omega)-g(a^{k+1},b^\ast,\omega^\ast)&\geq0,
\end{align*}
with at least one inequality being strict. Suppose the second inequality is strict. Pick $\tilde b\in(\hat b,b^\ast]$ and define $\tilde \omega$ by $g(a^k,\hat b,\hat\omega)=g(a^k,\tilde b,\tilde \omega)$. Similarly, since $g(a,\hat b,\omega)-g(a,\tilde b,\tilde \omega)$ is strictly increasing in $a$, we know that $g(a^{k+1},\hat b,\hat\omega)>g(a^{k+1},\tilde b,\tilde \omega)$; analogously $g(a^{k+1},\hat b,\hat\omega)-g(a^{k+1},b^\ast,\omega^\ast)\geq0$. Therefore, the $\hat\sigma$-weighted KL divergence is smaller at $(\tilde b,\tilde\omega)$ than at $(\hat b,\hat\omega)$. When $\tilde b$ is sufficiently close to $\hat b$, the parameter pair $(\tilde b,\tilde\omega)$ is also close to $(\hat b,\hat\omega)$. Since the agent's action frequency converges, by Theorem \ref{thm:Local-lobustness-w-convergence}, model $\theta$ is not $\bar\theta$-constrained locally robust at $\hat b$. Moreover, for any $\tilde b\in(\hat b,b^\ast]$, there exists a competing model $\theta'$ with $(\tilde b,\tilde \omega)\in\Omega^{\bar\theta}$ such that $\theta$ does not persist against $\theta'$. 

Similarly, model $\theta$ is not $\bar\theta$-constrained locally robust if we perturb the value of $\hat b$. Combining this with the previous observation, we could find a sequence of intervals such that model $\theta$ is either globally robust or not $\bar\theta$-constrained locally robust, each occuring within disjoint intervals. \hfill $\Box$

\section{Supplemental Appendix\label{sec:Online-Appendix}}

\subsection{Examples Omitted from Section \ref{sec:Robustness}}

\begin{example}[A p-absorbing mixed SCE] \label{exa:p-absorbing-mixed-SCE}Consider a dogmatic modeler's problem, where there are two actions $\mathcal A=\{1,2\}$ and three parameters $\Omega^\theta=\{1,1.5,2\}$ inside the parameter space of model $\theta$. The agent's payoff is simply the outcome $y_t$, with the true DGP being the normal distribution $N(0.25,1)$ for all actions. Model $\theta$ is misspecified, predicting that $y_t\sim N((\omega-a_t)^2,1)$. Note that every mixed action is a self-confirming equilibrium, with the supporting belief assigning probability $1$ to the parameter value of $1.5$. Here, every mixed SCE is p-absorbing since its support contains every action that can be played by the agent. But her action sequence may never converge. To see that, notice that a belief that assigns larger probability to $\omega=1$ than $\omega=2$ leads to action $a=2$, but such play in turn induces her to attach lower probability to $\omega=1$ than $\omega=2$ and leads to action $a=1$. Nevertheless, Corollary \ref{corr:Suff-robustness} tells us that the aforementioned SCE is indeed p-absorbing.

\end{example}

\begin{example}[A self-confirming equilibrium that fails to be p-absorbing]\label{exa:not-p-absorbing-SCE} Consider a dogmatic modeler's problem, where there are two actions $\mathcal A=\{1,3\}$ and three parameters $\Omega^\theta=\{1,2,3\}$ inside the parameter space of model $\theta$. The agent's payoff is the absolute value of the outcome, $|y_t|$, with the true DGP of $y_t$ given by a normal distribution $N(1,1)$ for all actions. Consider a misspecified model $\theta$ that predicts $y_t\sim N(\omega-a_t,1)$. Note that $\theta$ admits a single self-confirming equilibrium in which the agent plays $a^\ast=1$ with probability $1$, supported by a belief that assigns probability $1$ to $\omega^\ast=2$. However, this SCE is not p-absorbing. To see that, notice that the agent is indifferent between the two actions when the parameter takes the value of $2$. When the agent keeps playing $a=1$, the parameters $1$ and $3$ fit the data equally well on average, so their log-posterior ratio is a random walk which a.s. crosses $1$ infinitely often. However, the high action $a=3$ is strictly optimal against any belief that assigns a higher probability to $\omega=1$ than $\omega=3$. Hence, the high action must be played infinitely often almost surely.
\end{example}

\subsection{Micro-Foundation for Application \ref{app:media bias}}

In this subsection I specify the payoff structure for the news consumption problem in Application \ref{app:media bias}, which provides a micro-foundation for Assumption \ref{assu:media bias}. 

To do this, we first extend the learning framework introduced in Section \ref{sec:Framework} to allow for an unobserved payoff that may depend on an unknown state. That is, besides the observable payoff jointly determined by the action and the random outcome $u(a_t,y_t)$, there may exist an unobserved payoff $\tilde u(a_t,\omega)$ that depends on the action and a fundamental state $\omega\in\Omega$. Under any subjective model $\theta$, the agent maximizes the sum of the observed and the unobserved payoff given her belief over the fundamental state and possibly other parameters. This maximization gives rise to an optimal-action correspondence $A^\theta_m:\Delta\Omega^\theta\rightrightarrows\mathcal A$, which we can use to define a self-confirming equilibrium. All results in Section \ref{sec:Robustness} remain unchanged.

Subscribing to media outlets provide entertainment value. Media outlets produce higher quality news reports if the story is aligned with their political leaning. If the agent subscribes to media $a^L$, she earns an emotional utility of $1$ iff she receives a $l$ story; similarly, if she subscribes to media $a^R$, she earns an emotional utility of $1$ iff she receives a $r$ story. If she subscribes to the neutral media $a^M$, she earns a constant emotional payoff of $0.65$.

Subscribing to media outlets also provide valuable information. In additional to subscribing to a media outlet $a_t$, the agent takes an outside action $v_t\in\{v^L,v^M,v^R\}$ upon receiving the story $y_t$. The agent earns a payoff of $1$ if she takes $v^L$ in state $\omega^L$ and $v^R$ in state $\omega^R$, but in state $\omega^M$ she earns a constant payoff of $0.5$ by taking any action. Note that it is optimal for the agent to follow the story she receives in each period.

In Table \ref{table:media-stories-payoffs}, I summarize the expected total payoffs associated with each action under model $\theta$ and model $\theta'$. It is then straightforward to verify Assumption \ref{assu:media bias}.

\begin{table}[t]\label{table:media-stories-payoffs}
    \centering

\begin{tabular}{c|c c c}
     $E^\theta(\text{payoff}|a,\omega)$ & $\omega^L$ & $\omega^M$ & $\omega^R$  \\ \hline
     $a^L$ & 1.4 & 1.1 & 1 \\
     $a^M$ & 1.25 & 1.15 & 1.25 \\
     $a^R$ & 1 & 1.1 & 1.4
\end{tabular}
\quad\quad
\begin{tabular}{c|c  c}
     $E^{\hat \theta}(\text{payoff}|a,\omega)$ & $\omega^L$ & $\omega^R$  \\ \hline
     $a^L$ & 1.2 & 1 \\
     $a^M$ & 1.15 & 1.15 \\
     $a^R$ & 1 & 1.2
\end{tabular}


\caption{Expected payoffs under model $\theta$ (left) and expected payoffs under model $\theta'$ (right).}
\end{table}

\subsection{Examples Omitted from Section \ref{sec:extension}}
I provide two examples below to substantiate the observation in Footnote \ref{fn:persistence-against-each-model}. Example \ref{exa:persistence-single-model} presents a scenario in which $\theta$ persists against $\theta^1$ and $\theta^2$ separately but does not persist against $\{\theta^1,\theta^2\}$, while Example \ref{exa:persistence-two-models} shows an opposite scenario.
\begin{example}
\label{exa:persistence-single-model}Let $x^{1}$ and $x^{2}$ be two i.i.d. normally distributed variables, both with mean $0$ and variance $1$. Suppose $x^{3}$ and $x^{4}$ are also i.i.d. normally distributed but with mean $1$ and variance $1$. Suppose the agent can play one of two actions in each period, $\mathcal{A}=\left\{ 1,2\right\} $ and uses subjective models to learn about the mean of each element in $(x^1,x^2,x^3,x^4)$. Her flow payoff is given by $a\cdot\left(x^{4}-x^{3}\right)$. Hence, she would like to play $a=2$ if $\overline{x}^{4}>\overline{x}^{3}$ and play $a=1$ if $\overline{x}^{3}>\overline{x}^{4}$. However, $x^{1}$ and $x^{3}$ are only observable when $a=1$, while $x^{2}$ and $x^{4}$ are only observable when $a=2$. That is, the outcome $y$ is given by $(x^1,x^3)$ when $a=1$ and given by $(x^2,x^4)$ when $a=2$. She entertains an initial model $\theta$ and two competing models, $\left\{ \theta^{1},\theta^{2}\right\} $, each of which is equipped with a binary parameter space. The predictions of each model are summarized by the following table. The predicted means are independent of the actions taken. 
\[
\begin{array}{ccc}
\theta & \omega^{1} & \omega^{2}\\
\left(\overline{x}^{1},\overline{x}^{2},\overline{x}^{3},\overline{x}^{4}\right) & \left(1,1,1,0\right) & \left(1,1,0,1\right)
\end{array}
\]

\[
\begin{array}{ccc}
\theta^{1} & \omega^{1\prime} & \omega^{2\prime}\\
\left(\overline{x}^{1},\overline{x}^{2},\overline{x}^{3},\overline{x}^{4}\right) & \left(1,0,1,0\right) & \left(1,0,0,1\right)
\end{array}
\]

\[
\begin{array}{ccc}
\theta^{2} & \omega^{1\prime\prime} & \omega^{2\prime\prime}\\
\left(\overline{x}^{1},\overline{x}^{2},\overline{x}^{3},\overline{x}^{4}\right) & \left(0,1,1,0\right) & \left(0,1,0,1\right)
\end{array}
\]

Notice that there are two strict (and thus p-absorbing) Berk-Nash equilibria under $\theta$: (1) $a=1$ is played w.p. 1, supported by the belief that assigns probability 1 to $\omega^{1}$; (2) $a=2$ is played w.p. 1 , supported by the belief that assigns probability 1 to $\omega^{2}$. First observe that $\theta$ persists against $\theta^{1}$ at a prior $\pi_{0}^{\theta}$ that assigns sufficiently high belief to $\omega^{1}$. This follows from the fact that the likelihood ratio between $\theta$ and $\theta^{1}$ is always $1$ when $a=1$ is played, and that the equilibrium is p-absorbing. Analogously, $\theta$ persists against $\theta^{2}$ at a prior $\pi_{0}^{\theta}$ that assigns sufficiently high belief to $\omega^{2}$. However, notice that $\theta$ does not persist against $\left\{ \theta^{1},\theta^{2}\right\} $ at any priors and policies, because regardless of the actions taken by the agent, at least one of $\theta^{1}$ and $\theta^{2}$ would fit the data strictly better than $\theta$, prompting the agent to adopt $\theta^{1}$ and $\theta^{2}$ infinitely often.
\end{example}

\begin{example}
\label{exa:persistence-two-models} Let $y$ be a normally distributed variable with mean 0 and variance 1, whose distribution is independent of actions. The agent can play one of two actions in each period, $\mathcal{A}=\{1,2\}$ and uses subjective models to learn about the mean of $y$. Her flow payoff is given by $a\cdot y$. She entertains an initial model $\theta$ and two competing models, $\{\theta^1,\theta^2\}$. Model $\theta^1$ has a single parameter and perfectly matches the true DGP, while models $\theta$ and $\theta^2$ both have a binary parameter space. The predictions about $\overline{y}$ of each model are summarized by the following table. 

\begin{center}
    
\begin{tabular}{c|cc}
$\theta$ & $\omega^{1}$ & $\omega^{2}$\\ \hline
 $a^1$ & -1 & 1 \\
 $a^2$ & -2 & 1
\end{tabular}
\quad
\begin{tabular}{c|cc}
$\theta^1$ & $\omega^{\prime}$ & \\\hline
$a^1,a^2$ & -1 &   \\
\end{tabular}
\quad
\begin{tabular}{c|cc}
$\theta^2$ & $\omega^{\prime\prime}$ &\\\hline
$a^1,a^2$ & 2\\
\end{tabular}

\end{center}

Suppose the agent's prior satisfies that $\pi_0^\theta(\omega^1)=1-\pi_0^\theta(\omega^0)=\frac{0.5}{\alpha}<\frac{1}{\alpha}$. First consider what happens when the agent has only one competing model, $\theta^1$. By the Law of Large Numbers, the likelihood ratio between $\theta^1$ and $\theta$ eventually exceeds $\alpha$ almost surely because 
\begin{align*}
    \frac{\ell_t(\theta^1)}{\ell_t(\theta)} & =\frac{\prod_{\tau=0}^{t-1}q^{\theta^{1}}\left(y_{\tau}|a_{\tau},\omega^{\prime}\right)}{\prod_{\tau=0}^{t-1}q^{\theta}\left(y_{\tau}|a_{\tau},\omega^{1}\right)\pi_{0}^{\theta}\left(\omega^{1}\right)+\prod_{\tau=0}^{t-1}q^{\theta}\left(y_{\tau}|a_{\tau},\omega^{2}\right)\pi_{0}^{\theta}\left(\omega^{2}\right)}\\ 
    & = \frac{\prod_{\tau=0}^{t-1}q^{\ast}\left(y_{\tau}|a_{\tau}\right)}{\prod_{\tau=0}^{t-1} \boldsymbol{1}_{(a_\tau=a^1)} q^{\ast}\left(y_{\tau}|a_{\tau}\right)\pi_{0}^{\theta}\left(\omega^{1}\right) + \xi (h_t)} \\
    & \geq \frac{\prod_{\tau=0}^{t-1}q^{\ast}\left(y_{\tau}|a_{\tau}\right)}{\prod_{\tau=0}^{t-1} \frac{1}{\alpha} q^{\ast}\left(y_{\tau}|a_{\tau}\right) + \xi (h_t)}
\end{align*}
where $\frac{\xi (h_t)}{\prod_{\tau=0}^{t-1}q^{\ast}\left(y_{\tau}|a_{\tau}\right)}$ converges to $0$ almost surely. Threrefore, $\theta$ does not persist against $\theta^1$ under prior $\pi^\theta_0$. 

However, model $\theta$ persists against $\Theta'\coloneqq\{\theta^1,\theta^2\}$ at prior $\pi^\theta_0$. First notice that for any $a_0\in\mathcal{A}$, there exists some $y_0$ sufficiently large such that $$\ell_1(\theta^2) > \alpha \cdot \max \{ \ell_1(\theta),\ell_1({\theta^1})\}$$ and thus the agent switches to $\theta^2$ in the beginning of period $1$. As a result, the agent plays $a_1=a^2$ in period $1$ since it is the strictly dominant strategy under $\theta^2$. But then we could find some sufficiently small $y_1$ such that the following two inequalities hold: 
$$\ell_2({\theta}) >  \alpha \cdot \max \{ \ell_2(\theta^1),\ell_2(\theta^2)\},$$
$$\pi^\theta_2 (\omega^1)=\frac{\pi^\theta_0 (\omega^1) q^\theta (y_0|a_0,\omega^1) q^\theta (y_1|a_1,\omega^1)}{\sum_{\omega\in\{\omega^1,\omega^2\}}\pi^\theta_0 (\omega) q^\theta (y_0|a_0,\omega) q^\theta (y_1|a_1,\omega)}>\max\{\frac{1}{\alpha},c\},$$
where $c$ is chosen such that while adopting $\theta^1$, the agent finds $a^1$ payoff-maximizing if her belief assigns a probability higher than $c$ to $\omega^1$, i.e. $\pi^\theta (\omega^1)>c$. The first inequality implies that the agent switches back to $\theta$ in the beginning of period $2$. The second inequality, together with the observation that the pure strategy $a^1$ is a uniformly strict self-confirming equilibrium supported by the belief that assigns probability $1$ to $\omega^1$, ensures that with positive probability, the agent plays action $a^1$ forever, provided that her decisions are made based on model $\theta$. But notice that on those paths, the agent indeed no longer switches to other models after period $1$ because for $t>2$,
\begin{align*}
    \frac{\ell_t(\theta^1)}{\ell_t(\theta)} <\frac{\ell_2(\theta^1)}{\ell_2(\theta)} \frac{\prod_{\tau=2}^{t-1}q^{\ast}\left(y_{\tau}|a_{\tau}\right)}{\prod_{\tau=2}^{t-1}  q^{\ast}\left(y_{\tau}|a_{\tau}\right)\pi_{2}^{\theta}\left(\omega^{1}\right)} <1<\alpha. 
\end{align*}

Since outcomes $y_0$ and $y_1$ that satisfy the aformentioned properties are drawn with positive probability, we conclude that $\theta$ persists against $\Theta^c\coloneqq\{\theta^1,\theta^2\}$ at prior $\pi^\theta_0$.
\end{example}

\end{document}